\newcommand{\st}{;\ }
\newcommand{\tensor}{\otimes}
\DeclareMathAlphabet{\mathLLrm}{T1}{cmr}{m}{n}
\newcommand{\lllbracket}{\left\llbracket\kern-0.3em\left\llbracket}
\newcommand{\rrrbracket}{\right\rrbracket\kern-0.3em\right\rrbracket}
\newcommand{\sem}{\@ifnextchar[{\@semtwo}{\@semone}}
\def\@semone#1{\left\llbracket#1\right\rrbracket}
\def\@semtwo[#1]#2{\relSem{#2}{#1}}
\newcommand{\relSem}[2]{\left\llbracket#1\right\rrbracket_{#2}}
\newcommand{\semm}{\@ifnextchar[{\@semmtwo}{\@semmone}}
\def\@semmone#1{\lllbracket#1\rrrbracket}
\def\@semmtwo[#1]#2{\relSemm{#2}{#1}}
\newcommand{\relSemm}[2]{\lllbracket#1\rrrbracket_{#2}}
\renewcommand\subsubsection{\@startsection{subsubsection}{3}{\z@}%
                       {-5\p@ \@plus -4\p@ \@minus -4\p@}%
                       {-0.5em \@plus -0.22em \@minus -0.1em}%
                       {\normalfont\normalsize\bfseries\boldmath}}
\renewcommand{\qed}{
\hfill$\square$}
\newcommand{\vip}[1]{\textbf{#1}}
\newcommand{\annex}[1]{\hfill \textit{\small (#1})}
\newcommand{\refannex}[1]{ 
  \newcounter{#1}
  \setcounter{#1}{\value{theorem}}
  \addtocounter{#1}{-1}}
\newcommand{\V}{\mathcal{V}}
\newcommand{\W}{\mathcal{W}}
\renewcommand{\P}{\mathcal{P}}
\renewcommand{\C}{\mathcal{C}}
\renewcommand{\tilde}{\widetilde}
\newcommand{\supp}[1]{\lvert#1\rvert}
\renewcommand{\st}{\! \mid\! }
\newcommand{\M}[1]{\mathcal{M}_{\mathrm{fin}}(#1)}
\renewcommand{\nat}{\mathcal{N}}
\renewcommand{\int}{\mathbb{N}}
\newcommand{\bool}{\mathcal{B}}
\newcommand{\bweb}{\mathbb{B}}
\newcommand{\ort}[1]{#1^\bot}
\newcommand{\true}{\mathtt{T}}
\newcommand{\false}{\mathtt{F}}
\newcommand{\lparr}{\,\otimes_\varepsilon\,}
\newcommand{\ltens}{\,\tilde\otimes\,}
\newcommand{\Field}{\mathbbm{k}}
\newcommand{\Lc}[2]{\mathcal{L}_{\mathrm{c}}(#1,#2)}
\newcommand{\EvF}[1]{\Field\langle#1\rangle}
\newcommand{\Tot}[1]{\mathcal{T}\<#1\>}
\newcommand{\Part}[1]{\mathcal{P}(#1)}
\newcommand{\PartF}[1]{\mathcal{P}_{\mathrm{fin}}(#1)}
\newcommand{\dort}[1]{#1^{\bot\bot}}
\newcommand{\F}{\mathcal{F}}
\newcommand{\RF}{\mathbf{RelFin}}
\newcommand{\LF}{\mathbf{LinFin}}
\newcommand{\TF}{\mathbf{TotFin}}
\newcommand{\GLF}{\mathcal G(\mathbf{LF})}
\newcommand{\Poln}[2]{\mathrm{Pol}^{#1}(#2)}
\newcommand{\Polk}[1]{\mathrm{Pol}(#1)}
\newcommand{\Polc}[1]{\tilde{\mathrm{Pol}}(#1)}
\newcommand{\Pol}[1]{\Field\left[#1\right]}
\newcommand{\Ser}[1]{\Field\left(#1\right)}
\newcommand{\Lat}{\mathbf\Lambda_{\mathrm{at}}}
\newcommand{\Laff}{\mathbf\Lambda_{\mathrm{bar}}}
\newcommand{\bb}[1]{\mathbf #1}
\newcommand{\ITE}[3]{\mathtt{if}\ #1\ \mathtt{then}\ #2\ \mathtt{else}\ #3 }
\newcommand{\gus}{\mathtt{Gus}}
\newcommand{\por}{\mathtt{POr}}
\newcommand{\pol}[1]{#1^\bullet}
\newcommand{\dpol}[1]{#1^{\bullet\bullet}}
\newcommand{\T}{\mathcal{T}}
\newcommand{\pair}[2]{[\,#1\,,\,#2\,]}
\newcommand{\triple}[3]{[\,#1\,,\,#2\,,#3\,]}
\newcommand{\aff}{\mathrm{aff}\,}
\newcommand{\caff}{\overline{\mathrm{aff}}\,}
\newcommand{\ann}{\mathrm{ann}\,}
\newcommand{\ldir}{\mathrm{dir}\,}
\newcommand{\<}{\langle}
\renewcommand{\>}{\rangle}
\newcommand{\adj}[1]{#1^\ast}
\newcommand{\trans}[1]{{\vphantom{#1}}^{\mathit t}{#1}}
\renewcommand{\exp}[1]{#1^\oc}
\begin{document}
\title{Algebraic totality, towards completeness.}  
\author{Christine Tasson }
\institute{Preuves, Programmes, Syst\`emes}
\maketitle
\begin{abstract}
  \emph{Finiteness  spaces} constitute a  categorical model  of Linear
  Logic whose objects  can be seen as linearly  topologised spaces, (a
  class of topological vector  spaces introduced by Lefschetz in~1942)
  and  morphisms   as  continuous  linear  maps.    First,  we  recall
  definitions of finiteness spaces and describe their basic properties
  deduced  from the  general  theory of  linearly topologised  spaces.
  Then we  give the interpretation  of LL with an  algebraic approach.
  Second,  thanks  to  separation  properties,  we  can  introduce  an
  algebraic notion of totality  candidate in the framework of linearly
  topologised spaces: a totality candidate is a closed affine subspace
  which does  not contain  $0$.  We show  that finiteness  spaces with
  totality candidates constitute a model of classical LL.  Finally, we
  give    a   barycentric    simply   typed    lambda-calculus,   with
  booleans~$\bool$   and  a   conditional  operator,   which   can  be
  interpreted  in   this  model.    We  prove  completeness   at  type
  $\bool^n\to\bool$ for every $n$ by an algebraic method.
\end{abstract}

\section*{Introduction}
In the 80's, Girard has been  led to introduce linear logic (LL) after
a denotational investigation  of system F. The basic idea  of LL is to
decompose  the intuitionistic  implication into  a linear  one  and an
exponential modality.  Many intuitions  behind LL are rooted in linear
algebra  and relate  algebraic  concepts with  operational ones.   For
instance, a linear function in the  LL sense is a program (or a proof)
which uses  its argument exactly once  and LL shows that  this idea is
similar to linearity in the  algebraic sense. Can we use vector spaces
and linear maps for  interpreting LL? In the exponential-free fragment
of LL,  this is quite  easy, since all  vector spaces can  stay finite
dimensional:  it  is  sufficient   to  take  the  standard  relational
interpretation of a  formula (which is a set) and  to build the vector
space  generated  by  this  set.  However,  the  exponential  modality
introduces  infinite  dimension  and   some  topology  is  needed  for
controlling the size of dual spaces.  Indeed, we want all spaces to be
reflexive, that  is, isomorphic to their second  dual, because duality
corresponds to linear negation which is involutive.

There  are  various  ways  for defining  interpretations  with  linear
spaces. Among them, the  interpretations based on linearly topologised
spaces~\cite{blute:lls,ehrhard:fs} have  the feature of  not requiring
any topology on  the field $\Field$. This is  quite natural, since the
topology  of  the  field   is  never  used  for  interpreting  proofs.
Introduced first by Lefschetz in~\cite{lefschetz:at}, these spaces are
geometrically quite unintuitive (their basic opens are linear subsets
whereas usual  basic opens are  balls). They provide  nevertheless the
simplest setting where  formul\ae\ of LL can be  seen as (topological)
linear  spaces  as shown  by  Ehrhard  when  he introduced  finiteness
spaces~\cite{ehrhard:fs}. 

There are two ways of considering finiteness
spaces:\\
\emph{Relational finiteness spaces}: they  can be seen as a refinement
of the relational semantics of linear logic, in which the semantics of
proofs is the same as the standard one (proofs are
interpreted as relations).\\
\emph{Linear  finiteness  spaces}:   given  a  field,  any  relational
finiteness space  gives rise to  a linearly topologised  vector space.
The  category  of  linear  finiteness  spaces  and  continuous  linear
functions  constitutes a  model  of linear  logic.   Besides, a  linear
finiteness space and its dual with  the evaluation as pairing is a Chu
space~\cite{chu:acc}.  The  proofs of linear logic  are interpreted as
multilinear  hypocontinuous maps  (hypocontinuity is  between separate
continuity    and   continuity).     Our    description   of    proofs
(c.f. Annex~\ref{annex:interp}) is closed to that of Game Categories
of Lafont and Streicher~\cite{lafont:gsl}.

For describing these categories, we use the duality presentation whose
importance  has been emphasised  by models  of Classical  Linear Logic
such   as  phase   semantics.   Even   the  definition   of  coherence
spaces~\cite{girard:ll},  usually  described  by  means  of  a  binary
symmetric and reflexive \emph{coherence relation} $\coh_X$ on  a set
$\supp  X$, can  be reformulated  through  duality~\cite{barr:aca}. We
will freely use the  terminology of~\cite{hyland:goll} --- a survey of
the  different duality presentations  and in  particular of  models of
linear  logic by double  orthogonal. The  \emph{partial orthogonality}
between the subsets $c$ and $c'$ of $\supp X$ is given by
\begin{equation*}
  c\perp_{\mathrm{coh}} c'\iff \sharp(c\cap  c')\leq 1.
\end{equation*}
A coherence space can then be seen as a pair $X=(\supp X,\C(X))$ where
$\supp X$ is a set and $\C(X)$ is a subset of $\P(\supp X)$ (the powerset
of $\supp  X$), which is $\perp_{\mathrm{coh}}$-closed,  that is equal  to its second
dual       for       the       duality      induced       by       the
orthogonality:~$\C(X)=\C(X)^{\perp\perp}$.   The  elements of  $\C(X)$
are  the  \emph{cliques}  of  $X$,  those  of  $\ort{\C(X)}$  are  the
anticliques.   The category  of  coherence spaces  and  cliques is  an
\emph{orthogonality category}.

The category of relational finiteness spaces and finitary relations (a
refinement  of standard  relations) also  constitute  an orthogonality
category with  respect to  the \emph{finite orthogonality}  defined as
follows: let $u$ and $u'$ be two sets,
\begin{equation*}
  u\perp_{\mathrm{fin}} u' \iff \sharp (u\cap u') < \infty.
\end{equation*}
A  relational finiteness  space is  a pair  $X=(\supp  X,\F(X))$ where
$\supp X$  is a countable set  and the set  $\F(X)$ of \emph{finitary}
parts is $\perp_{\mathrm{fin}}$-closed.  The elements of $\ort{\F(X)}$
are the \emph{antifinitary} parts. We carry the relational finiteness in
the linear world by  considering the linear subspace of $\Field^{\supp
  X}$  generated by  finitary  linear combinations,  that is  families
$x=(x_a)_{a\in\supp X}$ such that their support $\supp x$ is finitary.
The     finite      orthogonality     between     supports     ($\supp
x\perp_{\mathrm{fin}}\supp  x'$)  implies that  the  pairing between  a
finitary  linear   combination  $x$  and  antifinitary   one  $x'$  is
well-defined:
\begin{equation*}
   \<x',x\>=\textstyle\sum_{a\in        \supp        X}x'_a\,        x_a
   =\textstyle\sum_{a\in\supp x\cap\supp x'} x'_a\, x_a
  \text{ is a finite sum.}
\end{equation*}

The  notion  of  totality,  introduced by  Girard~\cite{girard:sf}  in
denotational   semantics,  is  used   for  interpreting   proofs  more
closely. It often gives the  means to prove completeness results as in
Loader~\cite{loader:llt}. Girard-Loader's totality  is described by an
orthogonality  category up  to a  slight modification  of  the partial
orthogonality:
\begin{equation*}
  u\perp_{\mathrm{tot}} u' \iff \sharp (u\cap u')=1.
\end{equation*}
\noindent A totality candidate is then a subset $\Theta(X)$ of $\P(X)$
such  that $\Theta(X)$  is $\perp_{\mathrm{tot}}$-closed.   A totality
space\footnote{The additional conditions that are actually required
  in~\cite{loader:llt} are not essential for our purpose.}
is a pair $(\supp X,\Theta(X))$ where $\theta(X)$ is a totality candidate.

The  notion of  totality  can be  adapted  to linear  spaces.  So,  we
introduce the polar orthogonality:
\begin{equation*}
  x\perp^\bullet x' \iff \<x',x\>=1.
\end{equation*}
Because we  are working in  a linear algebra  setting, we are  able to
give  a  simple  characterisation  of  totality  candidates,  that  is
polar-closed  subspaces  of   linear  finiteness  spaces:  a  totality
candidate of  a finiteness  space is either  the space itself,  or the
empty set, or  any topologically closed affine subspace  that does not
contain   $0$  and  which   is  topologically   closed.   We   get  an
orthogonality category  whose objects are pairs of  a finiteness space
and  a  totality  candidate;  and  whose maps  are  continuous  linear
functions that preserve  the totality candidates.  This is  a model of
LL.

Since totality  candidate are affine spaces,  it is natural  to add an
affine construction to LL: we thus introduce \emph{barycentric LL}. We
address then  the question  of completeness: is  it the case  that any
vector in the totality candidate of a formula is the interpretation of
a proof of  this formula?  Restricting our attention  to a barycentric
version of  the simply  typed lambda-calculus (extended  with booleans
$\bool$  and a conditional  operator), we  prove completeness  at type
$\bool^n\Rightarrow\bool$ for all $n$, by an algebraic method.

\vip{Outline.}  We start Section~\ref{section:fs} with generalities on
finiteness spaces  at both levels.   Then, we give  several properties
inherited  from   linearly  topologised  spaces,   in  particular,  we
introduce separation  results that are fundamental in  the sequel.  We
describe  the interpretation  of linear  logic proofs  into finiteness
spaces                relying               on               Ehrhard's
results~\cite{ehrhard:fs,ehrhard:fs-presheaves}.                     In
Section~\ref{section:total}, after  having defined totality candidates
and the associated total  orthogonality category, we study barycentric
$\lambda$-calculus.  Finally, in Section~\ref{section:comp}, we tackle
the completeness  problem and give  a positive answer for  first order
boolean types.

\section{Finiteness spaces}
\label{section:fs}


\subsection{Relational finiteness spaces}
Let   $\mathbb  A$  be  a  countable   set.   The  finite
orthogonality is defined by:
\begin{equation*}
  \forall u,\,u'\in\mathbb A,\ u\perp u'\iff u\cap u'\text{ finite}.
\end{equation*}

As usual,  the orthogonal  of any $\F\subseteq\Part{\mathbb{A}}$  is $
\ort{\F}=\left\{u'\subseteq  \mathbb  A\st  \forall u\in  \F,\  u\perp
  u'\right\}$ and $\F$ is orthogonally closed whenever  $\dort{\F}=\F$.

\begin{definition}\label{def:finrel}
  A \vip{relational finiteness space} is a pair $A=(\supp A , \F(A))$
  where  the  \vip{  web}  $\supp  A$  is a  countable  set  and  the
  \vip{finitary    subsets}   $\F(A)\subseteq\Part{\supp    A}$   are
  orthogonally closed.  We call $u\in\F(\ort A)$ \vip{antifinitary}.
  Let  $A$  and  $B$  be  relational finiteness  spaces.   A  finitary
  relation $R$  between $A$ and  $B$ is a  subset of $A\times  B$ such
  that
  \vskip-1.5em
  \begin{eqnarray*}
    \forall u\in\F(A),\ R\cdot u&=&\left\{b\in\supp B\st \exists a\in
      u,\,(a,b)\in R\right\}\in\F(B),\\
    \forall v'\in\ort{\F(B)},\ \trans{R}\cdot v'&=&\left\{a\in\supp A\st 
      \exists b\in v',\,(a,b)\in R\right\}\in\ort{\F(A)}.
  \end{eqnarray*}
  Let  us call  $\RF$ the  category whose  objects are  the relational
  finiteness spaces and whose maps are the finitary relations.
\end{definition}

Every  finite subset  of  a  countable set  $\mathbb  A$ is  finitary.
Therefore, there  is only  one relational finiteness  space associated
with a finite web (any subset is finitary).

Let    $\mathcal   F,\,\mathcal    G\subseteq\Part{\mathbb{A}}$.    If
$\mathcal{F} \subseteq  \mathcal{G}$ then $\ort{\mathcal{G}} \subseteq
\ort{\mathcal{F}}$.    Besides,  $\mathcal{F}\subseteq  \dort{\mathcal
  F}$,    hence    $\mathcal{F}^{\bot\bot\bot}=\ort\F$.     Therefore,
$(\mathbb A, \dort \F)$ is always a finiteness space.

Let     $A$    be    a     relational    finiteness     space,    then
$\ort{(\ort{\F(A)})}=\F(A)$. Thus,  the \vip{orthogonal} $\ort  A$ of
$A$ defined  to be $(\supp A,\ort{\F(A)})$ is  a relational finiteness
space whose  orthogonal $\dort{A}=(\supp A,\dort{\F(A)})$  is equal to
$A$.
\begin{example}\label{ex:bool}
  \emph{Booleans.}   The   relational  finiteness  space   $\bool$  is
  associated     with    the     web    with     two     elements    $
  \bweb=\{\true,\false\}$.           Every          subset          is
  finitary:~$\F(\bool)=\Part{\bweb}$. \\
  \emph{Integers.}  The  web $\int$  of integers, associated  with the
  finite  subsets $\PartF{\int}$  constitutes  a relational  finiteness
  space    denoted   $\nat$.     Its   orthogonal    $\ort{\nat}$   is
  $(\int,\Part{\int})$.
\end{example}

\subsection{Linear finiteness spaces}

{\bf Notations:} In  the sequel $A$, $(A_i)_{i\leq n}$  and $B$ range
over relational finiteness spaces. The field $\Field$ is discrete and infinite 
(i.e. every subset of $\Field$ is open). We  handle standard notions of
linear algebra using the notations:
\begin{runitemize}
\item[ $E^\ast$] is the space of linear forms over
  $E$, 
\item[$E'$] is the topological dual of $E$,
\item[$\<x^\ast,x\>$] is $x^\ast(x)$ if $x\in E$ and
  $x^\ast\in E^\ast$,
\item[$\ker_E(x^\ast)$]   is  the  kernel  of
  $x^\ast\in E^\ast$,
\item[$\ann_{E^\ast}(x)$]     (resp.\
  $\ann_{E'}(x)$) is  the subspace of  $E^\ast$ (resp.\ $E'$)  of linear
  forms (resp.\ continuous linear forms) which annihilate $x$,
\item[$\aff(D)$]  (resp.\  $\caff(D)$) is  the
  affine hull (resp.\ affine closed) of a subset $D$.
\end{runitemize}

Any relational finiteness space $A$  gives rise to a linear finiteness
spaces   $\EvF{A}$  which   is  a   subspace  of   the   linear  space
$\Field^{\supp A}$:
\begin{definition}
  For   every
  $x\in\Field^{\supp A}$,  let $\supp x=\{a\in\supp  A\st x_a\neq 0\}$
  be the \vip{support} of $x$.
  The \vip{linear finiteness space} associated with $A$ is
  $
    \EvF{A}=\{x\in \Field^{\supp A}\st \supp x\in \F(A)\}.
  $
\end{definition}

With each $a\in\supp A$,  we associate a basic vector $e_a\in\EvF{A}$.
Notice  that  $\EvF A$  is  generated  by  the \emph{finitary}  linear
combinations of basic vectors (and not by finite ones).

Each linear finiteness  space can be endowed with  a topology which is
induced  by  the  antifinitary  parts  of  the  underlying  relational
finiteness space:
\begin{definition}
 For   every
  ${J'}\in\ort{\F(A)}$, let  us call $  V_{J'}=\{x\in\EvF{A} \st \supp
  x\cap {J'}=\emptyset\}$ a \vip{fundamental linear neighbourhood} of
  $0$.  A  subset $U$ of  $\EvF A$ is  open if and only  if for
  each   $x\in   U$    there   is   $J'_x\in\ort{\F(A)}$   such   that
  $x+V_{J'_x}\subseteq    U$.    This    topology    is   named    the
  \vip{finiteness topology} on $\EvF A$.
\end{definition}

The collection of $V_{J'}$ where ${J'}$ ranges over $\ort{\F(A)}$ is a
filter   basis.     Indeed,   for   every   $J'_1,J'_2\in\ort{\F(A)}$,
$V_{J'_1}\cap  V_{J'_2}=V_{J'_1\cup  J'_2}$  and  $J'_1\cup  J'_2  \in
\ort{(\F(A))}$.   Besides,  $\EvF A$  is  Hausdorff,  since for  every
$x\neq 0$  and $a\in\supp x$ the finite  set $\{a\}\in\ort{\F(A)}$, so
$x\notin V_{\{a\}}$.

Endowed  with  the  finiteness   topology,  $\EvF  A$  is  a  \vip{linearly
topologised space}. That is a  topological vector space over a discrete
field whose  topology is generated  by a fundamental system  (a filter
basis of  neighbourhoods of $0$,  here the $V_{J'}$, which  are linear
subspaces  of  $\EvF  A$).   Introduced  by  Lefschetz~\cite[II  -  \S
6]{lefschetz:at}, linearly topologised spaces have been widely studied
in~\cite[\S 10-13]{kothe:tvs1}.
\begin{definition}
  Let  us  call  $\LF$  the  category whose  objects  are  the  linear
  finiteness  spaces   and  whose  maps  are   the  linear  continuous
  functions.
\end{definition}
\begin{example}
  \emph{Booleans.}   As every  linear  finiteness space  whose web  is
  finite,  the linear  finiteness  space associated  with the  boolean
  relational     space      has     a     finite      dimension:     $
  \EvF{\bool}=\EvF{\ort\bool}=\Field^\bweb\simeq  \Field^2$. The space
  $\EvF{\bool}$ is endowed with the discrete topology since $\bool$ is
  antifinitary  and   so  $V_\bool=\{0\}$  is   a  fundamental  linear
  neighbourhood of $0$.
  \\
  \emph{Integers.}   The  linear   finiteness  space  associated  with
  $\mathcal{N}$  is the set  of finite  sequences over  $\Field$.  The
  linear finiteness  space associated with  $\ort{\mathcal{N}}$ is the
  set  of all  sequences over  $\Field$.  Since $\int\in\F(\ort\nat)$,
  $V_{\int}=\{0\}$  is a  neighbourhood of  zero, $\EvF\nat$ is
  endowed with the discrete topology.  On the contrary, the topology on
  $\EvF{\ort\nat}$  is  non-trivial:  the  fundamental system  is  the
  collection of  $V_{J'}$ where ${J'}$  ranges over finite  subsets of
  $\int$.  The space  $\EvF{\ort\nat}$ is simply $\Field^\int$ endowed
  with the usual product topology.
\end{example}

Linearly topologised  spaces are  quite different from  Banach spaces.
Any open  subset $V_{J'}$  of a finiteness  space is  closed ($\forall
x\notin V_{J'},\, (x+V_{J'})\cap V_{J'}=\emptyset$) --- linear
finiteness spaces  are totally disconnected.   Intuitively, unit balls
are replaced by subspaces, the $V_{J'}$s, which can hardly be considered
as bounded in the usual meaning.  However, there are linear
variants of the classical notions of boundedness and compactness:
\begin{definition}
  A subspace $C$  of $\EvF A$ is said  \vip{linearly bounded} iff for
  every ${J'}\in\ort{\F(A)}$ the codimension  of $V_{J'}\cap C$ in $C$
  is  finite, i.e.  there  exists a  subspace $C_0$  of $C$  such that:
\vspace{-1.5em}  \begin{center}
  $C=(V_{J'}\cap C) \oplus C_0$ and $\dim C_0$ is finite.
  \end{center}\vspace{-.5em}
  A subspace $K$ of $\EvF A$ is said \vip{linearly compact} iff for every
  filter $\F$ of affine closed subspaces of $\EvF A$, which satisfies
  the   intersection   property   (i.e.    $\forall   F\in\F,   F\cap
  K\neq\emptyset$), 
  \vspace{-1.5em}
  \begin{center} $(\cap\F)\cap K\neq\emptyset$.\end{center}\vspace{-.5em}
\end{definition}

\begin{theorem}[Tychonov]\cite[\S 10.9(7)]{kothe:tvs1}\label{th:tychonov}
  For  any  set $I$,  $\Field^I$  endowed  with  the product  topology
  (generated by $V_{J}=\{x\in\Field^I\st \supp x \cap {J}=\emptyset\}$
  with $J\subseteq I$) is linearly compact.
\end{theorem}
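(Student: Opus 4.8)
The plan is to convert this topological statement into a purely algebraic one about the solvability of a system of linear equations. Since here $K=\Field^I$ is the whole space, we have $F\cap K=F$ for each candidate $F$, so the intersection property reduces to the assertion: every filter $\F$ of nonempty affine closed subspaces of $\Field^I$ satisfies $\bigcap\F\neq\emptyset$. The starting observation is that, for the product topology, the continuous dual of $\Field^I$ is the space $\Field^{(I)}$ of finitely supported families. Indeed, a continuous linear form $a$ has $a^{-1}(\{0\})$ open, hence it vanishes on some basic neighbourhood $V_J$ with $J$ finite; it therefore factors through the finite-dimensional quotient $\Field^I/V_J\simeq\Field^J$ and is a finite combination of coordinate forms.

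First I would check that every affine closed subspace $F$ is cut out by such continuous forms, i.e.\ $F=\{x\in\Field^I\st\langle a,x\rangle=b_a\text{ for all }a\in A_F\}$ for a suitable $A_F\subseteq\Field^{(I)}$. Writing $F=x_F+W_F$ with $W_F$ closed linear, it suffices to prove $W_F=\{x\st\langle a,x\rangle=0,\ a\in\ort{W_F}\}$, where $\ort{W_F}\subseteq\Field^{(I)}$ is the annihilator. If $x_0\notin W_F$, closedness yields a finite $J$ with $(x_0+V_J)\cap W_F=\emptyset$, so $\pi_J(x_0)\notin\pi_J(W_F)$ inside the finite-dimensional space $\Field^J$; separating $\pi_J(x_0)$ from the subspace $\pi_J(W_F)$ and composing with $\pi_J$ gives a continuous form annihilating $W_F$ but not $x_0$. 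This is elementary finite-dimensional linear algebra and needs no general separation theorem.

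The heart of the argument is then an algebraic solvability criterion. Gather all equations of all $F\in\F$ into one system $\Sigma$ in the unknown $x\in\Field^I$. The finite intersection property of $\F$ says exactly that every finite subsystem of $\Sigma$ is solvable: a finite set of equations comes from finitely many members $F_1,\dots,F_n$, whose intersection $F_1\cap\dots\cap F_n$ lies in $\F$ and is nonempty. Solvability of finite subsystems makes $\Sigma$ consistent, in the sense that whenever $\sum_k\lambda_k a_k=0$ in $\Field^{(I)}$ for finitely many constraints, evaluating a solution of that finite subsystem forces $\sum_k\lambda_k b_{a_k}=0$. Consistency lets me define a linear form $\phi$ on the span of the $a$'s by $\phi(a)=b_a$; extending $\phi$ to all of $\Field^{(I)}$ and using the identification of the full algebraic dual $(\Field^{(I)})^\ast$ with $\Field^I$ produces a point $z\in\Field^I$ with $\langle a,z\rangle=b_a$ for every constraint, that is $z\in\bigcap\F$.

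The step I expect to be the main obstacle is precisely this passage from finite solvability to a global solution, where all the compactness content of the theorem is concentrated. It rests on two non-constructive ingredients: the extension of the linear form $\phi$ from a subspace of $\Field^{(I)}$ to the whole of $\Field^{(I)}$, and the identification of the algebraic dual $(\Field^{(I)})^\ast$ of the free vector space with the product $\Field^I$. Everything else — the description of the continuous dual, the separation of a point from a closed subspace, and the consistency bookkeeping — is elementary and finite-dimensional in nature, so the whole proof stays within the algebraic spirit and matches the cited treatment of Köthe.
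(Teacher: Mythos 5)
Your proof is correct; the main thing to know is that the paper offers no proof of this statement to compare against --- it is imported directly from K\"othe [\S 10.9(7)]. (You also implicitly fixed a typo: for the product topology the $J$ in $V_J$ must range over \emph{finite} subsets of $I$, otherwise $V_I=\{0\}$ would make the topology discrete and the claim false for infinite $I$.) Your route is the duality proof: the continuous dual of $\Field^I$ is the space $\Field^{(I)}$ of finitely supported families; every nonempty closed affine subspace is the solution set of a system of equations $\langle a,x\rangle=b_a$ with $a\in\Field^{(I)}$ (your separation step is sound and genuinely finite-dimensional, since closedness provides a finite $J$ with $(x_0+V_J)\cap W_F=\emptyset$ and hence $\pi_J(x_0)\notin\pi_J(W_F)$); the finite intersection property of the filter becomes finite solvability, hence consistency, of the combined system; and a global solution is produced by extending a partial linear form on $\Field^{(I)}$ and identifying the algebraic dual $(\Field^{(I)})^\ast$ with $\Field^I$. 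All steps check, including the consistency bookkeeping that rules out two constraints with the same $a$ and different right-hand sides. The proof suggested by the theorem's name, and the one in K\"othe, goes instead through the linear analogue of Tychonoff's theorem --- an arbitrary topological product of linearly compact spaces is linearly compact --- applied to $\Field^I=\prod_{i\in I}\Field$ with each discrete finite-dimensional factor trivially linearly compact. What your version buys is a self-contained argument in the algebraic spirit the paper advertises: linear compactness of $\Field^I$ is exposed as exactly the solvability of consistent linear systems over $\Field$, with the axiom of choice entering only through basis extension in $\Field^{(I)}$.
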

In the converse direction, we get a characterisation of
linearly compact spaces:
\begin{theorem}\cite[II - \S 6(32.1)]{lefschetz:at}
  \label{th:tychonovconverse}
  For every linearly compact vector space $K$, there is a set $I$ such
  that $K$ is topologically  isomorphic to $\Field^I$ endowed with the
  product topology.
\end{theorem}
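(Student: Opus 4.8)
The plan is to realise $K$ inside a product through its continuous dual. I assume, as the statement implicitly requires, that $K$ is Hausdorff (no isomorphism with the Hausdorff space $\Field^I$ is possible otherwise). First I would observe that continuous linear forms separate the points of $K$: if $x\neq 0$, Hausdorffness provides an open subspace $U$ with $x\notin U$; the quotient $K/U$ carries the discrete topology, hence is a plain $\Field$-vector space in which the image of $x$ is nonzero, and any linear form on $K/U$ not vanishing there composes with the projection to give a continuous linear form $f$ on $K$ with $\langle f,x\rangle\neq 0$. Now choose a basis $(f_i)_{i\in I}$ of the continuous dual $K'$ and set
\[
  \Phi:K\longrightarrow\Field^I,\qquad \Phi(x)=(\langle f_i,x\rangle)_{i\in I}.
\]
Since each coordinate is continuous and $\Field^I$ carries the product topology, $\Phi$ is continuous, and by the separation just noted $\Phi$ is injective. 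The whole argument then reduces to proving that $\Phi$ is onto and open.

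For the image I would treat density and closedness separately. Density is pure linear algebra: for finite $J\subseteq I$ the forms $(f_i)_{i\in J}$ are linearly independent, so $x\mapsto(\langle f_i,x\rangle)_{i\in J}$ is surjective onto $\Field^J$ (a proper image would lie in a hyperplane, yielding a nontrivial dependence among the $f_i$). As the sets $y+V_J$ with $J$ finite form a base of the product topology, this says exactly that $\Phi(K)$ is dense. Closedness is where linear compactness enters: $\Phi(K)$ is the continuous image of a linearly compact space, hence linearly compact, and a linearly compact subspace of the Hausdorff space $\Field^I$ is closed (the linear analogue of ``compact subsets of Hausdorff spaces are closed'', available from the general theory of linearly topologised spaces). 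Dense and closed together force $\Phi(K)=\Field^I$, i.e.\ surjectivity.

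It remains to prove that $\Phi$ is open, and this is the step I expect to be the real obstacle, being the only place where linear compactness of $K$ must be exploited beyond a soft closedness argument. The key lemma is that every open subspace $U$ of $K$ has finite codimension: indeed $U$ is also closed, so $K/U$ is a discrete quotient of a linearly compact space, hence itself linearly compact, and a discrete linearly compact space is finite-dimensional. One sees the latter by noting that an infinite-dimensional discrete space admits a decreasing chain of nonempty affine subspaces $A_n=\{x:\langle h_k,x\rangle=1,\ 1\le k\le n\}$, for coordinate forms $h_k$ of an infinite linearly independent family, whose members are closed and have the finite intersection property but whose total intersection is empty (a point meeting every $A_n$ would have infinitely many nonzero coordinates), contradicting linear compactness. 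Granting this, pick a basis $g_1,\dots,g_m$ of $(K/U)^\ast$, regard each $g_k$ as a continuous form on $K$ via the projection, and express it through finitely many $f_i$, with indices in a finite set $J$; then $U=\bigcap_k\ker g_k\supseteq\bigcap_{i\in J}\ker f_i=\Phi^{-1}(V_J)$, whence $\Phi(U)\supseteq V_J$ by surjectivity. Thus $\Phi$ carries basic neighbourhoods of $0$ to neighbourhoods of $0$, so it is open, and being a continuous open linear bijection it is the desired topological isomorphism.
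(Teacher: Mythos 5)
The paper does not prove this statement: it is quoted from Lefschetz (and used as a black box, much like the Köthe citations elsewhere), so there is no internal proof to compare yours against. Your argument is correct and is essentially the classical one. The decomposition into (i) separation of points of $K$ by continuous linear forms, (ii) density of $\Phi(K)$ via linear independence of a basis of the continuous dual, (iii) closedness of $\Phi(K)$ because a continuous linear image of a linearly compact space is linearly compact and a linearly compact subspace of a Hausdorff linearly topologised space is closed, and (iv) openness of $\Phi$ via the lemma that every open subspace of $K$ has finite codimension, is sound; and you correctly identify (iv) --- equivalently, that a discrete linearly compact space is finite dimensional --- as the step where linear compactness is used beyond a soft closedness argument. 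Your chain of affine subspaces $A_n$ with the finite intersection property but empty total intersection matches exactly the filter definition of linear compactness adopted in the paper, and the auxiliary facts you invoke in (iii) are the same ones the paper itself cites from Köthe \S 10.9 in its appendix, so nothing circular is going on. Two points worth making explicit in a write-up: the Hausdorff hypothesis you add is genuinely needed (an indiscrete space is linearly compact but not isomorphic to any $\Field^I$) and is part of the standing conventions here; and in the openness step one should say a word on why a subspace of $\Field^I$ containing the open subspace $V_J$ is itself open (it is a union of cosets of $V_J$), which then propagates to arbitrary open sets by translation.
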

\begin{example}
  \emph{Booleans.} As  in every  linearly topologised space  of finite
  dimension (see \cite[\S 13.1]{kothe:tvs1}),
  every subspace of $\EvF\bool$ is linearly bounded.\\
  \emph{Integers.}  It follows from Th.~\ref{th:tychonovconverse} that
  a linearly compact space is discrete iff its dimension is
  finite.  
  Hence, the  linearly compact subspaces of $\EvF\nat$  are the finite
  dimensional ones. Thanks to Tychonov Th.~\ref{th:tychonov},
  $\EvF{\ort\nat}$ is linearly compact.
\end{example}

In  the  finiteness setting,  finitary  support characterise  linearly
bounded spaces.   Although it is not  true in the general
setting of  linearly topologised spaces~\cite[\S 13.1(5)]{kothe:tvs1},
linearly compact  spaces are exactly the closed linearly bounded spaces.
\begin{proposition}\label{prop:lc,lb,fin}
\newcounter{lc,lb,fin}
\setcounter{lc,lb,fin}{\value{theorem}}
\addtocounter{lc,lb,fin}{-1}
Let $K$ be  a subspace of
  $\EvF A$.  There is an equivalence between
  \vspace{-0.5em}
  \begin{shortenumerate}
  \item\label{prop:lb} $K$ is linearly bounded,
  \item\label{prop:fin} $\supp  K=\cup\{\supp x\st x\in K\}$ is
    finitary,
  \item\label{prop:lc} the closure of $K$ is linearly compact.
  \end{shortenumerate}
\end{proposition}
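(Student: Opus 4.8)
The plan is to prove $\ref{prop:lb}\Leftrightarrow\ref{prop:fin}$ by a direct computation on supports, and then close the cycle with $\ref{prop:fin}\Rightarrow\ref{prop:lc}\Rightarrow\ref{prop:lb}$ using Tychonov's theorems. I will repeatedly use two elementary facts: $\F(A)$ is downward closed (a subset of a finitary set is finitary, since $\F(A)=\dort{\F(A)}$) and every finite subset of $\supp A$ is antifinitary. In particular, $\supp K$ is finitary if and only if $\supp K\cap J'$ is finite for every $J'\in\ort{\F(A)}$, because $\supp K\in\F(A)=\dort{\F(A)}$ means exactly that $\supp K\perp J'$ for all $J'\in\ort{\F(A)}$.

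For $\ref{prop:lb}\Leftrightarrow\ref{prop:fin}$, fix $J'\in\ort{\F(A)}$ and consider the restriction map $\pi_{J'}\colon K\to\Field^{J'}$, $x\mapsto(x_a)_{a\in J'}$. Its kernel is exactly $V_{J'}\cap K$, so the codimension of $V_{J'}\cap K$ in $K$ equals $\dim\pi_{J'}(K)$. Since every $x\in K\subseteq\EvF A$ has finitary support and $J'$ is antifinitary, $\supp x\cap J'$ is finite, hence each $\pi_{J'}(x)$ is finitely supported and $\pi_{J'}(K)$ lies in the direct sum $\bigoplus_{a\in J'\cap\supp K}\Field$. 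The key observation is that $\dim\pi_{J'}(K)$ is finite if and only if $J'\cap\supp K$ is finite: one direction is clear, and conversely a finite basis of $\pi_{J'}(K)$ has finite total support $T$, so every vector of $\pi_{J'}(K)$ is supported in $T$; as each $a\in J'\cap\supp K$ is the nonzero coordinate of some $\pi_{J'}(x)$, we get $J'\cap\supp K\subseteq T$. Quantifying over $J'$ and using the reformulation of $\ref{prop:fin}$ above yields $\ref{prop:lb}\Leftrightarrow\ref{prop:fin}$.

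For $\ref{prop:fin}\Rightarrow\ref{prop:lc}$, set $S=\supp K\in\F(A)$. Since $\F(A)$ is downward closed, $\Field^S=\{x\in\EvF A\st\supp x\subseteq S\}$ is a subspace of $\EvF A$ containing $K$, and it is closed: if $\supp y\not\subseteq S$, pick $a\in\supp y\setminus S$; then the clopen neighbourhood $y+V_{\{a\}}$ avoids $\Field^S$. Moreover the subspace topology induced on $\Field^S$ coincides with the product topology, because $V_{J'}\cap\Field^S=V_{J'\cap S}\cap\Field^S$ with $J'\cap S$ finite, while conversely every finite $F\subseteq S$ is antifinitary. By Tychonov (Th.~\ref{th:tychonov}), $\Field^S$ is linearly compact. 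As $\Field^S$ is closed, the closure of $K$ computed in $\EvF A$ lies in $\Field^S$ and is closed there; a closed subspace of a linearly compact space is linearly compact (immediate from the filter definition, by restricting each filter of affine closed subspaces to the subspace), which gives $\ref{prop:lc}$.

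For $\ref{prop:lc}\Rightarrow\ref{prop:lb}$, write $\tilde K$ for the closure of $K$ and fix $J'\in\ort{\F(A)}$. Since $V_{J'}$ is both open and closed, $V_{J'}\cap\tilde K$ is an open and closed subspace of $\tilde K$, so the quotient $\tilde K/(V_{J'}\cap\tilde K)$ is discrete (quotient by an open subspace) and linearly compact (a continuous linear image of the linearly compact $\tilde K$). By Th.~\ref{th:tychonovconverse} a discrete linearly compact space is finite dimensional, hence $V_{J'}\cap\tilde K$ has finite codimension in $\tilde K$. Finally $K/(V_{J'}\cap K)$ embeds into $\tilde K/(V_{J'}\cap\tilde K)$ via $K\hookrightarrow\tilde K$, because $V_{J'}\cap K=K\cap(V_{J'}\cap\tilde K)$; thus $V_{J'}\cap K$ has finite codimension in $K$ and $\ref{prop:lb}$ follows. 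I expect the main obstacle to be the bookkeeping around linear compactness in these last two steps: one must check that the subspace topology on $\Field^S$ is genuinely the product topology so that Th.~\ref{th:tychonov} applies, and one must invoke the (standard, but definition-dependent) stability of linear compactness under closed subspaces and continuous images. The combinatorial heart is $\ref{prop:lb}\Leftrightarrow\ref{prop:fin}$, where the finitary-support structure of $\EvF A$ is what makes the argument go through.
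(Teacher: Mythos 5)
Your proof is correct and follows essentially the same route as the paper's (the full argument is in Annex~\ref{ann:finit}, Prop.~\ref{ann:lc,lb,fin}): supports for $\ref{prop:lb}\Leftrightarrow\ref{prop:fin}$, Tychonov applied to $\Field^{\supp K}$ for $\ref{prop:fin}\Rightarrow\ref{prop:lc}$, and the finite-dimensionality of discrete linearly compact spaces for $\ref{prop:lc}\Rightarrow\ref{prop:lb}$. The only differences are cosmetic — you phrase the codimension condition via the restriction map $\pi_{J'}$ and a quotient where the paper uses an explicit direct summand $C_0$ — and you are slightly more careful in passing from the closure $\tilde K$ back to $K$ in the last step.
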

\begin{proof}
  First, let $C$ be a linearly bounded space and ${J'}\in\ort{\F(A)}$.
  There  is a  finite  dimensional  subspace $C_0$  of  $C$ such  that
  $C=(C\cap  V_{J'})\oplus  C_0$.  Since  the  dimension  of $C_0$  is
  finite,   $\supp{C_0}$  is   finitary.   Besides,   $\supp   C  \cap
  {J'}=\supp{  C_0}\cap  {J'}$  which  is finite:  $\supp  C\in\F(A)$.
  Conversely, if  $\supp C\in\F(A)$, then  $\supp C\cap J'$  is finite
  and  $C\subseteq (\Field^{\supp K}\cap  V_{J'})\oplus \Field^{{\supp
      K}\cap    J'}$   is    linearly   bounded.     The   equivalence
  between~\eqref{prop:fin} and~\eqref{prop:lc} has already been proved
  in~\cite{ehrhard:fs-presheaves}.      \annex{Complete    proof    in
    Annex~\ref{ann:finit}, Prop.~\ref{ann:lc,lb,fin}}
\end{proof}

We focus attention on the  topological dual --- a linearly topologised
space endowed with the compact  open topology, that is the topology of
uniform  convergence on  either  linearly bounded  spaces or  linearly
compact spaces (equivalent thanks to Prop.~\ref{prop:lc,lb,fin}). 
\begin{definition}\label{def:dual}
  The \vip{topological  dual} $\EvF A '$  is the linear  space made of
  continuous linear forms over $\EvF  A$ and endowed with the linearly
  compact  open  topology.  This  topology  is  generated  by the 
  $
  \ann_{\EvF   A'}(K)=\{x'\in\EvF   A  '\st   \forall   x\in
  K,\,\<x',x\>=0\}$s
 where  $K$  ranges  over
  linearly compact subspaces of $\EvF A$.
\end{definition}

The two following propositions are central in 
the totality introduced in Section~\ref{section:total}.
\begin{proposition}[Separation]\cite[\S10.4(1')]{kothe:tvs1}.
  \label{lem:separation}
  \refannex{separation} For  every closed subspace  $D$ of $\EvF  A$ and
  $x\notin D$, there  is a continuous linear form  $x'\in\EvF A'$ such
  that $\<x,x'\>=1$ and  $\forall d\in D,\, \<d,x'\>=0$.  \annex{Proof
    in Annex~\ref{ann:finit}, Prop.~\ref{ann:separation}}
\end{proposition}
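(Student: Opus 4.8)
The plan is to exploit the defining feature of the finiteness topology, namely that it admits a base of neighbourhoods of~$0$ consisting of the \emph{linear subspaces}~$V_{J'}$. This collapses the separation problem to elementary linear algebra in a discrete quotient, so that no analytic Hahn--Banach machinery is needed. First I would use closedness of~$D$: since $x\notin D$, the set $\EvF A\setminus D$ is an open neighbourhood of~$x$, so by definition of the finiteness topology there is some ${J'}\in\ort{\F(A)}$ with $x+V_{J'}\subseteq\EvF A\setminus D$, i.e. $(x+V_{J'})\cap D=\emptyset$. As $V_{J'}$ is a linear subspace, this says exactly $x\notin D+V_{J'}$.

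Next I would pass to the quotient $\pi\colon\EvF A\to\EvF A/V_{J'}$. Because $V_{J'}$ is an \emph{open} subspace, this quotient is a discrete topological vector space and $\pi$ is continuous. Concretely $\pi$ is the coordinate projection $x\mapsto(x_a)_{a\in J'}$, whose kernel is $V_{J'}$; note that $\supp x\cap J'$ is finite for every $x\in\EvF A$ (as $\supp x\in\F(A)$ and $J'\in\ort{\F(A)}$) while each $e_a$ with $a\in J'$ lies in $\EvF A$, so the image is the finitely supported family space $\Field^{(J')}$. The first step now reads $\pi(x)\notin\pi(D)$, where $\pi(D)$ is a linear subspace of this quotient.

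I would then separate purely algebraically. Since $\pi(x)$ lies outside the subspace $\pi(D)$, the assignment $\pi(d)+\lambda\,\pi(x)\mapsto\lambda$ is a well-defined linear form on $\pi(D)+\Field\,\pi(x)$, which I extend (choosing a complement via a Hamel basis) to a linear form $\phi$ on all of $\EvF A/V_{J'}$ vanishing on $\pi(D)$ with $\phi(\pi(x))=1$. Setting $x'=\phi\circ\pi$ immediately gives $\<x,x'\>=1$ and $\<d,x'\>=0$ for all $d\in D$. It remains to check $x'\in\EvF A'$, i.e. that $x'$ is continuous: this is where the construction pays off, since $x'$ factors through $\pi$ and hence $\ker x'\supseteq V_{J'}$; a subspace containing the open subspace $V_{J'}$ is itself open (being a union of $V_{J'}$-cosets), and as $\Field$ is discrete every fibre of $x'$ is a coset of the open subspace $\ker x'$, so $x'$ is continuous.

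The only genuinely load-bearing step is the first one: replacing the closed subspace $D$ by an \emph{open} subspace $V_{J'}$ separating~$x$. Everything afterwards is bookkeeping, because the discreteness of $\Field$ together with the subspace neighbourhood base makes both the algebraic separation and the continuity verification routine; thus I expect the main subtlety to be merely confirming that closedness of $D$ really yields such a $V_{J'}$, which is precisely the definition of the finiteness topology applied to the open complement of~$D$.
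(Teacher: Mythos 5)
Your proposal is correct and follows essentially the same route as the paper's own proof in Annex~\ref{ann:finit}: pick $J'\in\ort{\F(A)}$ so that the open linear subspace $V_{J'}$ witnesses the separation, build the linear form by the incomplete basis theorem (you do this in the quotient $\EvF A/V_{J'}$, which is the same thing), and get continuity for free because the kernel contains the open subspace $V_{J'}$. If anything you are slightly more careful than the paper at the one load-bearing step: the annex proof only records $x\notin V_{J'}$ (from Hausdorffness), whereas what the construction really needs --- and what you correctly extract from the closedness of $D$ --- is $x\notin D+V_{J'}$.
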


\begin{proposition}[Separation in the dual]
  \label{dualseparation} 
  \refannex{dualseparation}
  Let
  $T'$ be  a closed  affine subspace of  $\EvF A'$ such  that $0\notin
  T'$.   There   exists  $x\in  \EvF  A$  such   that  $\forall  x'\in
  T',\,\<x',x\>=1$.
\end{proposition}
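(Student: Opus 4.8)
The plan is to reduce this dual statement to the primal separation result, Proposition~\ref{lem:separation}, applied \emph{inside} the dual space $\EvF A'$, and then to transport the separating functional back to $\EvF A$ using the reflexivity of finiteness spaces (the identifications $\EvF A'\simeq\EvF{\ort A}$ and $\EvF{\ort A}'\simeq\EvF{\dort A}=\EvF A$).

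First I would dispose of the degenerate case: if $T'=\emptyset$ the conclusion holds vacuously, any $x\in\EvF A$ witnessing it, so I may assume $T'\neq\emptyset$ and fix a base point $x_0'\in T'$. Writing $D'=\ldir T'$ for the direction of the affine subspace, I have $T'=x_0'+D'$ with $D'$ a linear subspace of $\EvF A'$. Since translation by $-x_0'$ is a homeomorphism and $T'$ is closed, $D'=-x_0'+T'$ is a \emph{closed} linear subspace. Moreover the hypothesis $0\notin T'$ translates into $x_0'\notin D'$: indeed $0\in T'$ would mean $-x_0'\in D'$, hence $x_0'\in D'$ as $D'$ is a subspace, and the converse is immediate.

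Next I would invoke separation in the dual. The topological dual of $\EvF A$ is the linear finiteness space $\EvF{\ort A}$, which is again of the form covered by Proposition~\ref{lem:separation} since $\ort A$ is a relational finiteness space, and whose own dual is $\EvF{\dort A}=\EvF A$ because $\dort A=A$. Applying Proposition~\ref{lem:separation} with $A$ replaced by $\ort A$, to the closed subspace $D'$ of $\EvF{\ort A}\simeq\EvF A'$ and the point $x_0'\notin D'$, yields a continuous linear form on $\EvF A'$, that is an element $x\in\EvF A$, such that $\<x_0',x\>=1$ and $\<d',x\>=0$ for every $d'\in D'$. It then only remains to verify that this $x$ works: for an arbitrary $x'\in T'$ write $x'=x_0'+d'$ with $d'\in D'$, so that $\<x',x\>=\<x_0',x\>+\<d',x\>=1+0=1$ by bilinearity of the pairing.

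I expect the main obstacle to be the reflexivity step. One must ensure that the continuous linear form produced by Proposition~\ref{lem:separation} on $\EvF A'$ genuinely corresponds to a vector of $\EvF A$, and not merely to an element of some larger algebraic bidual; this is exactly where the identification $\EvF A'\simeq\EvF{\ort A}$ together with $\EvF{\ort A}'\simeq\EvF A$ is essential, and it is precisely the feature of finiteness spaces that the general theory of linearly topologised spaces does not grant for free. The only other point requiring care is the closedness of the direction $D'$, which I would argue from the closedness of $T'$ via the homeomorphism property of translations.
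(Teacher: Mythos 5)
Your proof is correct, but it follows a genuinely different route from the paper's. You reduce the statement to the primal separation (Prop.~\ref{lem:separation}) applied inside $\EvF{\ort A}$: translate $T'$ by a base point $x_0'$ to obtain a closed linear subspace $D'=\ldir(T')$ with $x_0'\notin D'$, separate, and read the resulting continuous linear form as a vector of $\EvF{\ort A}'\simeq\EvF{\dort A}=\EvF A$. The paper instead argues directly in $\EvF A'$ with its linearly compact open topology: since $T'$ is closed and misses $0$, some fundamental neighbourhood $\ann(K)$ (with $K$ linearly compact in $\EvF A$) is disjoint from $T'$; a finite-dimensional affine lemma (Lem.~\ref{lem:finitecollection}) produces, for each finite $F'\subseteq T'$, a point of $K$ pairing to $1$ with all of $F'$; and the filter-intersection property of the linearly compact $K$ glues these finite solutions into a single $x$. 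Your argument is shorter, but it rests entirely on the identifications $\EvF A'\simeq\EvF{\ort A}$ and $\EvF{\ort A}'\simeq\EvF A$, which you rightly flag as the crux. Be careful which result you cite for them: the paper derives the reflexivity $\EvF A''\simeq\EvF A$ (Prop.~\ref{ann:refl}) \emph{from} the two separation propositions, so invoking that would be circular; what you actually need is the web-based computation of the dual (Prop.~\ref{ann:dual}: the compact-open dual of $\EvF A$ is exactly $\EvF{\ort A}$ with its finiteness topology, applied also to $\ort A$ together with $\dort A=A$), which is proved independently of both separation results, so your proof is non-circular. What the paper's compactness argument buys in exchange for its extra length is that it uses only the abstract compact-open topology and linear compactness --- the machinery it develops for linearly topologised duals in general --- rather than the finiteness-space-specific self-duality.
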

\begin{proof}
  First, we  linear algebra ensures  the result when the  dimension of
  $T'$       is       finite       (c.f.        Annex~\ref{ann:finit},
  Lem.\ref{lem:finitecollection}).   Second, the closed  subspace $T'$
  does not contain  $0$, so there is $K$ linearly  compact of $\EvF A$
  such  that $\ann(K)\cap  T'=\emptyset$.   We use  the closed  affine
  filter made of  $T_{F'}=\{x\in E\st \forall x'\in F',\,\<x',x\>=1\}$
  with  $F'$ ranging  over finite  collections  of $\EvF  A'$ and  the
  compactness  of $K$  to  build the  wanted  $x$.  \annex{Details  in
    Annex~\ref{ann:finit}, Prop.~\ref{ann:dualseparation}}
\end{proof}

Both separations  theorem, ensure the algebraic  isomorphism between a
linear finiteness space and its second dual.  This isomorphism is both
continuous  and  open,  as  linearly  bounded subspaces  of  the  dual
coincide  with equicontinuous  subspaces  (Prop.~\ref{prop:equic}). To
sum  up, the  reflexivity of  finiteness  spaces relies  on the  links
between   linearly   compactness,    closed   linearly   bounded   and
equicontinuity.\annex{Annex~\ref{ann:finit},
  Prop~\ref{ann:refl}-\ref{ann:dual}}
\begin{proposition}(Equicontinuous spaces)
  \label{prop:equic}   \refannex{equic}  Let   $A$  be   a  relational
  finiteness space. A  subspace $B'$ of $\EvF A'$  is linearly bounded
  if and only  if there is $J'\in \ort{\F(A)}$  such that $B'\subseteq
  \ann_{\EvF  A'}(V_{J'})$.  \annex{proof  in  Annex~\ref{ann:finit},
    Prop.~\ref{ann:equic}}
\end{proposition}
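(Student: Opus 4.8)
The plan is to reduce both sides of the equivalence to the single condition $\supp{B'}\in\ort{\F(A)}$, by exploiting the support characterisation of linear boundedness from Prop.~\ref{prop:lc,lb,fin}. To do so I would first identify the dual $\EvF A'$ with $\EvF{\ort A}$, algebraically and topologically. Algebraically this is immediate: a linear form $x'$ on $\EvF A$ is continuous iff its kernel is open, \ie\ contains some $V_{J'}$ with $J'\in\ort{\F(A)}$; since $e_a\in V_{J'}$ whenever $a\notin J'$, this forces $\supp{x'}\subseteq J'$, and conversely any family with $\supp{x'}\subseteq J'$ vanishes on $V_{J'}$. Hence $\EvF A'=\{x'\st\supp{x'}\in\ort{\F(A)}\}=\EvF{\ort A}$. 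For the topologies, by Tychonov (Th.~\ref{th:tychonov}) the subspace of vectors supported on a finitary $J\in\F(A)$ is linearly compact and its annihilator is exactly $V_{J}$, whereas any linearly compact $K$ has finitary support by Prop.~\ref{prop:lc,lb,fin}, so $\ann_{\EvF A'}(K)\supseteq V_{\supp K}$; thus the linearly compact open topology on $\EvF A'$ is the finiteness topology of $\EvF{\ort A}$, and ``$B'$ linearly bounded in $\EvF A'$'' means ``$B'$ linearly bounded in $\EvF{\ort A}$''.

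The computation at the heart of the proof is the shape of the equicontinuous sets: for $J'\in\ort{\F(A)}$,
\begin{equation*}
  \ann_{\EvF A'}(V_{J'})=\{x'\in\EvF A'\st\supp{x'}\subseteq J'\}.
\end{equation*}
Indeed, a form annihilates $V_{J'}$ iff it vanishes on every $e_a$ with $a\notin J'$ (all such $e_a$ lie in $V_{J'}$ since singletons are finitary), iff $\supp{x'}\subseteq J'$; conversely such forms vanish on all of $V_{J'}$. I would also use that $\ort{\F(A)}$ is downward closed: a subset of an antifinitary set is antifinitary, as intersecting it with any finitary $u$ only shrinks an already finite set.

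With these facts the two directions are routine. If $B'\subseteq\ann_{\EvF A'}(V_{J'})$, then $\supp{B'}\subseteq J'\in\ort{\F(A)}$, so $\supp{B'}$ is finitary in $\ort A$ by downward closure, and Prop.~\ref{prop:lc,lb,fin}, read in $\EvF{\ort A}=\EvF A'$, makes $B'$ linearly bounded. Conversely, if $B'$ is linearly bounded then the same proposition gives $\supp{B'}\in\F(\ort A)=\ort{\F(A)}$, and taking $J'=\supp{B'}$ yields $B'\subseteq\{x'\st\supp{x'}\subseteq J'\}=\ann_{\EvF A'}(V_{J'})$.

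The only genuine obstacle is the identification of the dual --- verifying that the compact open topology on $\EvF A'$ really coincides with the finiteness topology of $\EvF{\ort A}$, so that Prop.~\ref{prop:lc,lb,fin} may legitimately be applied to the dual space. Once that is secured, the equicontinuous-set computation together with the downward closure of antifinitariness makes everything elementary.
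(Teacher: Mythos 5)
Your proof is correct and follows essentially the same route as the paper's: both directions are reduced to the single condition $\supp{B'}\in\ort{\F(A)}$ via the support characterisation of linear boundedness (Prop.~\ref{prop:lc,lb,fin}) together with the computation $\ann_{\EvF A'}(V_{J'})=\{x'\st \supp{x'}\subseteq J'\}$ and downward closure of antifinitary sets. The only difference is organisational: you verify the identification $\EvF A'\cong\EvF{\ort A}$ (algebraically and topologically) inline, whereas the paper delegates it to a separate annex proposition (Prop.~\ref{ann:dual}) and leaves it implicit in its two-line argument --- your explicitness here is a point in your favour, since that identification is exactly what licenses applying Prop.~\ref{prop:lc,lb,fin} to a subspace of the dual.
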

Linear finiteness spaces satisfy other good properties (they admit
Schauder bases $(e_a)_{a_\in\supp A}$ and are
complete~\cite{ehrhard:fs}).
Although we do not know by now if the category of linearly topologised
spaces satisfying all these properties is stable under LL
constructions, we already know that the full subcategory of finiteness
spaces is  a model of  LL. 

\subsection{A model of MELL with MIX}
\label{section:model}

Both categories $\RF$ and $\LF$ constitute a model of classical linear
logic as  it has  been proved by  Ehrhard~\cite{ehrhard:fs}.  Although
linear finiteness  spaces are entirely determined  by their underlying
relational finiteness space~(Fig.\ref{fig:RF}),  we give the algebraic
description of the constructions  of LL in $\LF$~(Fig.\ref{fig:LF}) as
in~\cite{ehrhard:fs,ehrhard:fs-presheaves}.

\begin{figure}[ht!]
  \centering
  \begin{minipage}[c]{1\textwidth}
    \[
    \begin{array}{l@{\quad }|@{\quad }l}
      \ \text{\vip{Multiplicatives:}} & \ \text{\vip{Additives:}}\\
      \supp{A\parr B} = \supp{A\otimes B}=\supp A\times\supp B 
      & \supp{\with_i A_i}= \supp{\oplus_i A_i}=\textstyle\bigsqcup_i\supp{A_i}, \\\\
      \F(A\parr B)=\left\{
        \begin{array}{l}
          R\subseteq\supp A\times\supp B\text{ s.t.}\\
          \textstyle        \forall u\in\ort{\F(A)},\,\ R\cdot u\in\F(B)\\
          \textstyle       \forall v\in\ort{\F(B)},\,\trans R\cdot v\in\F(A)
        \end{array}
      \right\}
      &
      \F(\with_i A_i)=\left\{
        \begin{array}{l}
          \sqcup_i u_i\text{ s.t. }\\\forall i\in I,\,
          \textstyle  u_i\in\F(A_i)
        \end{array}
      \right\}
      \\\\
      \F(A\otimes B)=
      \left\{
        \begin{array}{l}
          R\subseteq\supp A\times\supp B\text{ s.t. }\\
          \quad\quad\ R\cdot \supp B\in \F(A)\\
          \quad\quad\trans R\cdot \supp A\in \F(B)
        \end{array}
      \right\}
      &
      \F(\oplus_i A_i)=
      \left\{
        \begin{array}{l}
          \sqcup_{j\in J} u_j\text{ s.t. }\\
          \multicolumn{1}{r}{\textstyle J\subseteq I\text{ finite }}\\
          \textstyle \forall j\in J,\, u_j\in\F(A_j)
        \end{array}
      \right\}
    \end{array}\]
    \hrule\vskip-.5em
    \begin{gather*}
      \text{\vip{Exponentials:}}\\
      \supp{\oc  A}=\supp{\wn  A}=\M{\supp  A}\ =\  \left\{\mu:\mathbb
        A\to \int\st
        \mu(a)>0 \text{ for finitely many }a\in A\right\}\\
      \F(\oc A)=\left\{M\subseteq\M{\supp
          A}\ \st\ \cup\{\supp{\mu},\,\mu\in M\}\in\F(A)\right\}\\
      \F(\wn      A)=\left\{M\subseteq\M{\supp      A}\st      \forall
        u\in\ort{\F(A)}, \M{u}\cap M \text{ finite}\right\}
    \end{gather*}
  \end{minipage}
  \caption{LL interpreted  in $\RF$; proofs are interpreted  as in the
    relational model. }\label{fig:RF}
\end{figure}


\vspace{-1em}
\begin{example}
  If  $\bool=1\oplus  1$,  then  we  get  back  to  Ex.~\ref{ex:bool}:
  $\sem{\bool}=\Field\oplus\Field\simeq \Field^2$.
  \begin{gather*}
     \supp{\wn \ort\bool}=\supp{\oc \bool}=\M{\true,\false}\simeq \int^2,\\
   \F(\oc\bool) =\left\{M\subseteq\M{\true,\false}\ |\ \cup_{\mu\in M
    }\supp{\mu}\in \F(\bool)\right\}= \Part{\int^2},\\
      \F(\wn\ort\bool)     =\left\{M\subseteq\int^2\    |\    \forall
    M'\subseteq\int^2,\,M\cap M'\text{ fin.}\right\}=\PartF{\int^2}.
  \end{gather*}
\end{example}
\vspace{-1em}

\begin{figure}[ht!]
  \begin{minipage}[c]{1\textwidth}
    \[\begin{array}{l@{\qquad\qquad}|@{\quad}l}
      \ \text{\vip{Multiplicatives:}} & \ \text{\vip{Additives:}}\\
      \EvF{\bot}=\EvF{1}= \Field & \EvF{\top}=\EvF{0}=\{0\}\\
      \EvF{A\parr B}= \EvF A \lparr \EvF B & \EvF{\with_{i\in I} A_i}=\times_{i\in I}\EvF A_i\\
      \EvF{A\otimes B}= \EvF A\ltens \EvF B &\EvF{\oplus_{i\in I} A_i}=\oplus_{i\in I}\EvF A_i\\
      \EvF{A\multimap B}= \Lc{\EvF A}{\EvF B} &
    \end{array}\]
    \hrule\vskip-1em
    \begin{gather*}
      \text{\vip{Exponentials:}}\\
      \EvF{\wn\ort    A}=\Polc{\EvF     A}    \qquad\qquad    \EvF{\oc
        A}=\left[\Polc{\EvF A}\right]'
    \end{gather*}
  \end{minipage}
  \caption{Interpretation of formul\ae\ in $\LF$, for proofs, see Annex~\ref{annex:interp}.}\label{fig:LF}
\end{figure}
\clearpage
\noindent Let us give some explanations on~Fig.\ref{fig:LF}:

{\it $(\multimap)$ continuous linear functions.}  We can
generalise  the topological  dual  framework and  endow  the space  of
continuous linear  functions $\Lc{\EvF  A}{\EvF B}$ with  the linearly
compact   open   topology.   It   is   generated  by   $\W(K,V)=\{f\st
f(K)\subseteq V\}$ where $K$ ranges over linearly compact subspaces of
$\EvF A$  and $V$ over fundamental  neighbourhoods of $0$  of $\EvF B$.\\
The linearly  topologised space  $\Lc{\EvF A}{\EvF B}$  coincides with
the  linear  finiteness  space  $\EvF{A\multimap  B}=\EvF{\ort  A\parr
  B}$. Indeed,  the canonical map  which maps each linear  function to
its matrix in the base induced by the web is a linear homeomorphism.

{\it $(\parr)$ hypocontinuous bilinear forms.}
As  noticed  by   Ehrhard,  the  evaluation  map:  $ev:\EvF{A\multimap
  B}\times  \EvF  A \to  \EvF  B$  is  separately continuous  but  not
continuous. That is why we need another the notion of hypocontinuity:\\
A   bilinear   form  $\phi:\EvF   A\times\EvF   B\to\Field$  is   said
\vip{hypocontinuous} iff for every  linearly compacts $K_A$ of
$\EvF A$ and $K_B$ of $\EvF  B$, there are two neighbourhoods $V_B$ of
$\EvF  B$  and $V_A$  of  $\EvF  A$  such that  $\phi(K_A,V_B)=0$  and
$\phi(V_A,K_B)=0$.   We denote  $\EvF A\lparr  \EvF B$,  the  space of
hypocontinuous bilinear  forms over $\EvF  A'\times\EvF B'$.  It  is a
linearly  topologised  space when  it  is  endowed  with the  linearly
compact          open          topology          generated          by
$\W(K'_A,K'_B)=\{\phi\st\phi(K'_A,K'_B)=0\}$ where $K'_{A}$ and $K'_B$
range  over linearly  compact subspaces  of  $\EvF A'$  and $\EvF  B'$
respectively.\\
The space  $\EvF A\lparr\EvF B$  is related to the inductive tensor
product~\cite{grot:tens} which was generalised to linearly topologised
space in~\cite{fischer:tens}.

{\it $(\tensor)$ complete tensor product.}
The  dual  $\EvF A\ltens\EvF  B$  of  $\EvF  A'\lparr\EvF B'$  is  the
completion  of the algebraic  tensor product  $\EvF A\otimes  \EvF B$.
Indeed, $\alpha(\EvF A\otimes\EvF B)$  is dense in $\EvF A\ltens\EvF
B$~\cite[Th 2.12]{fischer:tens} where
\[\begin{array}{rccc}
  \alpha:&  \EvF A\otimes \EvF B & \hookrightarrow &(\EvF A'\lparr \EvF B')'\\
 & x\otimes y &\mapsto& [x\ltens y:\,\phi\mapsto \phi(x,y)].
\end{array}\]
{\it $(\with)$ product.} Let $I$ be a set.
  The linear finiteness space $\EvF{\with_{i\in I} A_i}$ is the product of
  the $\EvF A_i$s, endowed with the product topology. 
  {\it   $(\oplus)$  direct   sum.}   The   linear   finiteness  space
  $\EvF{\oplus_{i\in  I} A_i}$ is  the coproduct  $\oplus_{i\in I}\EvF
  A_i$ (made  of finite linear  combinations of elements of  the $\EvF
  A_i$s), endowed with the topology induced by the product topology.





{\it $(\oc)$ through webs.}
The  comonadic  structure   $(\EvF{\oc  A},\epsilon,\delta)$  and  its
\vip{linear distribution}  $\kappa$, can be described  with respect to
the web base: for $x=\sum_{a\in\supp A} x_a e_A$ given in $\EvF A$, we
set   $x^\mu=\prod_{a\in\supp   a}x_a^{\mu(a)}$   and  we   take   $X=
\sum_{\mu\in\M{\supp A}} x_\mu e_\mu\in \EvF{\oc A}$ in
\[\begin{array}{c@{\quad\quad}c}
  \kappa :\  x\mapsto \sum_{\mu\in\M{\supp
      A}} x^\mu e_\mu &
  \epsilon:\    X  \mapsto
  \sum_{a\in\supp A}
  x_{[a]}e_a, \\\\
  \multicolumn{2}{c}{
  \delta   :\  X  \mapsto
  \sum_{M\in\M{\M{\supp A}}}\ \left(\sum_{\mu=\Sigma(M)} x_\mu\right)\ e_M.}
\end{array}\]
The exponentiation $\exp x=\kappa(x)$ of $x$ 
satisfies $\epsilon(\exp x)=x$ and $\delta(\exp x)=\exp{(\exp x)}$.

{\it  $(\wn)$  analytic   functions.}   The  linear  finiteness  space
$\EvF{\wn \ort A}$ is the dual  of $\EvF{\oc A}$.  However, there is a
more  algebraic  approach~\cite{ehrhard:fs-presheaves}  of the monoid
$\EvF{\wn \ort A}$.  A function $P$ is \vip{polynomial} whenever there
are  symmetric hypocontinous  $i$-linear forms\footnote{Hypocontinuity
  for  $i$-linear forms  is a   generalisation  of the
  bilinear           case.\annex{Annex~\ref{annex:interp},
    Def.~\ref{def:hypo}}} $\phi_i:\times^i\EvF{A}'\to\Field$ such that
$P(x)=\sum_{i=0}^n\phi_i(x,\dots,x)$.   The space  $\EvF{\wn  \ort A}$
coincides with  the completion of  polynomial functions over  $\EvF A$
endowed with the  linearly compact open topology\footnote{generated by
  $\W(K)=\{P\text{ polynomial function}\st P(K)=0\}$ with $K$ linearly
  compact.}.  We call the elements of $\EvF{\wn\ort A}$ \emph{analytic
  functions}.

{\it  $(\oc)$  distributions.}  Finally,  we  are  concerned with  the
\emph{Taylor expansion}  formula of Ehrhard~\cite{ehrhard:fs}.  Taking
into account that $\EvF{\oc A}$  is the dual space of $\Polc{\EvF A}$,
we can  establish a parallel with distributions.   For instance, $\exp
x$ sends  an analytic function  $F$ to its image  $\<\exp x,F\>=F(x)$,
hence it  corresponds to  the \emph{dirac mass}  at $x$.   Besides, in
$\LF$, there exists a sequence of \emph{projections}:
\vskip-1em
\begin{equation*}
\textstyle
\pi_n:  \sum_{\mu\in\M{\supp   A}}x_\mu  e_\mu\in  \EvF{\oc  A}\mapsto
\sum_{\sharp\mu=n}x_\mu e_\mu\in\EvF{\oc A},
\end{equation*}\vskip-0.4em
\noindent which are linear and
continuous   since   their   support   $\supp\pi_n=\left\{(\mu,\mu)\st
  \sharp\mu=n\right\}$   are  finitary.  The   vector  $x^n=\pi_n(\exp
x)=\sum_{\sharp \mu=n}x^\mu e_\mu$ of $\EvF{\oc A}$ is the convolution
of  $x$  iterated  $n$  times.  This distribution  sends  an  analytic
function  to a  homogeneous  polynomial  of degree  $n$,  that is  its
derivative at zero.  From $\exp x=\sum_{n=0}^\infty \frac{1}{n!} x^n$,
Ehrhard deduces the Taylor expansion formula:
\vskip-1.5em
\begin{equation}
   \label{eq:taylor}
 F(x)=\sum_{n=0}^\infty\frac{1}{n!}\< x^n,F\>.  
 \end{equation}

\begin{example}
$    \EvF{\oc \bool}
    =    \left\{z\in\Field^{\int^2}\st \supp z\in\Part{ \int^2}\right\}
    = \Ser{X_t,X_f}$\\
$    \EvF{\wn \ort\bool}
    = \left\{z\in \Field^{\int^2}\st \supp z\in\PartF{ \int^2}\right\} 
    = \Pol{X_t,X_f}$\\
$    {\EvF{\oc           \bool\multimap          \bool}=
      \EvF{\oc\bool\multimap      1\oplus      1}=\EvF{\wn\ort\bool}^2
      =\Pol{X_t,X_f}\times \Pol{X_t,X_f},}$\\
$    \EvF{\parr^n\wn\ort\bool}=\Pol{X_1,X_2,\dots,X_{2n-1},X_{2n}}$\\
$    \EvF{\otimes^n\oc\bool\multimap\bool}=\Pol{X_1,X_2,\dots,X_{2n-1},X_{2n}}^2$
\end{example}

\section{Totality and Barycentric lambda-calculus}
\label{section:total}

In the  present section, we  explore an algebraic version  of totality
spaces, where formul\ae\ are  interpreted as finiteness spaces with an
additional totality  structure.  Adapting Loader's  definition to this
algebraic  setting,  we define  a  general  concept of  \emph{totality
  finiteness space}: it is a  pair $\pair{\EvF A}\T$ where $\EvF A$ is
a linear  finiteness space and $\T$ is  a subset of $\EvF  A$ which is
equal  to  its   second  dual  for  a  duality   associated  with  the
\emph{polar} as defined  below. Actually, the finiteness space
interpreting  any formula coincides  with the  first component  of the
totality finiteness space interpreting this formula.

\noindent{\bf Notations:}
\begin{runitemize}
\item[$\ldir(T)$] is the direction of the affine
  space $T$ and if $x\in T$, $T=x+\ldir(T)$,
\item[$\adj f:$]$y^\ast\in F^\ast\mapsto [x^\ast\in E^\ast:x\mapsto
  \<y^\ast,f(x)\>]$ is the linear adjoint of 
  $f:E\to F$.
\item[$A$ and $B$] are relational finiteness spaces.
\end{runitemize}

\subsection{Totality finiteness spaces.}
The polar orthogonality is defined as follows:
\begin{equation*}
  \forall x\in\EvF A,\, x'\in\EvF A',\ x\perp^\bullet x'\iff \<x',x\>=1
\end{equation*}
The polar of a subset $\T$ of $\EvF A$ is the following closed affine subspace of
$\EvF A'$:
\begin{equation*}
  \pol \T=\{x'\in\EvF A'\st \forall x\in\T,\,\<x',x\>=1\}
\end{equation*}
This  set  is closed  since  $(x,x')\mapsto  \<x',x\>$  is linear  and
separately continuous on $\EvF{A}\times \EvF{A}'$ (let
$(x,x')\in  \EvF{A}\times \EvF{A}'$, $\{x\}$  is linearly  compact, so
$\ann(x)$ is open  in $\EvF A'$; $x'$ is a  linear continuous form, so
$\ker(x')$   is  open  in   $\EvF{A}$).   Notice   that,  up   to  the
homeomorphism between $\EvF{A}$ and $\EvF{A}''$, if $\T'$ is an affine
subspace of $\EvF{A}'$, $\pol{\T'}=\{x\in \EvF{A}\st\forall x'\in
\T',\,\<x',x\>=1\}$.

There is  a simple characterisation  of polar-closed  affine subspaces:
\begin{proposition}[Characterisation]\label{prop:charac}
  A subspace $\T$ of  $\EvF A$ is polar-closed iff it is the
  empty set, the space $\EvF A$, or a closed affine subspace that does
  not contain $0$.
\end{proposition}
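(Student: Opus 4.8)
The plan is to treat $\pol{\cdot}$ as one half of a Galois connection for the relation $\<x',x\>=1$, and to exploit that every polar is automatically a closed affine subspace. First I would record the elementary facts, valid for an arbitrary subset $\T$: the operation $\pol{\cdot}$ reverses inclusions, one always has $\T\subseteq\dpol{\T}$, and $\pol{\T}$ is a closed affine subspace of $\EvF A'$. Closedness is already observed in the text (the pairing is separately continuous), while affineness is immediate, since any barycentric combination of forms each sending $\T$ to $1$ still sends $\T$ to $1$. I would also note the dichotomy $0\in\pol{\T}\iff\T=\emptyset$, because $\<0,x\>=1$ is impossible.

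For the \emph{only if} direction I would apply these facts twice. Writing $S'=\pol{\T}$, which is a closed affine subspace of $\EvF A'$, its polar $\dpol{\T}=\pol{S'}$ is again closed affine, with $0\in\dpol{\T}\iff S'=\emptyset$. A short case split then shows that $\dpol{\T}$ always has one of the three listed shapes: it is $\EvF A$ when $\pol{\T}=\emptyset$ (in particular whenever $0\in\T$), it is $\emptyset$ when $\T=\emptyset$ (then $\pol\T=\EvF A'$ and $\pol{\EvF A'}=\emptyset$), and otherwise it is a closed affine subspace avoiding $0$, nonempty because $\T\subseteq\dpol\T$. Hence every polar-closed $\T=\dpol{\T}$ belongs to one of the three families.

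For the \emph{if} direction the two degenerate cases are direct computations: $\pol{\EvF A}=\emptyset$ (take $x=0$), so $\dpol{\EvF A}=\pol{\emptyset}=\EvF A$; dually $\dpol{\emptyset}=\pol{\EvF A'}=\emptyset$. The substantial case is a closed affine $\T$ with $0\notin\T$, where I must show $\dpol{\T}=\T$. The inclusion $\T\subseteq\dpol{\T}$ is the general fact above. For the converse, fix $x_0\in\T$ and put $D=\ldir(\T)$, a closed subspace since translation is a homeomorphism; then $\pol{\T}=\{x'\in\ann(D)\st\<x',x_0\>=1\}$, and $0\notin\T$ reads $x_0\notin D$. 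Separation (Prop.~\ref{lem:separation}) applied to $D$ and $x_0$ yields $z'\in\ann(D)$ with $\<z',x_0\>=1$, so $\pol{\T}\neq\emptyset$. Given $x\notin\T$, i.e. $x-x_0\notin D$, separation applied to $D$ and $x-x_0$ yields $y'\in\ann(D)$ with $\<y',x-x_0\>=1$. I would then look for a separating form inside the two-parameter family $\alpha z'+\beta y'\in\ann(D)$: imposing $\<\alpha z'+\beta y',x_0\>=1$ fixes $\alpha$ in terms of $\beta$, and a brief scalar computation shows the remaining value $\<\alpha z'+\beta y',x\>$ can be forced away from $1$ by a suitable choice of $\beta$ (at most one value of $\beta$ is forbidden, and $\Field$ is infinite). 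This exhibits an $x'\in\pol{\T}$ with $\<x',x\>\neq1$, so $x\notin\dpol{\T}$, giving $\dpol{\T}\subseteq\T$.

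The main obstacle is precisely this last step: producing a single continuous form that lies in $\pol{\T}$ (both killing $D$ and normalising $x_0$ to $1$) yet separates the given $x\notin\T$. The affine normalisation constrains the admissible directions to $\ann(D)\cap\ann(x_0)$, which is why one separation (giving $z'$) does not suffice and a second, transverse form $y'$, obtained by separating $x-x_0$ from $D$, is needed; combining them linearly and solving the resulting scalar condition is where Prop.~\ref{lem:separation} is genuinely used (twice). The only subtlety is to keep the scalar argument insensitive to the characteristic of $\Field$, which the freedom in choosing $\beta$ guarantees.
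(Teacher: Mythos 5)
Your proof is correct and takes essentially the same route as the paper: the substantial direction is handled by applying the Separation result (Prop.~\ref{lem:separation}) twice to the closed direction $D=\ldir(\T)$ --- once to a point of $\T$ (using $0\notin\T$, giving an element of $\pol\T$) and once to $x-x_0$ (using $x\notin\T$) --- and then taking a suitable linear combination of the two $D$-annihilating forms to produce an element of $\pol\T$ whose value at $x$ differs from $1$. The differences are only organizational: your one-parameter family $\alpha z'+\beta y'$ replaces the paper's case split on $\lambda=\<x'_0,z_0\>$, and you additionally spell out the (easy) forward direction via the Galois-connection facts, which the paper leaves implicit.
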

\begin{proof}
  If       $\T=\EvF{A}$,      then       $\pol{\T}=\emptyset$      and
  $\pol{\emptyset}=\EvF{A}=\T$.       If      $\T=\emptyset$,     then
  $\dpol{\T}=\emptyset$.  It  remains the  case where $\T$  is affine,
  closed  and $0\notin\T$.  The  inclusion $\T\subseteq  \dpol{\T}$ is
  straightforward.  Let  us prove the  contrapositive.  Let $x_0\notin
  \T$.   Let $z_0\in\T$  and $D=\ldir(\T)$  then  $\T=z_0+D$, $x_0\neq
  z_0$      and       $x_0-z_0\notin      D$.       By      separation
  Prop.~\ref{lem:separation},  there is  $x'_0\in \EvF{A}'$  such that
  $\<x'_0,x_0-z_0\>=1$  and $\forall  d\in D,\,\<x'_0,d\>=0$.   On the
  one     side,     if     $\lambda=\<x'_0,z_0\>\neq0$,     we     set
  $y'_0=\frac{1}{\lambda}x'_0$,  then $\<y'_0,  z_0\>=1$  and $\forall
  d\in  D,\   \<y'_0,d\>=\frac{1}{\lambda}\<x'_0,d\>=0$,  so  $y'_0\in
  \pol{\T}$.    However  $\<y'_0,x_0\>=\frac{1}{\lambda}  \<x_0,x'_0\>
  =\frac{1+\lambda}{\lambda}\neq 1$,  hence $x_0\notin \dpol{\T}$.  On
  the  other  side,  $\<x'_0,z_0\>=0$, then  $\<x'_0,x_0\>=1$.   Since
  $0\notin  \T$ and  by  separation Prop.~\ref{lem:separation},  there
  exists $x'_1\in  \EvF{A}' $ such that  $\<x'_1,z_0\>=1$ and $\forall
  d\in  D,\,  \<x'_1,d\>=0$,   hence  $x'_1\in  \pol{\T}$.   Moreover,
  $\<x'_1+x'_0,z_0\>=1$  and  $\forall  d\in  D,\,  \<x'_1+x'_0,d\>=0$
  hence  $x'_1+x'_0\in\pol\T$.  To  conclude,  either $\<x'_1,x_0\>=0$
  and  $x'_1\in\pol{\T}$  or $\<x'_0+x'_1,x_0\>=1+\<x_0,x'_1\>\neq  1$
  and $x'_0+x'_1\in\pol{\T}$, so in both cases, $x_0\notin\dpol\T$.

\qed
\end{proof}

From  this  characterisation, we  deduce  another  one  which will  be
useful to compute the constructions of the model.
\begin{corollary}\label{cor:aff}\refannex{aff}
  Let $\T$  be a subset  of $\EvF A$.  If  $\pol\T\neq\emptyset$, then
  $\dpol\T=\caff(\T)$.
\end{corollary}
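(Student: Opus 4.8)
The plan is to deduce the corollary from the characterisation of polar-closed sets in Proposition~\ref{prop:charac}, using only the formal properties of the polar. Recall that $(-)^\bullet$ is order-reversing and that $\T\subseteq\dpol\T$ for every subset $\T$. The heart of the argument is to establish the two facts that $\pol\T=\pol{(\caff(\T))}$ and that $\caff(\T)$ is itself polar-closed; granting these, taking the polar of the first equality yields $\dpol\T=\dpol{(\caff(\T))}=\caff(\T)$, which is exactly the claim.

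First I would prove $\pol\T=\pol{(\caff(\T))}$. The inclusion $\pol{(\caff(\T))}\subseteq\pol\T$ is immediate from $\T\subseteq\caff(\T)$ and the order-reversing character of the polar. For the converse, take any $x'\in\pol\T$; since $x'$ is a continuous linear form, the affine hyperplane $H_{x'}=\{x\in\EvF A\st\<x',x\>=1\}$ is closed and contains $\T$, hence it contains the smallest closed affine set $\caff(\T)$ containing $\T$. Therefore $\<x',x\>=1$ for all $x\in\caff(\T)$, that is $x'\in\pol{(\caff(\T))}$.

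It then remains to check that $\caff(\T)$ is polar-closed, and this is where the hypothesis $\pol\T\neq\emptyset$ is genuinely used. If $\T=\emptyset$ then $\caff(\T)=\emptyset$ is polar-closed and $\dpol\T=\emptyset=\caff(\T)$ directly. Otherwise, pick some $x'\in\pol\T$; then $\T\subseteq H_{x'}$, and since $\<x',0\>=0\neq1$ we have $0\notin H_{x'}$, whence $0\notin\caff(\T)$. Thus $\caff(\T)$ is a closed affine subspace not containing $0$, so Proposition~\ref{prop:charac} gives $\dpol{(\caff(\T))}=\caff(\T)$. Combining this with the first step, $\dpol\T=\dpol{(\caff(\T))}=\caff(\T)$. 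The only delicate point is ruling out $0\in\caff(\T)$; everything else is a formal manipulation of the Galois connection together with the characterisation of polar-closed sets.
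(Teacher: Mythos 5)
Your proposal is correct and follows essentially the same route as the paper: both arguments reduce to showing $\pol\T=\pol{(\caff(\T))}$ and then using an element of $\pol\T$ to rule out $0\in\caff(\T)$ so that Proposition~\ref{prop:charac} applies to $\caff(\T)$. The only cosmetic difference is that the paper obtains $\pol\T=\pol{(\caff(\T))}$ from the sandwich $\T\subseteq\caff(\T)\subseteq\dpol\T$ (using that $\dpol\T$ is closed affine), whereas you get it directly from the containment of $\caff(\T)$ in the closed affine hyperplanes $H_{x'}$ --- an observation the paper also uses to exclude $0$.
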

\begin{proof}
  The proof is based on:  $\T\subseteq\caff(\T)\subseteq\dpol\T$.   
  \annex{Proof in Annex~\ref{annex:totalite}}
\end{proof}

\begin{definition} 
  A  \vip{totality  finiteness   space}  is  a  pair  $\pair{\EvF
    A}{\T}$ made of a  linear finiteness space $\EvF{A}$ and a
  \vip{totality candidate}  $\T$, that is is a  polar closed subspace
  of $\EvF{A}$.

  Let  $\TF$ be  the category  whose objects  are  totality finiteness
  spaces  and whose  morphisms  are continuous  linear functions  that
  preserve the totality candidates.
\end{definition}

\subsection{A model of classical linear logic}
\label{totality}

To prove that  $\TF$ is a model of classical linear  logic, we use the
definitions and results  of~\cite[\S 4-5]{hyland:goll}.  Let $\GLF$ be
the  double glueing  of the  category $\LF$  along  the $\mathrm{HOM}$
functor.  The  objects of $\GLF$ are  triples $\triple{\EvF A}{U}{U'}$
where $U$  and $U'$ are subspaces  of respectively $\EvF  A$ and $\EvF
A'$.   A morphism between  $\triple{\EvF A}{U}{U'}$  and $\triple{\EvF
  B}{V}{V'}$ is a continuous  linear function $f:\EvF A\to\EvF B$ such
that $f(U)\subseteq V$ and $\adj f(V')\subseteq U'$, where $\adj
f$ 
is the  adjoint of  $f$.  The linear  exponential comonad of  $\LF$ is
equipped  with  a  well-behaved linear  distribution  $\kappa:x\in\EvF
A\mapsto \exp  x\in\EvF{\oc A}$ (it  is routine to check  the diagrams
satisfied
by $\kappa$).\\
The category $\TF$ is a  subcategory of $\GLF$
(considering triples  $\triple{\EvF  A}{\T}{\pol\T}$).  More
precisely,  it is  a \emph{Tight  orthogonality} with  respect  to the
polar orthogonality.  This orthogonality  is \emph{stable} since it is
\emph{focussed} with  respect to  the focus $\{1\}$:  $ x\perp^\bullet
x'\iff \<x',x\>=1\iff  x'(x)=1$.  Since $\LF$ is a  model of classical
linear  logic,  $\TF$  is  also  a model  of  classical  linear  logic
(c.f. \cite[Th.~$5.14$]{hyland:goll}).

The  constructions   inherited  from  $\LF$   as  described
in~\cite[\S 5.3]{hyland:goll} are:
\[\begin{array}{rcl@{\qquad}rcl}
  \multicolumn{6}{c}{\T(\ort  A)=\pol{\T(A)}}\\
 \multicolumn{3}{c}{\T(1)=\T(\bot)=\{1\},}& \multicolumn{3}{c}{\T(0)=\T(\top)= \{0\},}\\
 \T(A\otimes B)&=& \dpol{[\T(A)\otimes\T(B)]}, &
  \T(A\with B)&=&\T(A)\times\T(B),\\
\T(A\multimap B) &=& \pol{[\T(A)\otimes \pol{\T(B)}]},& \
T(A\oplus B)&=&\pol{[\T(A)\times\T(B)]},\\
\multicolumn{6}{c}{
  \T(\oc A)=\dpol{[\kappa(\T(A))]}=\dpol{\left\{\exp x\st x\in\T(A)\right\}}}
\end{array}
\]
Moreover,  we describe  every totality  candidate as  a  closed affine
subspace. This  algebraic description is  made possible thanks  to the
characterisation  of totality candidate  (Prop.~\ref{prop:charac}) and
to the algebraic setting.
\begin{proposition}
  \refannex{mult}
  \begin{eqnarray}
    \T(A\otimes       B)&=&\caff(\T(A)\otimes\T(B)),\nonumber\\
    \T(A\multimap B) &=& \left\{f\in\EvF A\st f(\T(A))\in\T(B)\right\}.\label{eq:fun}\\
    \T(A\oplus
    B)&=&\caff{(\T(A)\times\ker(\pol{\T(B)})\cup\ker(\pol{\T(A)})\times
      \T(B))}\\
    \T(\oc A)&=&\caff(\exp x\st x\in\T(A))\label{eq:exp},\\
    \T(\wn    A)&=&\left\{F\in\Polc{\EvF    A}\st   \forall
      x\in\T(A),\,F(x)=1\right\}\label{eq:wn},\\
    \T(\oc A\multimap B)&=&\left\{F\in\Polc{\EvF A,B}\st \forall
      x\in\T(A),\,F(x)\in\T(B)\right\}\label{eq:kleisli}.
  \end{eqnarray}
\end{proposition}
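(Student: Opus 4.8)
The plan is to read off every identity from the construction table of Section~\ref{totality} by combining Corollary~\ref{cor:aff} with a direct evaluation of the relevant polars along the appropriate pairing. The two identities whose right-hand side is an affine closure---$\T(A\otimes B)=\caff(\T(A)\otimes\T(B))$ and~\eqref{eq:exp}---both start from a double polar, since by definition $\T(A\otimes B)=\dpol{[\T(A)\otimes\T(B)]}$ and $\T(\oc A)=\dpol{\{\exp x\st x\in\T(A)\}}$. For these I would invoke Corollary~\ref{cor:aff}, which identifies a double polar with an affine closure as soon as the single polar is nonempty, so the whole task reduces to exhibiting one point in each polar. For the tensor, picking $x'\in\pol{\T(A)}$ and $y'\in\pol{\T(B)}$, the hypocontinuous form $x'\otimes y'$ pairs as $\<x',x\>\,\<y',y\>=1$ on every simple tensor $x\otimes y$ with $x\in\T(A)$, $y\in\T(B)$, hence lies in $\pol{[\T(A)\otimes\T(B)]}$; for $\oc A$ the constant analytic function $1$ pairs, through the dirac-mass identity $\<\exp x,F\>=F(x)$, to $1$ on every $\exp x$, witnessing that $\pol{\{\exp x\st x\in\T(A)\}}\neq\emptyset$. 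One must keep the degenerate alternatives of Prop.~\ref{prop:charac} in mind: when a candidate equals the whole space its polar is empty, and those boundary cases are dispatched directly.

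The identities whose right-hand side is again a polar follow by unfolding the definition along the pairing and then using that each candidate equals its own double polar. For~\eqref{eq:fun} we have $\T(A\multimap B)=\pol{[\T(A)\otimes\pol{\T(B)}]}$, so $f$ is total iff $\<f,x\otimes y'\>=1$ for all $x\in\T(A)$ and $y'\in\pol{\T(B)}$; rewriting the adjunction pairing as $\<f,x\otimes y'\>=\<y',f(x)\>$ turns this into $f(x)\in\dpol{\T(B)}=\T(B)$ for every $x\in\T(A)$, i.e.\ $f(\T(A))\subseteq\T(B)$. Equation~\eqref{eq:wn} is the same manoeuvre for $\wn$: dualising $\T(\oc A)=\caff(\{\exp x\st x\in\T(A)\})$ and using $\pol{(\caff D)}=\pol D$ for any $D$, the polar collapses through the dirac-mass identity to $\{F\in\Polc{\EvF A}\st F(x)=1\ \forall x\in\T(A)\}$. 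Equation~\eqref{eq:kleisli} combines the two: applying~\eqref{eq:fun} to $\oc A\multimap B$ gives $\{F\st F(\T(\oc A))\subseteq\T(B)\}$, and because $\T(\oc A)=\caff(\{\exp x\st x\in\T(A)\})$, $\T(B)$ is closed affine and $F$ is continuous linear, the inclusion $F(\T(\oc A))\subseteq\T(B)$ holds iff $F(\exp x)\in\T(B)$ for every $x\in\T(A)$; reading $F(\exp x)$ as the value at $x$ of the associated vector-valued analytic function yields the stated form.

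I expect the additive case to be the main obstacle. Here $\T(A\oplus B)$ is the polar, along the pairing $\<(x',y'),(u,v)\>=\<x',u\>+\<y',v\>$, of $\pol{\T(A)}\times\pol{\T(B)}$, so $(u,v)$ is total iff $\<x',u\>+\<y',v\>=1$ for all $x'\in\pol{\T(A)}$, $y'\in\pol{\T(B)}$. I would fix base points $x'_0\in\pol{\T(A)}$, $y'_0\in\pol{\T(B)}$ and vary separately inside the directions $\ldir\pol{\T(A)}$ and $\ldir\pol{\T(B)}$: subtracting the base equation forces $\<x',u\>$ and $\<y',v\>$ to be individually constant, so $u$ annihilates $\ldir\pol{\T(A)}$, $v$ annihilates $\ldir\pol{\T(B)}$, and $\<x'_0,u\>+\<y'_0,v\>=1$. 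This presents $\T(A\oplus B)$ as an affine subspace cut out by a single inhomogeneous equation over the product of the two annihilators, where $\ker(\pol{\T(A)})$ is precisely the $0$-level slice parallel to the $1$-level candidate $\T(A)$. The delicate point is to recognise this affine subspace as $\caff$ of the union of the two corner slices $\T(A)\times\ker(\pol{\T(B)})$ and $\ker(\pol{\T(A)})\times\T(B)$: one checks that these two slices lie in the solution set, and that their affine closure already realises the full inhomogeneous equation rather than a proper affine subspace of it, which is where the bookkeeping has to be carried out carefully.
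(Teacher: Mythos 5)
Your proposal is correct and follows essentially the same route as the paper: non-emptiness of the polar plus Corollary~\ref{cor:aff} for $\otimes$ and $\oc$ (with the same witnesses $x'\otimes y'$ and the constant function $1$), direct unfolding of the polar pairing for $\multimap$, $\wn$ and the Kleisli case, and the base-point/direction analysis for $\oplus$. The one step you leave as ``bookkeeping'' --- that the affine hull of the two corner slices exhausts the whole solution set --- is exactly what the paper's proof supplies, namely the explicit barycentric decomposition $z=\<u'_0,x\>\left(x/\<u'_0,x\>,0\right)+\<v'_0,y\>\left(0,y/\<v'_0,y\>\right)$ when both coefficients are nonzero, together with a direct case analysis when one of them vanishes.
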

\begin{proof} The proof  relies on showing that $\pol\T$  is not empty
  and to use Cor.~\ref{cor:aff}.

\annex{
Details in Annex~\ref{annex:totalite}, Prop~\ref{ann:mult}-\ref{ann:exp}.}
\end{proof}

The formula $A\Rightarrow B$ is interpreted as the totality finiteness
space that is made
of  morphisms  of  the   Kleisli  category.   The  totality  candidate
associated with  $A\Rightarrow B$ satisfies  the following fundamental
equation:
\begin{eqnarray*}
  \T(A\Rightarrow B)&=& \{F:\EvF A\to \EvF B\text{ analytic}\st \forall x\in
  \T(A),\,  F(x)\in \T(B)\}
\end{eqnarray*}
In other word, the totality we have defined is a logical relation.

\begin{example}\label{ex:Boolean}
  $    \Tot{\bool}=\{(x_t,y_t)\in\Field^2\st x_t+y_t=1\},\qquad
    \Tot{\ort\bool}=\{(1,1)\},$
  \[\begin{array}{l}
    \Tot{\oc \bool}=\{F\in\Ser{X_t,X_f}\st \forall (x_t,
    y_t),x_t+y_t=1\Rightarrow F(x_t, y_t)=1\},\\
    \Tot{\wn \ort\bool}=\{P\in\Pol{X_t,X_f}\st
    x_t+y_t=1\Rightarrow P(x_t, y_t)=1\},\\
    \Tot{\oc    \bool\multimap    \bool}=\{(P,Q)\in\Pol{X_t,X_f}^2\st
    x_t+y_t=1 \Rightarrow P(x_t, y_t)+Q(x_t,y_t)=1\}\\
    \multicolumn{1}{l}{
      \Tot{\parr^n\wn\ort\bool}=\{P\in\Pol{X_1,\dots, X_{2n}}\st \forall
      1\leq i\leq
      n,x_{2i-1}+x_{2i}=1}\\
    \multicolumn{1}{r}{
      \Rightarrow P(x_1, y_2,\dots,x_{2n-1},x_{2n})=1\}}\\
    \Tot{\otimes^n\oc\bool\multimap\bool}=\{(P,Q)\in\Pol{X_1,\dots,X_{2n}}^2\st
    P+Q-1 \in \Tot{\parr^n\wn\bool}\}
  \end{array}\]
\end{example}

Because totality candidate  are affine spaces, it is  natural to add a
barycentric construction to our proof  system and to interpret it by a
barycentric combination. Totality finiteness spaces constitute a model
of linear logic with MIX and barycentric sums.

\subsection{Simply typed boolean barycentric lambda-calculus}
\label{def:BarycentricCalculus}

We propose a $\lambda$-calculus in the style of Vaux's algebraic
$\lambda$-calculus~\cite{vaux:alglam2}.      In     the    barycentric
$\lambda$-calculus, sums of terms are  allowed.  It is well known that
the application  in $\lambda$-calculus is  linear in the  function but
not in its argument.  That is why we introduce two kinds of terms:
\emph{atomic terms} that  do not contain barycentric sums  but in the
argument  of an  application  and \emph{barycentric  terms} which  are
barycentric  sums of  atomic terms.  Moreover, we  add booleans  and a
conditional construction.

\subsubsection{Syntax.}

Let $\V$  be a countable set  of variables.
Atomic terms $\bb s$ and barycentric terms $\bb T$ are inductively defined
by
\begin{equation*}
  \begin{array}{rcl}
    \bb R,\bb S&::=& \sum_{i=1}^m a_i\,\bb s_i\qquad \text{ and } \left\{
      \begin{array}{l}
        \forall i\in\{1,\dots,m\}, a_i\in\Field\,,\\
        \sum_{i=1}^m a_i=1\,,
      \end{array}\right.  \\
    \bb s&::=& x\st \lambda x.\bb s  \st (\bb s)\bb T \st \true \st\false
    \st \ITE {\bb s}{\bb S} {\bb R}\qquad \text{ where }  \quad x\in\V 
  \end{array}
\end{equation*}
We  denote $\Lat$  the  collection  of atomic  terms  and $\Laff$  the
collection of barycentric terms.
We  quotient  all  these  sets  of terms  by  $\alpha$-conversion  and
associativity and commutativity of the sum.

\subsubsection{Types.}

The barycentric $\lambda$-calculus is simply typed with the usual type
system built  on $A$  and $A\Rightarrow B$  with the  restriction that
barycentric sums of atomic terms  are possible only if the latter have
the  same type.   Decomposing  $A\Rightarrow B$  with exponential  and
linear  map: $!A\multimap  B$,  the type  system  can be  reformulated
within linear logic:

\newcommand{\infer}[2]{\begin{array}{c} #1 \\ \hline #2 \end{array}} 
\[
\begin{array}{c @{\qquad\qquad} c}
\infer{x\in\V}{\Gamma ,x:A \vdash x:A}\quad(\mathrm{var})
&
\infer{x:A\in\Gamma\vdash  \bb   s  :B}{\Gamma\vdash   \lambda  x.\bb
 s:A\Rightarrow B}\quad (\mathrm{abs})
\\
\infer{\Gamma\vdash \bb  s\quad\Gamma\vdash \bb T}{\Gamma  \vdash (\bb
  s)\bb T}\quad (\mathrm{app})
&
\infer{\Gamma\vdash    \bb   s_i:A\quad    \sum\limits_{i=1}^n    a_i   \bb
  =1}{\Gamma\vdash \sum\limits_{i=1}^n a_i\, \bb s_i:A}\quad(\mathrm{sum})
\\
\infer{}{\Gamma\vdash\true:\bool}\quad(\mathrm{true})
&
\infer{}{\Gamma\vdash\false:\bool}\quad(\mathrm{false})
\\
\multicolumn{2}{c}{
  \infer{\Gamma\vdash   \bb   s:\bool\quad\Gamma\vdash\bb   S:\bool\quad
    \Gamma\vdash\bb R:\bool}{\Gamma\vdash \ITE{\bb s}{\bb S}{\bb R}}\quad(\mathrm{cond})}
\end{array}\]

\subsubsection{Semantics}

We interpret the barycentric $\lambda$-calculus in $\LF$ through a standard
translation of the $\lambda$-calculus in LL, extended to deal with the
barycentric and boolean features as follows:
\[\begin{array}{c@{\quad\qquad}c@{\qquad\quad}c}
\sem{\sum_{i=1}^n  a_i\,\bb s_i}^\Gamma  =\sum_{i=1}^n  a_i\, \sem{\bb
  s_i}^\Gamma&
 \sem{\true}^\Gamma=(1,0),&  \sem{\false}^\Gamma=(0,1),

\\\\ 
\multicolumn{3}{c}{   \sem{\ITE {\bb R} {\bb S} {\bb T}
  }^\Gamma=(\sem{\bb R}_t^\Gamma\sem{\bb S}_t^\Gamma+\sem{\bb R}_f^\Gamma\sem{\bb T}_t^\Gamma,
  \sem{\bb R}_t\sem{\bb S}_f^\Gamma+\sem{\bb R}_f^\Gamma\sem{\bb T}_f^\Gamma).
}
\end{array}\]
Notice
that since $\sem{\bool}=\Field^2$, the  semantics of each term $\bb S$
of type $\bool$ is given  by its two components $\sem{\bb S}=(\sem{\bb
  S}_t,\sem{\bb S}_f)$.

\begin{theorem}
  Totality  finiteness spaces  constitute a denotational  model of
  the barycentric $\lambda$-calculus.
\end{theorem}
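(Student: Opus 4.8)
The plan is to prove the statement in two stages: first that the interpretation is \emph{type-sound}, i.e.\ that the denotation of any derivable judgement lands in the totality candidate of its type, and second that $\sem{-}$ is invariant under the conversion rules of the calculus. Most of the pure fragment comes for free. Since $\TF$ is a model of classical linear logic, its co-Kleisli category for the exponential comonad $\oc$ is cartesian closed, so the rules $(\mathrm{var})$, $(\mathrm{abs})$, $(\mathrm{app})$ together with $\beta\eta$ are already soundly interpreted; the only extra content in $\TF$ (as opposed to $\LF$) is that denotations must respect totality, and here the essential tool is the fundamental equation~\eqref{eq:kleisli}, which exhibits $\T(A\Rightarrow B)=\T(\oc A\multimap B)$ as the \emph{logical relation} $\{F\text{ analytic}\st \forall x\in\T(A),\,F(x)\in\T(B)\}$.

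For type-soundness I would argue by induction on the derivation of $\Gamma\vdash\bb s:A$ with $\Gamma=x_1:A_1,\dots,x_k:A_k$, proving that for every tuple $(\el{y_1}{y_k})\in\T(A_1)\times\cdots\times\T(A_k)$ one has $\sem{\bb s}^\Gamma(\el{y_1}{y_k})\in\T(A)$, i.e.\ that $\sem{\bb s}^\Gamma$ is a morphism of the co-Kleisli category of $\TF$. The variable case is a projection; the abstraction case is exactly the content of~\eqref{eq:kleisli}, since $[x\mapsto\sem{\bb s}^{\Gamma,x:A}(\vec y,x)]$ sends $\T(A)$ into $\T(B)$ by the induction hypothesis and is analytic; and the application case is immediate, as $\sem{\bb s}^\Gamma(\vec y)\in\T(A\Rightarrow B)$ and $\sem{\bb T}^\Gamma(\vec y)\in\T(A)$ force $\sem{(\bb s)\bb T}^\Gamma(\vec y)\in\T(B)$ by the same equation.

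The three genuinely new cases rest on the algebraic characterisation of Prop.~\ref{prop:charac}, namely that every totality candidate is a closed affine subspace. For $(\mathrm{sum})$, the candidate $\T(A)$ is affine, hence stable under the barycentric combination $\sum_{i=1}^n a_i\,y_i$ whenever $\sum_i a_i=1$ and $y_i\in\T(A)$ -- this is precisely the side condition of the rule -- so $\sem{\sum_i a_i\,\bb s_i}^\Gamma=\sum_i a_i\,\sem{\bb s_i}^\Gamma\in\T(A)$. For $(\mathrm{true})$ and $(\mathrm{false})$ one checks directly that $(1,0)$ and $(0,1)$ satisfy $x_t+y_t=1$, so both lie in $\Tot\bool$ (Ex.~\ref{ex:Boolean}). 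For $(\mathrm{cond})$, assuming $\sem{\bb R}^\Gamma,\sem{\bb S}^\Gamma,\sem{\bb T}^\Gamma\in\Tot\bool$, the sum of the two components of $\sem{\ITE{\bb R}{\bb S}{\bb T}}^\Gamma$ is $R_t(S_t+S_f)+R_f(T_t+T_f)=R_t+R_f=1$, so the conditional again lands in $\Tot\bool$; since it is interpreted by a multilinear, hence analytic and continuous, map, it is a bona fide morphism of $\TF$.

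It remains to check invariance under conversion: $\beta\eta$-equality is validated by the cartesian closed structure, the conditional rules follow by substituting $\sem{\true}=(1,0)$ and $\sem{\false}=(0,1)$ into the defining formula (which collapses it to $\sem{\bb S}^\Gamma$, resp.\ $\sem{\bb T}^\Gamma$), and the sum laws hold because $\sem{-}$ factors through the quotient by associativity and commutativity of the barycentric sum. \textbf{The main obstacle} I would expect is the treatment of sums occurring in the argument position of an application. There the argument passes through the exponential $\oc$, where the exponentiation $\kappa$ is \emph{not} linear, so $\kappa(\sum_i a_i\,y_i)\neq\sum_i a_i\,\kappa(y_i)$ in general; one must therefore verify that the equational theory of the barycentric calculus never imposes distributivity of application over a sum in its argument. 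This is exactly the reason the syntax separates atomic from barycentric terms, and the delicate part of the proof is to confirm that every imposed equation relates terms whose denotations coincide despite this non-linearity.
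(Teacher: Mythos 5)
Your proof is correct and uses exactly the ingredients the paper sets up for this theorem (which it in fact states without an explicit proof): the cartesian closed co-Kleisli structure inherited from $\TF$ being a model of classical LL for the pure fragment, the affine characterisation of totality candidates (Prop.~\ref{prop:charac}) for the rule $(\mathrm{sum})$, the logical-relation equation~\eqref{eq:kleisli} for abstraction and application, and a direct computation for the booleans and the conditional. Your closing caveat about non-linearity of $\kappa$ in argument position is well taken but dissolves here, since the calculus is quotiented only by $\alpha$-conversion and associativity--commutativity of the sum, both of which the interpretation obviously respects.
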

Hence, the notion of totality we have defined is in line with the
notion  of  realisability  in   Logic  where  a  term  $\lambda  x.\bb
t:A\Rightarrow B$ is realisable iff  $\forall \bb s:A,\,\bb t[x\leftarrow \bb s]:B$.

\section{Towards completeness}
\label{section:comp}


We  focus  attention  on  closed  terms  of  type  $\bool^n\Rightarrow
\bool$. As we have  seen in Ex.~\ref{ex:Boolean}, terms of that
type are pairs of polynomials  $\mathbb
P=(\mathbb P_t,\mathbb  P_f)\in\Pol{X_1,\dots,X_{2n}}^2$ s.t.  for all
$(a_i)\in\Field^{2n}$      with      $a_{2i-1}+a_{2i}=1$,     $\mathbb
P_t(a_1,\dots, a_{2n})+\mathbb P_f(a_1,\dots,a_{2n})=1$. 

\begin{theorem}[Completeness]\label{th:PolBoolean}
  Every  total  function  of  $\T(\bool^n\Rightarrow  \bool)$  is  the
  interpretation of a term of the boolean barycentric calculus.
\end{theorem}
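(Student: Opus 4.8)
The plan is to identify the set $\mathcal R$ of pairs $\sem{\bb t}=(\mathbb P_t,\mathbb P_f)$ realised by terms $x_1\colon\bool,\dots,x_n\colon\bool\vdash\bb t\colon\bool$ with the totality candidate $\mathcal T:=\Tot{\bool^n\Rightarrow\bool}$. Write $R:=\Pol{X_1,\dots,X_{2n}}$, let $s_j:=X_{2j-1}+X_{2j}$, and let $\mathcal I:=(s_1-1,\dots,s_n-1)$ be the ideal of the total subspace. By Ex.~\ref{ex:Boolean}, $\mathcal T$ is the \emph{affine} subspace of $R^2$ cut out by $\mathbb P_t+\mathbb P_f-1\in\mathcal I$. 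The $(\mathrm{sum})$ rule makes $\mathcal R$ closed under barycentric (hence affine) combinations, so $\mathcal R$ is an affine subspace; the model theorem gives soundness $\mathcal R\subseteq\mathcal T$, and $\sem{\true}=(1,0)$ lies in both. Since two affine subspaces sharing a point coincide as soon as their directions do, everything reduces to proving $\ldir(\mathcal R)=\ldir(\mathcal T)$, where $\ldir(\mathcal T)=\{(\mathbb Q_t,\mathbb Q_f):\mathbb Q_t+\mathbb Q_f\in\mathcal I\}$ and soundness already gives $\ldir(\mathcal R)\subseteq\ldir(\mathcal T)$.

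The key structural fact I would establish first is that $\ldir(\mathcal R)$ is an $R$-submodule of $R^2$ for componentwise multiplication. Being the direction of an affine space it is a $\Field$-subspace, so it suffices to close it under multiplication by each variable, and this is exactly what the conditional supplies: for realisable $\bb a,\bb b$ one computes
\[ \sem{\ITE{x_k}{\bb a}{\false}}-\sem{\ITE{x_k}{\bb b}{\false}}=X_{2k-1}\,\bigl(\sem{\bb a}-\sem{\bb b}\bigr), \]
and symmetrically $\ITE{x_k}{\false}{\cdot}$ multiplies such a difference by $X_{2k}$. Hence $X_i\cdot\ldir(\mathcal R)\subseteq\ldir(\mathcal R)$ for every $i$, so $\ldir(\mathcal R)$ is stable under multiplication by every polynomial of $R$.

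Next I would exhibit module generators of $\ldir(\mathcal T)$ that lie in $\ldir(\mathcal R)$. The target is generated over $R$ by $(1,-1)$ together with the $(0,s_j-1)$: any $(\mathbb Q_t,\mathbb Q_f)$ with $\mathbb Q_t+\mathbb Q_f=\sum_j(s_j-1)h_j$ equals $\mathbb Q_t\cdot(1,-1)+\sum_j h_j\cdot(0,s_j-1)$. Both generators are realisable differences: $(1,-1)=\sem{\true}-\sem{\false}$, and $(0,s_j-1)=\sem{\ITE{x_j}{\false}{\false}}-\sem{\false}$, using $\sem{\ITE{x_j}{\false}{\false}}=(0,s_j)$. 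Combining this with the module property of the previous step yields $\ldir(\mathcal T)=\langle(1,-1),(0,s_j-1)\rangle_R\subseteq\ldir(\mathcal R)\subseteq\ldir(\mathcal T)$, whence $\ldir(\mathcal R)=\ldir(\mathcal T)$ and finally $\mathcal R=\mathcal T$, which is completeness.

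I expect the main obstacle to be precisely the second step: recognising that the off-total (ideal) directions must be matched \emph{as polynomials}, not merely on boolean inputs. Plain decision trees with $\true/\false$ leaves, together with barycentric sums, only produce the complete multilinear pairs $(\sum_\epsilon d_\epsilon m_\epsilon,\sum_\epsilon(1-d_\epsilon)m_\epsilon)$ and so cannot reach non-multilinear data such as $(X_1^2,\dots)$ nor directions like $(0,(s_1-1)X_1)$. The device that resolves this is the multiplication-by-$X_i$ computation above, which upgrades $\ldir(\mathcal R)$ from a bare $\Field$-subspace to a full $R$-module and thereby lets a \emph{finite} generating set control the whole infinite-dimensional candidate; isolating that observation is the crux, the remaining verifications (soundness from the model theorem, affineness from the $(\mathrm{sum})$ rule, and the two generator computations) being routine.
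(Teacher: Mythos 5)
Your argument is correct in its novel parts and reaches the theorem by a genuinely different route from the paper's. The paper proceeds by explicit term synthesis: Lemma~\ref{lem:PolProj} realises every pair $(1-P,P)$ by induction on the degree of $P$, Lemma~\ref{lem:PolVanish} writes $\mathbb P_t+\mathbb P_f-1$ as $\sum_i Q_i(X_{2i-1}+X_{2i}-1)$, and the proof then exhibits the single barycentric term $\sum_i(\ITE{\bb S_i}{\true}{\bb\Pi_i^+})+\bb T-n\,\true$ whose semantics is $\mathbb P$. You instead argue structurally: the set $\mathcal R$ of realised pairs is affine by the $(\mathrm{sum})$ rule, its direction is a module over $R=\Pol{X_1,\dots,X_{2n}}$ because $\ITE{x_k}{\cdot}{\false}$ and $\ITE{x_k}{\false}{\cdot}$ multiply differences by $X_{2k-1}$ and $X_{2k}$ respectively, and the module generators $(1,-1)$ and $(0,X_{2j-1}+X_{2j}-1)$ of the target direction are visibly realised differences. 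Your multiplication-by-$X_i$ observation is the structural content of the monomial induction inside Lemma~\ref{lem:PolProj}, and it cleanly isolates why finitely much syntax controls the whole candidate; all of your computations check out against the semantics of the conditional.

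One nontrivial ingredient is, however, assumed silently. You attribute to Ex.~\ref{ex:Boolean} the description of $\mathcal T$ as cut out by $\mathbb P_t+\mathbb P_f-1\in\mathcal I$, but the example only says that $\mathbb P_t+\mathbb P_f-1$ \emph{vanishes on the common zeroes} of the $X_{2i-1}+X_{2i}-1$; identifying that vanishing condition with membership in the ideal $\mathcal I=(X_{2i-1}+X_{2i}-1)$ is exactly Lemma~\ref{lem:PolVanish}, whose proof needs the change of variables, the iterated Euclidean division, and the infinitude of $\Field$. Without it your sandwich only yields $\left\langle(1,-1),\,(0,X_{2j-1}+X_{2j}-1)\right\rangle_R\subseteq\ldir(\mathcal R)\subseteq\ldir(\mathcal T)$, and the left-hand module has not been shown to equal $\ldir(\mathcal T)$, so the argument does not close. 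Your proof therefore shares with the paper its one genuinely algebraic lemma; once you state and prove (or cite) it, the remainder of your argument is complete and correct.
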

More  precisely, we  prove  that every  pair  of polynomials  $\mathbb
P\in\Tot{\otimes^n\oc\bool\multimap\bool}$      is     \emph{boolean},
i.e. there is  a term $\bb S$ of the boolean  calculus such that $\sem
{\bb S}=(\mathbb P_t,\mathbb P_f)$.

Let us  first introduce some  notations and intermediate  results.  
\begin{equation*}
  \begin{array}{rcl@{\qquad\quad}rcl}
    \lnot \bb S&=& \ITE{\bb S}{\false}{\true},
    &\sem{\lnot   \bb   S}&=&(\sem{\bb S}_f,\sem{\bb S}_t), \\
    {\bb  S}^+   &=&  \ITE{\bb  S}{\true}{\true},&   \sem{{\bb  S}^+}&=&
    (\sem{\bb S}_t+\sem{\bb S}_f,0),\\
    {\bb  S}^-   &=&  \ITE{\bb  S}{\false}{\false},&\sem{{\bb  S}^-}&=&
    (0,\sem{\bb S}_t+\sem{\bb S}_f),\\
    \bb\Pi_i  &=& \lambda  \bb  x_1,\dots,\bb x_n\cdot\bb  x_i,&\sem{\bb
      \Pi_i}&=&(X_{2i-1},X_{2i}).
  \end{array}
\end{equation*}

The following pairs of polynomials are boolean:
\begin{gather}
  (X_{2i},X_{2i-1})  = X_{2i}\cdot (1,0)+ X_{2i-1}\cdot (0,1) =\sem{\lnot {\bb\Pi}_i},\\
  (X_{2i-1}+X_{2i},0)  = X_{2i}\cdot (1,0)+ X_{2i-1}\cdot (1,0)= \sem{{\bb\Pi}_i^+},\label{eq:sum}\\
  (1-X_{2i},X_{2i}) = (1,0)+(X_{2i-1},X_{2i})-(X_{2i-1}+X_{2i},0)= 
  \sem{\true+{\bb\Pi}_i-{\bb\Pi}_i^+},\nonumber\label{eq:paire}\\
  (1-X_{2i-1},X_{2i-1})  
  = \sem{\true+\lnot {\bb\Pi}_i-{\bb\Pi}_i^+}.\label{eq:impaire}\nonumber
\end{gather}

We  prove first  a weak  version of  the completeness  theorem  where we
assume     that   $\mathbb    P_t+\mathbb     P_f-1$    vanishes
\emph{everywhere}.
\begin{lemma}[Affine pairs]\label{lem:PolProj}
  For   every  polynomial   $P\in\Pol{X_1,\dots,X_n}$,  the   pair  of
  polynomials $(1-P,P)$ is boolean.
\end{lemma}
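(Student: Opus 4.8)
The plan is to prove the equivalent statement that every polynomial $Q$ is \emph{realizable}, meaning the pair $(1-Q,Q)$ is boolean. Since $\lnot$ swaps the two components, $\sem{\lnot\bb S}=(\sem{\bb S}_f,\sem{\bb S}_t)$, realizability of $Q$ makes both $(1-Q,Q)$ and $(Q,1-Q)$ boolean, so the lemma for $P$ is exactly the assertion that $P$ is realizable. I will produce realizing \emph{barycentric} terms by induction, using crucially that in a conditional $\ITE{\bb s}{\bb S}{\bb R}$ only the test $\bb s$ need be atomic, whereas the branches $\bb S,\bb R$ may themselves be barycentric.

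First I handle monomials $M$ in the web variables by induction on their length. The empty monomial $M=1$ is realized by $\false$, as $(1-1,1)=(0,1)=\sem{\false}$. For the inductive step write $M=X_{2i-1}M'$ (the case $M=X_{2i}M'$ is symmetric, taking $\lnot\bb\Pi_i$ as test) and let $\bb m'$ be a barycentric term realizing $M'$. Using the atomic projection $\bb\Pi_i$, with $\sem{\bb\Pi_i}=(X_{2i-1},X_{2i})$, as the test, the semantics of the conditional give
\[
  \sem{\ITE{\bb\Pi_i}{\bb m'}{\true}}=\pars{X_{2i-1}+X_{2i}-M,\,M},
\]
whose first component exceeds the target $1-M$ by exactly $X_{2i-1}+X_{2i}-1$. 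Since $\sem{\true}-\sem{{\bb\Pi}_i^+}=(1-X_{2i-1}-X_{2i},0)$, the barycentric sum $\ITE{\bb\Pi_i}{\bb m'}{\true}+\true-{\bb\Pi}_i^+$ (coefficients $1,1,-1$ summing to $1$, all three summands atomic) realizes $(1-M,M)$. Taking $M'=1$ here recovers precisely the displayed equations for $(1-X_{2i-1},X_{2i-1})$ and $(1-X_{2i},X_{2i})$.

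It then remains to close realizability under the operations that assemble an arbitrary polynomial from its monomials. For a scalar $c$, if $\bb t$ realizes $Q$ then $c\,\bb t+(1-c)\,\true$ realizes $cQ$, because $c(1-Q,Q)+(1-c)(1,0)=(1-cQ,cQ)$. For a sum, if $\bb t_1,\bb t_2$ realize $Q_1,Q_2$ then $\bb t_1+\bb t_2-\true$, with coefficients $1,1,-1$, realizes $Q_1+Q_2$, since $(1-Q_1,Q_1)+(1-Q_2,Q_2)-(1,0)=(1-Q_1-Q_2,Q_1+Q_2)$; here $\true$, whose $f$-component is $0$, supplies the slack that keeps the coefficients summing to $1$. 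Writing $P=\sum_\alpha c_\alpha M_\alpha$ and applying these two closures to the already realized monomials $M_\alpha$ shows $P$ realizable, hence $(1-P,P)$ boolean.

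The one genuine obstacle is the atomicity constraint on the test of a conditional: a realizing term is typically a proper barycentric sum and so cannot be reused as a test. This is exactly why the induction is organised around the atomic projections $\bb\Pi_i$ (and their negations $\lnot\bb\Pi_i$) as tests while recursing into a branch, and why the resulting discrepancy $X_{2i-1}+X_{2i}\neq 1$ must be repaired by the affine correction $+\true-{\bb\Pi}_i^+$.
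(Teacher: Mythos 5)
Your proof is correct, and at its core it follows the same strategy as the paper's: realize monomials by induction, peeling off one variable at a time with a conditional, then pass to arbitrary polynomials by barycentric combination. The one substantive difference is where the affine correction $X_{2i-1}+X_{2i}-1$ is absorbed. The paper's proof writes the monomial step as $\sem{\ITE{\bb\Xi_1}{\true}{\bb\Xi_{d-1}}}$ where the test $\bb\Xi_1$ is a term with semantics $(1-X_1,X_1)$, i.e.\ a barycentric sum such as $\true+\lnot\bb\Pi_i-\bb\Pi_i^+$; strictly speaking this violates the grammar, which requires the test of a conditional to be atomic (this can be repaired a posteriori, since the semantics of $\ITE{\bb s}{\bb S}{\bb R}$ is linear in $\sem{\bb s}$, so a barycentric test can be distributed over the sum). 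You instead keep the atomic projection $\bb\Pi_i$ (or $\lnot\bb\Pi_i$) as the test, recurse in a branch, and apply the correction $+\,\true-\bb\Pi_i^+$ outside the conditional, which yields a syntactically legitimate term directly. Your closure under scalars ($c\,\bb t+(1-c)\,\true$) and sums ($\bb t_1+\bb t_2-\true$) is just a restatement of the paper's final barycentric combination $\left(1-\sum a_\mu\right)\true+\sum a_\mu\,\bb\Xi_\mu$. So: same decomposition and same key computations, but your version is the more careful one on the atomicity constraint, which is worth having explicit since the same issue recurs in the proof of the completeness theorem itself.
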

\begin{proof}
  We  use an  induction on  the degree  $d$ of  $P$.  If  $d=0$, there
  exists      $a\in\Field$       such      that      $P=a$,      hence
  $(1-P,P)=\sem{a\,\true+(1-a)\,\false}$.
    
  If $d>0$, let  us first study the monomial  case, i.e.  $X^\mu=\prod
  X_i^{\mu_i}$ with, say, $\mu_1\geq 1$.
  \begin{eqnarray*}
    \textstyle (1-X^\mu,X^\mu)&=&\textstyle(1-X_1)\cdot(1,0)+ X_1\cdot\left(1-X_1^{\mu_1-1}\prod_{i\neq
        1}X_i^{\mu_i},X_1^{\mu_1-1}\prod_{i\neq 1}X_i^{\mu_i}\right)\\
    &=&\textstyle\sem{ \ITE{\bb \Xi_1}{\true}{\bb
        \Xi_{d-1}}}=\sem{\bb\Xi_\mu}.
  \end{eqnarray*}
  where  the induction  hypothesis ensures the  existence of
  $\bb\Xi_1$   and    $\bb\Xi_{d-1}$   respectively   interpreted   by
  $X_1^{\mu_1-1}$ and $\prod_{i\neq 1}X_i^{\mu_i}$.
  Finally, if $P=\sum a_\mu\prod X_i^{\mu_i}$, then
  \begin{eqnarray*}\textstyle
    (1-P,P)&=&\textstyle\left(1-\sum a_\mu\right)(1,0)+(\sum  a_\mu)
    \left(1-X^\mu, X^\mu\right)\\
    &=& \textstyle \sem{\left(1-\sum a_\mu\right)\true +(\sum
      a_\mu)\, \bb\Xi_\mu}.
  \end{eqnarray*}
\qed
\end{proof}

The following algebraic lemma, allows us to reduce our problem to \emph{Affine
pairs}.
\begin{lemma}[Spanning polynomials]\label{lem:PolVanish}
  Let  $P\in\Pol{X_1,\dots,X_{2n}}$  where  $\Field$  is  an  infinite
  field. If $P$ vanishes  on the common zeroes of $X_{2i-1}+X_{2i}-1$,
  then for every $i$ in $\{1,\dots,n\}$ there is
  $Q_i\in\Pol{X_1,\dots,X_{2n}}$ such that
$    P=\sum_{i=1}^n Q_i(X_{2i-1}+X_{2i}-1)$.
\end{lemma}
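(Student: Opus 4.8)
The plan is to reduce the claim to a membership statement in the ideal $I=(f_1,\dots,f_n)$ generated by the linear polynomials $f_i=X_{2i-1}+X_{2i}-1$, exploiting that these generators are linear and involve pairwise disjoint pairs of variables. Concretely, I would perform the invertible affine change of coordinates that replaces, for each $i$, the pair $(X_{2i-1},X_{2i})$ by $(Z_i,Y_i)$ where $Z_i=X_{2i-1}$ and $Y_i=X_{2i-1}+X_{2i}-1=f_i$; inversely $X_{2i-1}=Z_i$ and $X_{2i}=Y_i-Z_i+1$. This is a bijective affine substitution, hence induces a ring isomorphism $\Field[X_1,\dots,X_{2n}]\cong\Field[Y_1,Z_1,\dots,Y_n,Z_n]$ under which $P$ becomes some $\widetilde P$ and each $f_i$ becomes exactly $Y_i$ (indeed $Z_i+(Y_i-Z_i+1)-1=Y_i$). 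The common zeroes of the $f_i$ are precisely the points with $Y_1=\dots=Y_n=0$; in the new coordinates this locus is the affine subspace freely parametrised by $(Z_1,\dots,Z_n)\in\Field^n$.

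Next I would translate the vanishing hypothesis. By assumption $P$ vanishes on that locus, so the polynomial $g(Z_1,\dots,Z_n):=\widetilde P(0,Z_1,\dots,0,Z_n)$ takes the value $0$ at every point of $\Field^n$. Since $\Field$ is infinite, a polynomial vanishing at all points of $\Field^n$ is the zero polynomial (the one-variable fact applied coordinate by coordinate), whence $g=0$ identically. Now $g$ is exactly the image of $\widetilde P$ under the quotient map $\Field[Y_1,Z_1,\dots,Y_n,Z_n]\to\Field[Y,Z]/(Y_1,\dots,Y_n)\cong\Field[Z_1,\dots,Z_n]$, so $g=0$ means $\widetilde P\in(Y_1,\dots,Y_n)$. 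Writing $\widetilde P=\sum_{i=1}^n\widetilde Q_i\,Y_i$ and substituting $Y_i=f_i$, $Z_i=X_{2i-1}$ back then yields $P=\sum_{i=1}^n Q_i\,(X_{2i-1}+X_{2i}-1)$, with $Q_i$ the pull-back of $\widetilde Q_i$, as required.

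The only genuinely delicate points --- and the reason the statement is not a triviality --- are that $\Field$ is not assumed algebraically closed (so the ordinary Nullstellensatz is unavailable) and that we must not confuse polynomials with the functions they induce. Both are handled by the two structural features used above: the defining equations are \emph{linear}, so $V(f_1,\dots,f_n)$ can be straightened to a coordinate subspace and ideal membership reduces to the transparent condition ``vanishes after setting every $Y_i=0$''; and $\Field$ is \emph{infinite}, which is precisely what is needed to pass from ``vanishes at every point of the parametrising $\Field^n$'' to ``is the zero polynomial''. I therefore expect the main step to check carefully to be this last passage --- the injectivity of the polynomial-to-function map over an infinite field --- the rest being a routine book-keeping of the linear change of variables.
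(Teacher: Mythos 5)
Your argument is correct and is essentially the paper's own proof: both perform a linear change of variables turning each $X_{2i-1}+X_{2i}-1$ into a new coordinate $Y_i$, observe that the hypothesis plus the infinitude of $\Field$ forces the polynomial obtained by setting all $Y_i=0$ to be identically zero, and conclude membership in the ideal $(Y_1,\dots,Y_n)$. The only (immaterial) difference is that the paper extracts the $Y$-free remainder by iterated Euclidean division in $\Field[Y_2,\dots,Y_{2n}][Y_1]$, whereas you invoke the quotient isomorphism $\Field[Y,Z]/(Y_1,\dots,Y_n)\cong\Field[Z]$ directly.
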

\begin{proof}
  Under the change of variable: 
  $\begin{array}{@{\quad}l@{\quad} l@{\quad}}
        Y_i=X_{2i-1}+X_{2i}-1,&
        Y_{i+n}= X_{2i},
      \end{array}
      $ for  $ i\in\{1,\dots,n\}$, we  denote by $P_Y$  the polynomial
      $P$.        Then       for       every       $(y_i)\in\Field^n$,
      $P_Y(0,\dots,0,y_{n+1},\dots,y_{2n})=0$.                    Since
      $\Pol{Y_2,\dots,Y_{2n}}$           is           a          ring,
      $\Pol{Y_2,\dots,Y_{2n}}[Y_1]$   is   an   euclidean  ring.   The
      euclidean  division of  $P_Y$ by  $Y_1$ gives  $P_Y=Q_1 Y_1+R_1$
      where           $R_1\in\Pol{Y_2,\dots,Y_{2n}}[Y_1]$          and
      $R_1'\in\Pol{Y_2,\dots,Y_{2n}}$.   By iterating this  process on
      $R_i$ for $i\in\{1\dots  n-1\}$, we get $P_Y=\sum_{i=1}^n Q_iY_i
      +R_n$ where $R_i\in\Pol{Y_1,\dots,Y_{2n}}$ and
      $R_n\in\Pol{Y_{n+1},\dots,Y_{2n}}$. For all $(y_i)\in\Field^n$, we have\\
      $P_Y(0,\dots,0,y_{n+1},\dots,y_{2n})=R_n(y_{n+1},\dots,y_{2n})=0$.
      Since  $\Field$ is infinite,  $R_n=0$ and  $P_Y=\sum_{i=1}^n Q_i
      Y_i$.   Change  the  variables,  we get,  $P=\sum_{i=1}^n  Q_{i}
      (X_{2i-1}+X_{2i}-1)$.

\qed
\end{proof}

\begin{proof}[Theorem~\ref{th:PolBoolean}]
  Let $\mathbb P\in \Tot{\otimes^n\oc\bool\multimap\bool}$.  Thanks to
  Ex.~\ref{ex:Boolean}, we  know that $\mathbb  P_t+\mathbb P_f-1$
  vanishes on every zero  of $\{X_{2i-1}+X_{2i}-1\st 1\leq i\leq n\}$.
  Then,  we  can  apply  Lem.~\ref{lem:PolVanish}: $ \mathbb P_t+\mathbb
  P_f-1=\sum_{i=1}^n Q_i (X_{2i-1}+X_{2i}-1)$ with 
  $Q_i\in\Pol{X_1,\dots,X_{2n}}$.  Thus,\\
$    (\mathbb P_t,\mathbb P_f)=\sum_{i=1}^n
    \left[(1-Q_i)\cdot(1,0)+Q_i\cdot(X_{2i-1}+X_{2i},0)\right]+(1-\mathbb
    P_f,\mathbb P_f)-n(1,0).$
  By Lem.~\ref{lem:PolProj},  there are  boolean terms $\bb  S_i$ and
  $\bb  T$  such  that  $(1-Q_i,Q_i)=\sem{\bb  S_i}$  and  $(1-\mathbb
  P_f,\mathbb P_f)=\sem{\bb  T}$.  We have  seen in Eq.~\eqref{eq:sum}
  that  $ (X_{2i-1}+X_{2i},0)=  \sem{\bb\Pi_i^+}$.   Finally, we  have
  found a term whose semantics is
  \begin{equation*}
    \mathbb         P=\sem{\sum_{i=1}^n         \left(        \ITE{\bb
          S_i}{\true}{\bb\Pi_i^+}\right) +\bb T -
      n\, \true}.
  \end{equation*}
\vskip-1.5em
\qed
\end{proof}



\begin{example}[Gustave and Por functions] Several pairs of
  polynomials can interpret the functions {$\por\in \T(\oc\bool\otimes\oc\bool)\multimap \bool$} and
  $\gus\in\T(\oc\bool\otimes\oc\bool\otimes\oc\bool)\multimap   \bool$
  satisfying:\vskip-.5em
  \[
  \begin{array}{l@{\qquad\qquad}l}
    \por(\true,0)=\true &\gus(\true,0,0)=\true\\
    \por(0,\true)=\true &\gus(0,\true,0)=\true\\
    \por(\false,\false)=\false &\gus(0,0,\true)=\true\\
    & \gus(\false,\false,\false)=\false
  \end{array}
  \]\vskip-.5em
  The pairs of polynomials with the smallest degree are respectively:\vskip-1.5em 
   \begin{align*}
    &\begin{array}{rccccl}
    \por:&\bool&\times&\bool&\Rightarrow& \bool\\
    &\left(x\right.&,&\left.y\right)           &\mapsto          &
    (x_t+y_f-x_ty_t,x_fy_f)
    \end{array} \\
   & \begin{array}{rccccccl}
    \gus:&\bool&\times&\bool&\times&\bool&\Rightarrow& \bool\\
    &\left(x\right.&,&y&,&\left.z\right)           &\mapsto          &
    (x_ty_f+y_tz_f+z_tx_f,x_ty_tz_t+x_fy_fz_f)
    \end{array}
  \end{align*}
\end{example}

\section*{Conclusion}
The first two sections of this article emphasise the algebraic and
topological description of the model of finiteness spaces. It is
important for three reasons. \\
First, the definition  of linear finiteness spaces is  so web oriented
that  the corresponding category  is not  obviously closed  by certain
operations such as quotients  or even subspaces. 
The purpose of a more algebraic approach is to get rid of bases. Our
description of reflexivity is a first step in this direction.\\
Second, our study  has unveiled an algebraic approach  to totality where
totality   candidates  admit  a   simple  algebraic   and  topological
characterisation;  such  a   characterisation  was  not  available  in
coherence spaces.  Moreover, although we needed to use linear logic to
describe algebraic totality,  we get a notion which  coincides with the
standard  intuitionist  hierarchy.   Notice  that the  non-stable  $\por$
function is total and  hence definable. Consequently, stability and
totality seems to be unrelated in this setting.\\
Finally, the partial completeness  result is proved using an algebraic
method.   This gives  a new  insight into  the analogy  between linear
algebra and linear  logic. We hope to get  completeness at other types
or fragments of linear logic. But the result can already be compared
with hypercoherences in which  completeness holds at first order thanks
to sequentiality~\cite{colson:hyper,longley:seq}.

\subsubsection{Acknowledgements}
I want to thank Pierre Hyvernat who was interested in the completeness
part of  this work. He  gave another proof  of the result  stated here
with an elegant combinatorial approach and found the total description
of the Por function.

\bibliographystyle{plain}
\bibliography{bib}
\vfill

\newpage

\appendix
\section{Finiteness spaces}
\label{ann:finit}
\setcounter{theorem}{\value{lc,lb,fin}}
\begin{proposition}\label{ann:lc,lb,fin}
Let $K$ be  a subspace of
  $\EvF A$.  There is an equivalence between
  \vspace{-0.5em}
  \begin{shortenumerate}
  \item\label{prop:lb} $K$ is linearly bounded,
  \item\label{prop:fin} $\supp  K=\cup\{\supp x\st x\in K\}$ is
    finitary,
  \item\label{prop:lc} the closure of $K$ is linearly compact.
  \end{shortenumerate}
\end{proposition}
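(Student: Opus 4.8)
The plan is to take the equivalence \eqref{prop:lb}$\Leftrightarrow$\eqref{prop:fin} verbatim from the main text (a linearly bounded $K$ splits as $(V_{J'}\cap K)\oplus C_0$ with $\dim C_0$ finite, whence $\supp{K}\cap J'=\supp{C_0}\cap J'$ is finite for every antifinitary $J'$, so $\supp{K}\in\F(A)$; conversely finitary support forces finite codimension) and to fill in the equivalence \eqref{prop:fin}$\Leftrightarrow$\eqref{prop:lc}, which is there only cited. The fact I would isolate first is that for every $x\in\EvF A$ and every $J'\in\ort{\F(A)}$ the set $\supp{x}\cap J'$ is finite, since $\supp{x}$ is finitary and $J'$ antifinitary. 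I would also record that $\F(A)$ is downward closed (a subset of a finitary set is finitary) and that each coordinate form $x\mapsto x_a$ is continuous, so that $\supp{\overline{K}}=\supp{K}$ for any subspace $K$.

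For \eqref{prop:fin}$\Rightarrow$\eqref{prop:lc}, write $u=\supp{K}\in\F(A)$. Downward closure gives $\Field^u\subseteq\EvF A$, and $\Field^u$ is closed in $\EvF A$ (if $x_a\neq0$ for some $a\notin u$, then $x+V_{\{a\}}$ misses $\Field^u$). The subspace topology $\Field^u$ inherits from $\EvF A$ is the product topology: for $J'\in\ort{\F(A)}$ one has $V_{J'}\cap\Field^u=\{x\in\Field^u\st\supp{x}\cap(J'\cap u)=\emptyset\}$ with $J'\cap u$ finite because $u$ is finitary, and conversely every finite subset of $u$ is antifinitary. By Tychonov (Th.~\ref{th:tychonov}), $\Field^u$ is linearly compact; since $K\subseteq\Field^u$, its closure $\overline{K}$ is a closed subspace of a linearly compact space, hence linearly compact.

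For \eqref{prop:lc}$\Rightarrow$\eqref{prop:fin} I would argue contrapositively. Suppose $u=\supp{K}=\supp{\overline{K}}$ is not finitary and pick $J'\in\ort{\F(A)}$ with $u\cap J'$ infinite. Consider the linear map $\pi_{J'}\colon\overline{K}\to\bigoplus_{a\in J'}\Field$, $x\mapsto(x_a)_{a\in J'}$; by the isolated fact each image is finitely supported, so it does land in the direct sum, and $\pi_{J'}$ is continuous for the discrete topology on the target because $\pi_{J'}^{-1}(0)=V_{J'}\cap\overline{K}$ is open. Set $W=\pi_{J'}(\overline{K})$. The union of the supports of the elements of $W$ is $u\cap J'$, which is infinite, while every element of $W$ is finitely supported, so $W$ is infinite dimensional. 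A continuous linear image of a linearly compact space is linearly compact, whereas by the converse of Tychonov (Th.~\ref{th:tychonovconverse}) a discrete linearly compact space is finite dimensional; hence $W$ is not linearly compact, and therefore neither is $\overline{K}$.

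The main obstacle is this last direction: linear compactness survives arbitrary products, so infinite support is by itself harmless in the product topology, and the argument must use that the finiteness topology singles out the antifinitary $J'$ and that quotienting by the open subspace $V_{J'}$ yields a discrete space. The leverage is the reduction to a discrete, infinite-dimensional image $W$, where discreteness together with the converse Tychonov theorem forces the contradiction; everything else is the routine verification that the relevant topologies and supports match up.
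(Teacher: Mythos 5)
Your proof is correct and follows essentially the same route as the paper's: Tychonov applied to $\Field^{\supp K}$ carrying the induced product topology for \eqref{prop:fin}$\Rightarrow$\eqref{prop:lc}, and for the remaining direction the observation that the continuous, discrete image of $\overline{K}$ obtained by killing $V_{J'}$ must be finite dimensional by the converse Tychonov theorem. The only cosmetic difference is that the paper runs this last step directly on a complement $D$ of $K\cap V_{J'}$ to conclude linear boundedness, whereas you run it contrapositively on the coordinate projection $\pi_{J'}$ to conclude finitary support; these are the same argument up to the isomorphism $\overline{K}/(\overline{K}\cap V_{J'})\cong\pi_{J'}(\overline{K})$ and the already established equivalence \eqref{prop:lb}$\Leftrightarrow$\eqref{prop:fin}.
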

\begin{proof}
  Let $K$ be linearly  bounded.  Let ${J'}\in\ort{\F(A)}$.  There is a
  finite  dimensional  subspace  $K_0$  of  $K$  such  that  $K=(K\cap
  V_{J'})\oplus  K_0$.   Since  the  dimension  of  $K_0$  is  finite,
  $\supp{K_{0}}$    is   finitary.     Besides,    $\supp   K    \cap
  {J'}=\supp{ K_0}\cap {J'}$ which is finite.  We proved that $\supp
  K\in\F(K)$.

  Let $K$ be  a subspace of $\EvF A$ such that  $\supp K$ is finitary.
  The topology induced by $\EvF  A$ on its subspace $\Field^{\supp K}$
  is the  product topology. Indeed, it is  generated by $\Field^{\supp
    K}\cap  V_{J'}=\left\{x\in\Field^{\supp   K  }  \st   \supp  x\cap
    J'=\emptyset\right\}$ with $J'\in\ort{\F(A)}$  and so $\supp K\cap
  J'$    is    finite.     By   Tychonov    Th.~\ref{th:tychonov},
  $\Field^{\supp K}$ is linearly compact.  Since the closure of $K$ is
  a closed subspace of $\Field^{\supp K}$, it is also linearly compact
  (cf.~\cite[\S 10.9(1)]{kothe:tvs1}).

  Let  $K$  be   a  linearly  compact  subspace  of   $\EvF  A$.   Let
  ${J'}\in\ort{\F(A)}$. Thanks to  the incomplete basis theorem, there
  is $D$ such  that $K= (K\cap V_{J'})\oplus D$. If  we endow $D$ with
  the discrete  topology and $K\cap V_{J'}$ with  the topology induced
  by  $K$,  then the  projections  on  each  subspace are  continuous.
  Therefore,  $D$ is  linearly  compact  as the  image  of a  linearly
  compact     space      by     a     continuous     function~\cite[\S
  10.9(2)]{kothe:tvs1}.  Besides, $D$ has  a finite dimension as every
  discrete linearly compact space.
  Thus $K$ is linearly bounded.
\end{proof}

\setcounter{theorem}{\value{separation}}
\begin{proposition}[Separation]
  \label{ann:separation}
  For every closed subspace $D$ of  $\EvF A$ and $x\notin D$, there is
  a continuous  linear form $x'\in\EvF A'$ such  that $\<x,x'\>=1$ and
  $\forall d\in D,\, \<d,x'\>=0$.
\end{proposition}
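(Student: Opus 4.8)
The plan is to exploit the fact that every fundamental neighbourhood $V_{J'}$ is itself a linear subspace, which lets closedness of $D$ produce a \emph{single} separating index and then reduces the statement to ordinary linear algebra on a quotient. First I would use that $D$ is closed and $x\notin D$: the set $\EvF A\setminus D$ is open, so by the definition of the finiteness topology there is $J'\in\ort{\F(A)}$ with $(x+V_{J'})\cap D=\emptyset$; equivalently, $x\notin D+V_{J'}$, using that $V_{J'}$ is a subspace so that $-V_{J'}=V_{J'}$.

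Next I would introduce the restriction map $q\colon\EvF A\to\Field^{(J')}$ sending $y$ to the family $(y_a)_{a\in J'}$, where $\Field^{(J')}$ denotes the finitely supported families indexed by $J'$. This is well defined precisely because $J'$ is antifinitary: for $y\in\EvF A$ we have $\supp y\in\F(A)$, hence $\supp y\cap J'$ is finite, so $q(y)$ has finite support. By construction $\ker q=\{y\st \supp y\cap J'=\emptyset\}=V_{J'}$. The key reduction is that $q(x)\notin q(D)$: were $q(x)=q(d)$ for some $d\in D$, then $x-d\in\ker q=V_{J'}$ and hence $d\in(x+V_{J'})\cap D$, contradicting the first step. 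Now $q(x)$ is simply a vector of the bare vector space $\Field^{(J')}$ lying outside the subspace $q(D)$, so topology-free linear algebra supplies a linear form $\psi$ on $\Field^{(J')}$ with $\psi|_{q(D)}=0$ and $\psi(q(x))=1$ (choose an algebraic complement of $q(D)\oplus\Field\,q(x)$ and set $\psi$ to vanish on $q(D)$ and that complement while taking value $1$ on $q(x)$).

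Finally I would set $x'=\psi\circ q$. Then $\<d,x'\>=\psi(q(d))=0$ for every $d\in D$ and $\<x,x'\>=\psi(q(x))=1$, which is the desired separation. Continuity of $x'$ is automatic: since $\ker_{\EvF A}(x')\supseteq\ker q=V_{J'}$ and $V_{J'}$ is open, the kernel of $x'$ is open, so $x'\in\EvF A'$. The main obstacle is the very first step, namely converting closedness of $D$ into one separating $V_{J'}$, together with the verification that $q$ genuinely lands in the finitely supported families; both rest on the finitary/antifinitary duality, and everything afterwards is routine linear algebra plus the standard observation that a linear form whose kernel contains some $V_{J'}$ is continuous.
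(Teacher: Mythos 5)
Your proof is correct and follows essentially the same route as the paper's: use the finiteness topology to find a single $J'\in\ort{\F(A)}$ with $x\notin D+V_{J'}$, construct the form by pure linear algebra (the paper invokes the incomplete basis theorem directly on $D+V_{J'}$ where you factor through the quotient $\Field^{(J')}$, a cosmetic difference), and deduce continuity from $V_{J'}\subseteq\ker(x')$. If anything, your first step is more careful than the paper's, which only records ``$x\notin V_{J'}$'' where the closedness of $D$ is really needed to get $(x+V_{J'})\cap D=\emptyset$.
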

\begin{proof}
  Since  $\cap  V_{J'}=\{0\}$,  there  is $J'\in\ort{\F(A)}$  such  that
  $x\notin  V_{J'}$. Thanks  to the  incomplete basis  theorem,  we can
  define a linear form $x'$ not necessarily continuous such that
  $\<x',x\>=1$   and   $\forall   y\in  D+V_{J'},\,\<x',y\>=0$.   Since
  $V_{J'}\subset \ker{x'}$, $x'$ is continuous.
\end{proof}

\setcounter{theorem}{\value{dualseparation}}
\begin{lemma}\label{lem:finitecollection}
  Let  $F'\subset \EvF  A'$ finite.   If $0\notin\aff(F')$  then there
  exists $x\in E$ such that $\forall x'\in F',\,\<x',x\>=1$.
\end{lemma}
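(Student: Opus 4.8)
The plan is to treat this as a purely finite-dimensional linear-algebra consistency problem; no topology is needed, since we only ask for \emph{some} $x$ in the underlying space $\EvF A$ (the space $E$ of the statement). Write $F'=\{x'_1,\dots,x'_k\}$ and consider the linear evaluation map
\[
  \Phi : \EvF A \to \Field^k, \qquad x \mapsto (\<x'_1,x\>,\dots,\<x'_k,x\>).
\]
Finding an $x$ with $\<x',x\>=1$ for all $x'\in F'$ is exactly asking that the all-ones vector $\mathbf 1=(1,\dots,1)$ lie in the image $W:=\Phi(\EvF A)$, a linear subspace of $\Field^k$. So the whole question reduces to a membership test in $\Field^k$.

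To decide membership I would compute the orthogonal complement of $W$ for the standard non-degenerate bilinear form on $\Field^k$. A vector $\mu=(\mu_1,\dots,\mu_k)$ annihilates $W$ iff $\sum_i \mu_i\<x'_i,x\>=\<\sum_i\mu_i x'_i,\,x\>=0$ for every $x\in\EvF A$, that is iff $\sum_i\mu_ix'_i=0$ in $\EvF A'$ (a linear form vanishing on all of $\EvF A$ is the zero form). Thus $W^\perp$ is precisely the space of linear relations among the members of $F'$. Since we are in $\Field^k$, the identity $(W^\perp)^\perp=W$ holds, so $\mathbf 1\in W$ iff $\sum_i\mu_i=0$ for every relation $\sum_i\mu_ix'_i=0$.

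It then remains to identify this relation condition with the hypothesis $0\notin\aff(F')$. Indeed $0\in\aff(F')$ means there are scalars $\lambda_i$ with $\sum_i\lambda_i=1$ and $\sum_i\lambda_ix'_i=0$; conversely any relation $\sum_i\mu_ix'_i=0$ with $c:=\sum_i\mu_i\neq0$ rescales to $\lambda_i=\mu_i/c$, exhibiting $0$ in the affine hull. Hence $0\notin\aff(F')$ is equivalent to the statement that $\sum_i\mu_i=0$ for every relation, which by the previous paragraph gives $\mathbf 1\in W$, and therefore the desired $x$.

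The argument is elementary; the only point requiring care is the finite-dimensional duality identity $(W^\perp)^\perp=W$ together with the correct description of $W^\perp$ as the relation space, which relies on functionals being separated by points of $\EvF A$. I would also stress that the family $F'$ need not be linearly independent: phrasing everything through the relation space handles repeated or dependent functionals uniformly, so there is no genuine obstacle here. This lemma is the finite base case that Prop.~\ref{dualseparation} promotes to the infinite-dimensional setting by means of the linearly compact $K$ and the closed affine filter $T_{F'}$.
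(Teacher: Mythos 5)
Your argument is correct, and it takes a genuinely different route from the paper's. You reduce the statement to a membership test for the all-ones vector in the image $W$ of the evaluation map $\Phi:\EvF A\to\Field^k$, identify $W^\perp$ with the space of linear relations among the $x'_i$ (valid because an element of $\EvF A'$ vanishing on all of $\EvF A$ is the zero form), and conclude from the finite-dimensional biduality $(W^\perp)^\perp=W$ once you observe that $0\notin\aff(F')$ says exactly that every relation $\sum_i\mu_i x'_i=0$ satisfies $\sum_i\mu_i=0$. The paper instead extracts a maximal linearly independent subfamily $x'_1,\dots,x'_n$ of $F'$, uses a small determinant computation to show that the hypothesis forces every remaining element of $F'$ to be an \emph{affine} combination of that subfamily, and then chooses $x$ dual to the independent family (so $\<x'_i,x\>=1$ for each $i$), which automatically yields $1$ on all affine combinations. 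Both proofs rest on the same elementary linear algebra, but yours dualizes the problem: it avoids choosing a basis and the determinant argument, and it treats repeated or dependent functionals uniformly through the relation space. The paper's version is slightly more explicit about where $x$ comes from, which matches how the lemma is then invoked inside the linearly compact subspace $K$ in Prop.~\ref{dualseparation}; your version transfers to that setting equally well, since it uses nothing beyond the $x'_i$ being genuine linear forms on the ambient space.
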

\begin{proof}
  Let  $x'_1,\dots,x'_n\in  F'$  be  a  maximal  linearly  independent
  collection.      We    first     prove     that    $F'     \subseteq
  \aff(x'_1,\dots,x'_n)$: Let $x'\in  F'$.  Since $(x'_i)$ is maximal,
  there  exist $\lambda_1,\dots,\lambda_n$ such  that $x'=\sum_{i=1}^n
  \lambda_i x'_i$.   Assume $0\notin  \aff(F')$, then the  equation in
  $\mu_i$:  $0=(1-\sum_{i=1}^n  \mu_{i})x'+\sum_{i=1}^n \mu_{i}x'_{i}$
  cannot  have any  solution.  By  replacing $x'$  by $\sum_i\lambda_i
  x'_i$  and   since  $(x_i)$  is  independent,  we   get  the  system
  $\left\{\left(1-\sum_{i=1}^n  \mu_i\right)\lambda_j  +\mu_j\st 1\leq
    j\leq n\right\}$  which has  no solution.  Hence,  its determinant
  $(-1)^{n-1}     (1-\sum_{i=1}^n\lambda_{i})$     is     null     and
  $\sum_{i=1}^n\lambda_{i}=1$.    Hence  $x'\in\aff(x'_1,\dots,x'_n)$.
  Since $x'_{1},\dots,x'_{n}$  are linearly independent,  there exists
  $x\in E$ such that for  any $1\leq i\leq n$, $\<x'_i,x\>=1$, hence $
  \forall x'\in F'\subseteq \aff(x'_{1},\dots,x'_{n}),\ \< x',x \>=1.$
\end{proof}

\setcounter{theorem}{\value{dualseparation}}
\begin{proposition}[Separation in the dual]
  \label{ann:dualseparation} 
  Let
  $T'$ be  a closed  affine subspace of  $\EvF A'$ such  that $0\notin
  T'$.   There   exists  $x\in  \EvF  A$  such   that  $\forall  x'\in
  T',\,\<x',x\>=1$.
\end{proposition}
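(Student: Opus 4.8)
The plan is to reduce the infinite-dimensional statement to the finite-dimensional case handled by Lem.~\ref{lem:finitecollection}, and then to pass to the limit by a linear-compactness argument. For every finite subset $F'\subseteq T'$ set $T_{F'}=\{x\in\EvF A\st \forall x'\in F',\,\<x',x\>=1\}$. Each $T_{F'}$ is an affine subspace of $\EvF A$, and it is closed because the forms $\<x',\cdot\>$ with $x'\in F'$ are continuous. Since $T'$ is affine we have $\aff(F')\subseteq T'$, so $0\notin\aff(F')$; hence Lem.~\ref{lem:finitecollection} guarantees $T_{F'}\neq\emptyset$. As $T_{F_1'}\cap T_{F_2'}=T_{F_1'\cup F_2'}$, the family $\{T_{F'}\st F'\subseteq T'\text{ finite}\}$ is a filter of closed affine subspaces, and any point of its intersection $\bigcap_{F'}T_{F'}=\{x\st \forall x'\in T',\,\<x',x\>=1\}$ is exactly the vector we are looking for. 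It thus remains to produce such a point.

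To apply linear compactness I first isolate a suitable compact set. Because $T'$ is closed and $0\notin T'$, its complement is an open neighbourhood of $0$ in $\EvF A'$; since the topology of the dual is generated by the subspaces $\ann(K)$ with $K$ ranging over the linearly compact subspaces of $\EvF A$ (Def.~\ref{def:dual}), one such $\ann(K)$ is contained in this complement, that is $\ann(K)\cap T'=\emptyset$. I claim that this $K$ already meets every $T_{F'}$. Writing $F'=\{x'_1,\dots,x'_m\}$, consider the continuous linear map $x\mapsto(\<x'_1,x\>,\dots,\<x'_m,x\>)$ from $K$ to $\Field^m$; its image is a linear subspace of $\Field^m$, and I must show it contains $(1,\dots,1)$. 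By finite-dimensional duality it suffices to check that every $\lambda=(\lambda_1,\dots,\lambda_m)$ annihilating the image satisfies $\sum_i\lambda_i=0$. But $\lambda$ annihilates the image exactly when $\sum_i\lambda_i x'_i\in\ann(K)$; if $\sum_i\lambda_i=c\neq0$, then $\tfrac1c\sum_i\lambda_i x'_i$ is an affine combination of the $x'_i$, hence lies in $\aff(F')\subseteq T'$, and since $\ann(K)$ is a subspace this would force a point of $T'$ into $\ann(K)$, contradicting $\ann(K)\cap T'=\emptyset$. Therefore $\sum_i\lambda_i=0$, the point $(1,\dots,1)$ belongs to the image, and $T_{F'}\cap K\neq\emptyset$.

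Finally, the filter $\{T_{F'}\}$ consists of affine closed subspaces each of which meets the linearly compact space $K$, so by the intersection property defining linear compactness $\bigl(\bigcap_{F'}T_{F'}\bigr)\cap K\neq\emptyset$. Any $x$ in this intersection satisfies $\<x',x\>=1$ for all $x'\in T'$, which proves the proposition. The delicate point is the choice of $K$ together with the verification that it meets every $T_{F'}$: this is where the hypotheses ``$T'$ closed'' and ``$0\notin T'$'' are genuinely used, the former to extract a fundamental neighbourhood $\ann(K)$ from the open complement of $T'$, the latter (through affineness) to turn the avoidance $\ann(K)\cap T'=\emptyset$ into the solvability of $\<x'_i,x\>=1$ inside $K$.
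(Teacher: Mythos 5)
Your proof is correct and follows the paper's argument essentially step for step: the same reduction to the sets $T_{F'}$ for finite $F'\subseteq T'$, the same choice of a linearly compact $K$ with $\ann(K)\cap T'=\emptyset$, and the same conclusion via the linear compactness of $K$. The only (harmless) variation is in the finite-dimensional step, where the paper applies Lem.~\ref{lem:finitecollection} to the restrictions $x'_{|K}$ while you instead show directly, by annihilator duality in $\Field^m$, that $(1,\dots,1)$ lies in the image of $K$ under $x\mapsto(\<x'_i,x\>)_i$ --- an equivalent and self-contained substitute.
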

\begin{proof}
  The closed subspace $T'$ does  not contain $0$, hence there exists a
  fundamental linear neighbourhood of $0$, that is $\ann(K)$ where $K$
  linearly compact in $\EvF  A$, such that $\ann(K)\cap T'=\emptyset$.
  For any  finite subspace  $F'\subseteq T'$, let  $T_{F'}=\{x\in E\st
  \forall  x'\in F',\,\<x',x\>=1\}$.   One has  $\aff(F')\subseteq T'$
  and         hence        $\aff(F')\cap\ann(K)=\emptyset$.         So
  $0\notin\aff\{x'_{|K}\st        x'\in        F'\}$.         Applying
  Lem.~\ref{lem:finitecollection}   in    $K'$,   for   every   finite
  $F'\subseteq T'$,  we get  $x\in K$ such  that $\forall  x'\in F',\,
  \<x',x\>=1$, so $x\in T_{F'}\cap K$.\\
  The collection  $(T_{F'})$ where $F'$ ranges over  finite subsets of
  $T'$  is a  filter  of closed  affine  subspaces of  $\EvF A$.   All
  elements  of   this  filter  meet  the   linearly  compact  subspace
  $K\subseteq \EvF A$. Thus
  $\textstyle K\cap \bigcap_{F'\subseteq_{\mathrm{fin}} T'} T_{F'}\neq\emptyset\,,$
  so there is $x\in K$ such that $\forall x'\in T',\ \< x',x\>=1$.
\end{proof}

\setcounter{theorem}{\value{equic}}
\begin{proposition}(Equicontinuous  spaces)\label{ann:equic}  Let $A$
  be a  relational finiteness space. A  subspace $B'$ of  $\EvF A'$ is
  linearly  bounded if and  only if  there is  $J'\in \ort{\F(A)}$  such that
  $B'\subseteq \ann_{\EvF A'}(V_{J'})$.
\end{proposition}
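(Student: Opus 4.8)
The plan is to reduce the statement to Proposition~\ref{prop:lc,lb,fin}, exploiting that the topological dual $\EvF A'$ is itself a linear finiteness space, namely $\EvF{\ort A}$, whose finitary supports are exactly the antifinitary parts $\ort{\F(A)}$. The pivotal computation I would carry out first is an explicit description of the annihilator: for $J'\in\ort{\F(A)}$, I claim that $\ann_{\EvF A'}(V_{J'})=\{x'\in\EvF A'\st\supp{x'}\subseteq J'\}$. One inclusion is immediate, since if $\supp{x'}\subseteq J'$ then for every $x\in V_{J'}$ the supports $\supp x$ and $\supp{x'}$ are disjoint, so $\<x',x\>=0$. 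For the reverse inclusion, if some $b\in\supp{x'}$ lay outside $J'$, then the basic vector $e_b$ would belong to $V_{J'}$ yet satisfy $\<x',e_b\>=x'_b\neq 0$, contradicting $x'\in\ann_{\EvF A'}(V_{J'})$.

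With this identification in hand, the equivalence becomes a purely support-theoretic statement. For the direction from linear boundedness to equicontinuity, I would apply Proposition~\ref{prop:lc,lb,fin} to the space $\EvF{\ort A}$: since $B'$ is a linearly bounded subspace of $\EvF A'=\EvF{\ort A}$, its total support $\supp{B'}=\bigcup\{\supp{x'}\st x'\in B'\}$ is finitary in $\ort A$, that is $\supp{B'}\in\ort{\F(A)}$. Choosing $J'=\supp{B'}$ then yields $B'\subseteq\ann_{\EvF A'}(V_{J'})$ directly from the annihilator computation above.

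For the converse, assume $B'\subseteq\ann_{\EvF A'}(V_{J'})$ for some $J'\in\ort{\F(A)}$. The annihilator computation gives $\supp{x'}\subseteq J'$ for every $x'\in B'$, whence $\supp{B'}\subseteq J'$. Since $\ort{\F(A)}$ is downward closed (a subset of an antifinitary part is antifinitary, because its intersection with any finitary part stays finite), we get $\supp{B'}\in\ort{\F(A)}$, i.e.\ $\supp{B'}$ is finitary in $\ort A$. Proposition~\ref{prop:lc,lb,fin}, again applied to $\ort A$, then shows that $B'$ is linearly bounded.

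The only genuinely delicate point is to ensure that Proposition~\ref{prop:lc,lb,fin} is applied with the correct topology: one must confirm that the finiteness topology on $\EvF{\ort A}$ coincides with the linearly compact open topology carried by the dual $\EvF A'$, so that \emph{linearly bounded} means the same thing on both sides. This amounts to checking that the generating annihilators $\ann_{\EvF A'}(K)$, with $K$ linearly compact, are cofinal among the $V_J$ with $J\in\F(A)$, which follows since each such $K$ is contained in the full product on its finitary support $\supp K$. This is precisely the reflexivity content already invoked in the excerpt, so no new work is required beyond recording that the fundamental neighbourhoods of $\EvF{\ort A}$ are the $V_J$ with $J\in\F(A)$.
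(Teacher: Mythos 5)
Your proof is correct and follows essentially the same route as the paper's: reduce linear boundedness of $B'$ to finitariness of $\supp{B'}$ via Proposition~\ref{prop:lc,lb,fin} applied to $\EvF{\ort A}$, and identify $\ann_{\EvF A'}(V_{J'})$ with $\{x'\in\EvF A'\st \supp{x'}\subseteq J'\}$. Your write-up merely makes explicit two points the paper leaves implicit --- the two inclusions of the annihilator computation and the fact that the linearly compact open topology on $\EvF A'$ coincides with the finiteness topology of $\EvF{\ort A}$, so that ``linearly bounded'' has the same meaning on both sides.
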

\begin{proof}
  First, $B'$  is linearly bounded if and  only if $\supp{B'}\in\F(A)$
  (see  Prop.~\ref{prop:lc,lb,fin}).   Second, $\exists  J'\in\F(A),\,
  B'\subseteq\ann(V_{J'})\iff\exists    J'\in\F(A),\,\supp{B'}\subseteq
  J' \iff \supp{B'}\in\F(A)$.
\end{proof}

\begin{proposition}[Reflexivity]
  \label{ann:refl}
 The map  $\iota:\EvF
  A\to\EvF A''$ defined below is a topological isomorphism.
  \begin{equation*}
    \forall x\in\EvF A,\, \iota(x):x'\in\EvF A'\mapsto \<x',x\>=x'(x).
  \end{equation*}
\end{proposition}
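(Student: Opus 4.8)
The plan is to treat $\iota$ first as an algebraic isomorphism and then as a homeomorphism. Linearity is immediate from bilinearity of the pairing. For injectivity, if $x\neq 0$ then $\{0\}$ is a closed subspace not containing $x$, so Separation (Prop.~\ref{ann:separation}) produces a continuous form $x'$ with $\<x',x\>=1$, whence $\iota(x)\neq 0$. The substantial point is surjectivity. Given a nonzero $\xi\in\EvF A''$, the set $T'=\{x'\in\EvF A'\st \xi(x')=1\}$ is a closed affine subspace of $\EvF A'$ (closed because $\xi$ is continuous and $\Field$ is discrete, so $\xi^{-1}(\{1\})$ is clopen) with $0\notin T'$. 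Separation in the dual (Prop.~\ref{ann:dualseparation}) then yields $x\in\EvF A$ with $\<x',x\>=1$ for every $x'\in T'$. A short linear-algebra argument shows that $\iota(x)$ and $\xi$, being linear forms that both take the value $1$ on the entire affine hyperplane $T'$, necessarily coincide; the case $\xi=0$ is covered by $x=0$. I expect this surjectivity step, which rests on the dual separation theorem, to be the main obstacle.

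It remains to check that $\iota$ is bicontinuous, the decisive tool being the coincidence of linear boundedness and equicontinuity in the dual (Prop.~\ref{ann:equic}) together with the characterisation Prop.~\ref{ann:lc,lb,fin}. For continuity, a basic neighbourhood of $0$ in $\EvF A''$ has the form $\ann_{\EvF A''}(K')$ with $K'$ linearly compact in $\EvF A'$. Since a linearly compact subspace is closed and linearly bounded, Prop.~\ref{ann:equic} provides $J'\in\ort{\F(A)}$ with $K'\subseteq\ann_{\EvF A'}(V_{J'})$; then every $x'\in K'$ annihilates $V_{J'}$, so $\<x',x\>=0$ for all $x\in V_{J'}$, that is $\iota(V_{J'})\subseteq\ann_{\EvF A''}(K')$. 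Hence $\iota$ is continuous.

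For openness I would fix $J'\in\ort{\F(A)}$ and look for a linearly compact $K'$ with $\ann_{\EvF A''}(K')\subseteq\iota(V_{J'})$. The natural candidate is $K'=\ann_{\EvF A'}(V_{J'})$: it is closed, being an annihilator, and linearly bounded by Prop.~\ref{ann:equic}, hence linearly compact by Prop.~\ref{ann:lc,lb,fin}. If $\xi=\iota(x)$ lies in $\ann_{\EvF A''}(K')$, then $\<x',x\>=0$ for all $x'\in\ann_{\EvF A'}(V_{J'})$, so $x$ belongs to the double annihilator $\ann_{\EvF A}(\ann_{\EvF A'}(V_{J'}))$. Because $V_{J'}$ is a closed subspace, Separation (Prop.~\ref{ann:separation}) gives the bipolar identity $\ann_{\EvF A}(\ann_{\EvF A'}(V_{J'}))=V_{J'}$ (any $x\notin V_{J'}$ is separated from it by a form that annihilates $V_{J'}$), so $x\in V_{J'}$ and $\xi\in\iota(V_{J'})$. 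This establishes openness and completes the argument.
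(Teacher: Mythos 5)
Your proof is correct and follows essentially the same route as the paper: injectivity from Separation, surjectivity by applying dual separation to the closed affine hyperplane $\xi^{-1}(1)$, continuity via the equicontinuity characterisation (Prop.~\ref{ann:equic}), and openness via the linear compactness of $\ann_{\EvF A'}(V_{J'})$. You are in fact slightly more explicit than the paper on the bipolar identity $\ann_{\EvF A}(\ann_{\EvF A'}(V_{J'}))=V_{J'}$ needed for openness, which the paper leaves implicit.
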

\begin{proof}
  If $x\neq  0$, then there  is $J\in \ort{\F(A)}$ such  that $x\notin
  V_J$.  By separation Prop.~\ref{lem:separation}, there is
  $x'\in\EvF A'$  such that $\<x',x\>=1$, hence  $\iota$ is injective.
  Let $x''\in \EvF A''$ such that $x''\neq 0$.  Let $x'\in\EvF A$ such
  that  $\<x'',x'\>=1$.    Thanks  to   separation   in  the  dual
  Prop.~\ref{dualseparation},  there is  $x\in\EvF A$  such that
  $\<x',x\>=1$    and     for    all    $y'\in\ker_{\EvF    A'}(x'')$,
  $\<y',x\>=0$. Hence $\iota(x)$ and $x''$ coincide on both $x'$ and
  $\ker(x'')$.

  If   $J'\in\ort{\F(A)}$,    then   the   support   $\supp{\ann_{\EvF
      A'}(V_{J'})}=J'$       is       in       $\ort{\F(A)}$.       By
  Prop.~\ref{prop:lc,lb,fin},   $K_{J'}=\ann_{\EvF   A'}(V_{J'})$   is
  linearly  compact in $\EvF  A'$ and  $\ann_{\EvF A''}(K_{J'})$  is a
  linear neighbourhood of $0$ in $\EvF A''$. Conversely, let $K'$ be a
  linearly     compact     subspace      of     $\EvF     A'$.      By
  Propositions~\ref{prop:equic}  and~\ref{prop:lc,lb,fin}, $\ker_{\EvF
    A}(K')$ is open in $\EvF A$.
\end{proof}

We next show that the Topological dual construction
of linear finiteness spaces coincides with the orthogonal construction
of relational finiteness spaces.
\begin{proposition}\label{ann:dual}
  The linear  finiteness space  $\EvF A '$  endowed with  the linearly
  compact open topology is  isomorphic to the linear finiteness spaces
  $\EvF{\ort A}$.
\end{proposition}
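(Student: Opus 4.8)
The plan is to exhibit an explicit linear homeomorphism $\Phi:\EvF{\ort A}\to\EvF A'$ sending a family $x'\in\EvF{\ort A}$ to the linear form $\phi_{x'}:x\mapsto\<x',x\>$. First I would check that $\phi_{x'}$ is well defined and continuous: since $\supp{x'}\in\ort{\F(A)}$ and $\supp x\in\F(A)$, the pairing $\<x',x\>=\sum_{a\in\supp x\cap\supp{x'}}x'_a x_a$ is a finite sum, and the inclusion $V_{\supp{x'}}\subseteq\ker\phi_{x'}$ of a fundamental neighbourhood in the kernel gives continuity. Linearity of $\Phi$ is immediate, and injectivity follows from $\phi_{x'}(e_a)=x'_a$, so that $\phi_{x'}=0$ forces $x'=0$.

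Next I would prove surjectivity. Given a continuous form $f\in\EvF A'$, set $x'_a=f(e_a)$. Continuity provides $J'\in\ort{\F(A)}$ with $V_{J'}\subseteq\ker f$; as $a\notin J'$ forces $e_a\in V_{J'}$ and hence $x'_a=0$, we obtain $\supp{x'}\subseteq J'$, so $\supp{x'}\in\ort{\F(A)}$ because $\ort{\F(A)}$ is downward closed. To identify $f$ with $\phi_{x'}$, I would write any $x\in\EvF A$ as $x=\sum_{a\in\supp x\cap J'}x_a e_a+r$ with $r$ of support disjoint from $J'$, so that $r\in V_{J'}\subseteq\ker f$; applying $f$ and using that the displayed sum is finite yields $f(x)=\sum_{a\in\supp x\cap J'}x_a x'_a=\<x',x\>$. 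Thus $\Phi$ is an algebraic isomorphism.

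Finally I would match the two fundamental systems of neighbourhoods of $0$. For $J\in\F(A)$ the subspace $\{x\in\EvF A\st\supp x\subseteq J\}$ is the full product $\Field^J$ (every subset of $J$ being finitary) and is linearly compact by Tychonov Th.~\ref{th:tychonov}; a direct computation on basic vectors gives $\ann_{\EvF A'}(\Field^J)=V_J$, so each basic open $V_J$ of the finiteness topology on $\EvF{\ort A}$ is open for the linearly compact open topology. Conversely, for $K$ linearly compact we have $\supp K\in\F(A)$ by Prop.~\ref{prop:lc,lb,fin}, whence $K\subseteq\Field^{\supp K}$ and $V_{\supp K}=\ann_{\EvF A'}(\Field^{\supp K})\subseteq\ann_{\EvF A'}(K)$; since a subspace containing a fundamental neighbourhood is open, each subbasic $\ann_{\EvF A'}(K)$ is open in the finiteness topology. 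Hence the two topologies coincide and $\Phi$ is a topological isomorphism.

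The main obstacle I expect is the surjectivity step, namely recovering the value of $f$ on an arbitrary element of $\EvF A$ from its values $f(e_a)$ on basic vectors, since a general element is a \emph{finitary} rather than a finite combination of the $e_a$; the decomposition modulo $V_{J'}$ above is exactly where continuity of $f$ (equivalently, the existence of an open neighbourhood inside $\ker f$) is used. Everything else reduces, through Prop.~\ref{prop:lc,lb,fin}, to the bookkeeping identity $\ann_{\EvF A'}(\Field^J)=V_J$.
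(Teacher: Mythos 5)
Your proof is correct and follows essentially the same route as the paper: identify continuous linear forms on $\EvF A$ with antifinitary families via the pairing, then show the two fundamental systems of neighbourhoods of $0$ match using Prop.~\ref{prop:lc,lb,fin} and Tychonov. The only difference is one of detail: you spell out the surjectivity step (recovering $f$ from its values $f(e_a)$ by decomposing $x$ modulo $V_{J'}$), which the paper compresses into the assertion that continuity of $x'$ is equivalent to $\supp x'\subseteq J'$ for some $J'\in\ort{\F(A)}$.
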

\begin{proof}
  Let  $x'\in   \Field^{\supp  A}$,  then   $x'\in\EvF{A}^\ast$  where
  $\<x',x\>=\sum_{a\in\supp  A}x'_a\,x_a$.    We  have  the  following
  equivalence:   $x'$  is  continuous   if  and   only  if   there  is
  $J'\in\ort{\F(A)}$     such    that     $\forall     x\in    V_{J'},
  \<x',x\>=\sum_{a\in\supp A}x'_ax_a=0$, that is $\supp x'\subset J'$.
  Consequently,  we   have  that   $x'\in\EvF{A}'$  if  and   only  if
  $x'\in\EvF{\ort A}$.

  Let $J\in\F(A)$.   The subspace $\Field^J$ of  $\EvF{A}$ is linearly
  compact    thanks   to    Prop.~\ref{prop:lc,lb,fin}.    Hence
  $\ann(\Field^J)=\{x'\st  \supp{ x'}  \in  \ort{\F(A)},\,\supp x'\cap
  J=\emptyset\}$  is open in  $\EvF{A}'$.  Conversely,  if $K\subseteq
  \EvF   A$  is  linearly   compact,  then   $\supp  K$   is  finitary
  (Prop.~\ref{prop:lc,lb,fin})  and  $V_{\supp  K}$ is  a  fundamental
  linear neighbourhood of $0$. Moreover, $V_{\supp K}\subseteq\ann(K)$
  hence  $\ann(K)$ is  open  in  $\EvF{A}'$. We  proved  that the  two
  topologies coincide.
\end{proof}

\section{Interpretation of proofs in $\LF$.}
\label{annex:interp}

Hypocontinuity can be generalised to multilinear functions:
\begin{definition}\label{def:hypo}
  Let  $(A_i)_{i\leq   n}$  be  a  finite   collection  of  relational
  finiteness spaces.   An $n$-linear form  $\phi:\times_i \EvF{A_i}\to
  \Field$ is  \vip{hypocontinuous} if  for any $(K_i)$  collection of
  linearly compact  subspaces of $\EvF{A_i}$s  (respectively), for any
  $i_0$  there exists  a fundamental  linear neighbourhood  $ U_{i_0}$
  such that  $\phi(\times A_i)=0$ where  $A_i=K_i$ if $i\neq  i_0$ and
  $A_{i_0}=U_{i_0}$.
\end{definition}

Any proof of the sequent  $\vdash\Gamma$ of formula of MELL+MIX+SUM is
interpreted  as  a  continuous   linear  form  on  $\sem{\Gamma}$.  If
$\Gamma=A_1,\dots,A_n$ then a proof of the sequent can be expressed as
a  hypocontinuous  $n$-linear  on  $\times_i\sem{A_i}'$.  Finally,  if
$\Gamma=\Gamma_1,A$, then a proof of this sequent can be equivalently
 described  as a continuous linear  function from $\sem{\Gamma_1}$
to  $\sem{A}$.  We freely  use  these  different  presentations in  to
describe proofs.

The  interpretation  of  proofs   of  MELL+MIX+SUM  are  described  on
Fig.~\ref{fig:proofs1}-~\ref{fig:proofs}  and   is  similar  with  the
presentation of~\cite{lafont:gsl} except for exponentials.


\begin{figure}[ht!]
  \caption{Interpretation of proofs of MELL+MIX+SUM in $\LF$}\label{fig:proofs1}
  \centering 
  \subfigure{
    \begin{minipage}[c]{.4\linewidth}
      \centering \vip{A proof $\pi$}
    \end{minipage}\vline\hskip.5em
    \begin{minipage}[l]{.59\linewidth}
      \centering\vip{ Its interpretation}
    \end{minipage}
  }   \\\hrule
  \subfigure{
    \begin{minipage}[c]{.4\linewidth}
      \begin{prooftree}
        \AxiomC{}  \RightLabel{\small AX}  \UnaryInfC{$\jug  A,\, \ort
          A$}
      \end{prooftree}
    \end{minipage}
    \vline\hskip.5em
    \begin{minipage}[l]{.59\linewidth}
      $\begin{array}{rccccl}
        \sem\pi: &\ \sem A'&\times& \sem A&\to& \Field\\
        &x'&,&x&\mapsto& \<x',x\>
      \end{array}$
    \end{minipage}
  }
  \hrule \subfigure{
    \begin{minipage}[c]{.4\linewidth}
      \begin{prooftree}
        \AxiomC{$\rho_1$}  \dottedLine \UnaryInfC{$\jug  \Gamma_1, A$}
        \dottedLine \AxiomC{$\rho_2$}  \UnaryInfC{$\jug \Gamma_2, \ort
          A$}      \RightLabel{\small      CUT}      \BinaryInfC{$\jug
          \Gamma_1,\Gamma_2$}
      \end{prooftree}
    \end{minipage}\vline\hskip.5em
    \begin{minipage}[c]{.59\linewidth}
      $\begin{array}{ccccccl}
        \multicolumn{7}{l}{\sem{\rho_1}\in\sem{\ort\Gamma_1\multimap
            A}\qquad\sem{\rho_2}\in \sem{\ort \Gamma_2\multimap \ort
            A}}\\\\
        \sem{\pi}&:&\sem{\Gamma_1}'&\times &\sem{\Gamma_2}'&\to &\Field\\
        &&\gamma_1'&,&\gamma_2'&\mapsto&
        \<\rho_2(\gamma_2'),\rho_1(\gamma_1')\>
      \end{array}
      $
    \end{minipage}}
  \\\hrule \subfigure{
    \begin{minipage}[c]{.4\linewidth}
      \begin{prooftree}
        \AxiomC{$\rho_i$}   \dottedLine
        \UnaryInfC{$\jug   \Gamma$}   \RightLabel{$\sum\limits_{i=1}^n
          a_i=1$} 
        \UnaryInfC{$\jug \Gamma$}
      \end{prooftree}
    \end{minipage}\vline\hskip.5em
     \begin{minipage}[l]{.59\linewidth}
       $\begin{array}{rccl}
         \multicolumn{4}{l}{\sem{\rho_i}\in\sem{\Gamma}'\qquad
         \sum\limits_{i=1}^na_i=1}\\\\
         \sem{\pi}:&\sem{\Gamma}&\to&\Field\\
         &\gamma&\mapsto& \sum\limits_{i=1}^n a_i\,\<\sem{\rho_i},\gamma\>
       \end{array}$
     \end{minipage}
  }
  \\\hrule \subfigure{
    \begin{minipage}[c]{.4\linewidth}
      \begin{prooftree}
        \AxiomC{}         \RightLabel{\small         $\text{MIX}_{0}$}
        \UnaryInfC{$\jug \bot$}
      \end{prooftree}
    \end{minipage}\vline\hskip.5em
    \begin{minipage}[l]{.59\linewidth}
      $\begin{array}{rccl}
        \sem\pi: &\ \Field&\to& \Field\\
        &a&\mapsto& a
      \end{array}$
    \end{minipage}} \\\hrule \subfigure{
    \begin{minipage}[c]{.4\linewidth}
      \begin{prooftree}
        \AxiomC{$\rho_1$}   \dottedLine   \UnaryInfC{$\jug  \Gamma_1$}
        \dottedLine    \AxiomC{$\rho_2$}    \UnaryInfC{$\jug\Gamma_2$}
        \RightLabel{\small MIX} \BinaryInfC{$\jug \Gamma_1, \Gamma_2$}
      \end{prooftree}
    \end{minipage}\vline\hskip.5em
    \begin{minipage}[l]{.59\linewidth}
      $\begin{array}{rccccl}
        \multicolumn{6}{c}{\sem{\rho_1}\in\sem{\Gamma_1}'
          \qquad\sem{\rho_2}\in \sem{\Gamma_2}'}\\\\
        \sem\pi: &\ \sem{\Gamma_1}'&\times&\sem{\Gamma_2}'&\to& \Field\\
        &\gamma_1'&,&\gamma_2'&\mapsto&            \sem{\rho_1}(\gamma_1')\,
        \sem{\rho_2}(\gamma_2')
      \end{array}$
    \end{minipage}}
\\\hrule
\subfigure{
    \begin{minipage}[c]{.4\linewidth}
      \begin{prooftree}
        \AxiomC{$\rho$}        \dottedLine        \UnaryInfC{$\Gamma$}
        \RightLabel{$\bot$} \UnaryInfC{$\jug\Gamma,\bot $}
      \end{prooftree}
    \end{minipage}\vline\hskip.5em
    \begin{minipage}[l]{.59\linewidth}
      $\begin{array}{rccccl}
        \multicolumn{6}{l}{\sem{\rho}\in\sem{\Gamma}'}\\\\
        \sem\pi: &\ \sem{\Gamma}'&\times&\Field&\to& \Field\\
        &\gamma&,&a&\mapsto& \sem\rho(\gamma')
      \end{array}$
    \end{minipage}}
\end{figure} 
\begin{figure}[ht!]
 \caption{Interpretation of proofs of MELL+MIX+SUM in $\LF$}\label{fig:proofs}
  \centering
 \subfigure{
    \begin{minipage}[c]{.4\linewidth}
      \begin{prooftree}
        \AxiomC{$\rho$}   \dottedLine   \UnaryInfC{$\jug  \Gamma,A,B$}
        \RightLabel{$\parr$} \UnaryInfC{$\jug\Gamma,A\parr B$}
      \end{prooftree}
    \end{minipage}\vline\hskip.5em
    \begin{minipage}[l]{.59\linewidth}
      $\begin{array}{rccccl}
        \multicolumn{6}{l}{\sem{\rho}\in\sem{\ort\Gamma\multimap
            \left({A}\parr{B}\right)}} \\\\
        \sem\pi:&\sem{\Gamma}'&\times& (\sem{A}\lparr\sem{B})'&\to&\Field\\
        &\ \gamma'&,&\phi& \mapsto& \<\phi,\sem\rho(\gamma')\>
      \end{array}$
    \end{minipage}}
  \\\hrule \subfigure{
    \begin{minipage}[c]{.4\linewidth}
      \begin{prooftree}
        \AxiomC{} \RightLabel{$1$} \UnaryInfC{$\jug 1$}
      \end{prooftree}
    \end{minipage}\vline\hskip.5em
    \begin{minipage}[l]{.59\linewidth}
      $\begin{array}{rccl}
        \sem\pi: &\ \Field&\to& \Field\\
        &a&\mapsto& a
      \end{array}$
    \end{minipage}} \\\hrule \subfigure{
    \begin{minipage}[c]{.4\linewidth}
      \begin{prooftree}
        \AxiomC{$\rho_1$}  \dottedLine \UnaryInfC{$\jug \Gamma_1,A_1$}
        \dottedLine  \AxiomC{$\rho_2$}  \UnaryInfC{$\jug\Gamma_2,A_2$}
        \RightLabel{$\otimes$}  \BinaryInfC{$\jug  \Gamma_1, \Gamma_2,
          A\otimes B$}
      \end{prooftree}
    \end{minipage}\vline\hskip.5em
    \begin{minipage}[l]{.59\linewidth}
      $\begin{array}{rccccccl}
        \multicolumn{8}{l}{\sem{\rho_1}\in\sem{\ort\Gamma_1\multimap
            A_1}
          \qquad\sem{\rho_2}\in \sem{\ort\Gamma_2\multimap A_2}}\\\\
        \sem\pi:&\ \sem{\Gamma_1}'&\times&
        \sem{\Gamma_2}'&\times &(\sem A'\lparr\sem B)&\to&\Field\\
        &\   \gamma_1'&,&\gamma'_2&,&\multicolumn{3}{c}{\phi   \mapsto
          \phi(\sem{\rho_1}(\gamma'_1),\sem{\rho_2}(\gamma'_2)) }
      \end{array}$
    \end{minipage}}
  \\\hrule \subfigure{
    \begin{minipage}[c]{.4\linewidth}
      \begin{prooftree}
        \AxiomC{} \RightLabel{$\top$} \UnaryInfC{$\jug \Gamma, \top$}
      \end{prooftree}
    \end{minipage}\vline\hskip.5em
    \begin{minipage}[l]{.59\linewidth}
      $\begin{array}{rccccl}
        \sem\pi: &\ \sem{\Gamma}'&\times& \{0\}&\to& \Field\\
        &\gamma'&,&0&\mapsto& 0
      \end{array}$
    \end{minipage}} \\\hrule \subfigure{
    \begin{minipage}[c]{.4\linewidth}
      \begin{prooftree}
        \AxiomC{$\rho_1$}  \dottedLine \UnaryInfC{$\jug  \Gamma, A_1$}
        \AxiomC{$\rho_2$}  \dottedLine \UnaryInfC{$\jug  \Gamma, A_2$}
        \RightLabel{$\with$} \BinaryInfC{$\jug \Gamma, A_1\with A_2$}
      \end{prooftree}
    \end{minipage}\vline\hskip.5em
    \begin{minipage}[l]{.59\linewidth}
      $\begin{array}{rccccl}
        \multicolumn{6}{l}{\sem{\rho_1}\in\sem{\ort\Gamma\multimap
            A_1}
          \qquad\sem{\rho_2}\in \sem{\ort\Gamma\multimap A_2}}\\\\
        \sem\pi:&\ \sem{\Gamma}'&\times& (\sem{A_1}'\oplus \sem{A_2}')&\to&\Field\\
        &     \multicolumn{5}{l}{   \gamma',x_1'+x_2'        \mapsto
          \<x'_1,\sem{\rho_1}(\gamma')\>+\<x'_2,\sem{\rho_2}(\gamma')\>}
      \end{array}$
    \end{minipage}} \\\hrule \subfigure{
    \begin{minipage}[c]{.4\linewidth}
      \begin{prooftree}
        \AxiomC{$\rho$}  \dottedLine  \UnaryInfC{$\jug  \Gamma,  A_1$}
        \RightLabel{$\oplus_g$} \UnaryInfC{$\jug \Gamma, A_1\oplus A_2$}
      \end{prooftree}
    \end{minipage}\vline\hskip.5em
     \begin{minipage}[l]{.59\linewidth}
      $\begin{array}{rccccl}
        \multicolumn{6}{l}{\sem{\rho}\in\sem{\ort\Gamma\multimap
            A_1}}\\\\
        \sem{\pi}:&\sem{\Gamma}'&\times& (\sem{A_1}'\times\sem{A_2}')\\
        &\gamma&,&(x'_1,x'_2)&\mapsto& \<x'_1,\sem\rho(\gamma)\>
       \end{array}$
     \end{minipage}
  }
  \\\hrule
  \subfigure{
    \begin{minipage}[c]{.4\linewidth}
      \begin{prooftree}
        \AxiomC{$\rho$} \dottedLine \UnaryInfC{$\jug \Gamma$}
        \RightLabel{Weak} \UnaryInfC{$\jug \Gamma,\wn\ort A$}
      \end{prooftree}
    \end{minipage}\vline\hskip.5em
     \begin{minipage}[l]{.59\linewidth}
       $\begin{array}{rccccl}
         \multicolumn{6}{l}{\sem{\rho}\in\sem{\Gamma}'}\\\\
         \sem{\pi}:&\sem{\Gamma}&\times&\sem{\oc A}&\to&\Field\\
         &\gamma&,&d&\mapsto& \sem\rho(\gamma)\ \mathbbm 1_{d=e_{[]}}
       \end{array}$
     \end{minipage}
  }
  \\\hrule \subfigure{
    \begin{minipage}[c]{.4\linewidth}
      \begin{prooftree}
        \AxiomC{$\rho$}  \dottedLine \UnaryInfC{$\jug  \Gamma, \wn\ort
          A,\wn\ort A$}
        \RightLabel{contr}   \UnaryInfC{$\jug
          \Gamma,\wn \ort A$}
      \end{prooftree}
    \end{minipage}\vline\hskip.5em
     \begin{minipage}[l]{.59\linewidth}
       $\begin{array}{rccccl}
         \multicolumn{6}{l}{\sem{\rho}\in\sem{\Gamma}\lparr\sem{\oc A}\lparr\sem{\oc
             A}}\\\\
         \sem{\pi}:&\sem{\Gamma}'&\times&\sem{\oc A}&\to&\Field\\
         &\gamma'&,&\multicolumn{3}{l}{d\mapsto\sum\limits_{d_1,d_2}
         \sem\rho(\gamma',d_1,d_2)\ \mathbbm 1_{d=d_1\ltens d_2}}
       \end{array}$
     \end{minipage}
  }
   \\\hrule \subfigure{
    \begin{minipage}[c]{.4\linewidth}
      \begin{prooftree}
        \AxiomC{$\rho$} \dottedLine  \UnaryInfC{$\jug \Gamma,\ort  A$}
        \RightLabel{Der} \UnaryInfC{$\jug \Gamma,\wn \ort A$}
      \end{prooftree}
    \end{minipage}\vline\hskip.5em
     \begin{minipage}[l]{.59\linewidth}
       $\begin{array}{rccccl}
         \multicolumn{6}{l}{\sem{\rho}\in\sem{\Gamma}\lparr\sem{A}}\\\\
         \sem{\pi}:&\sem{\Gamma}'&\times&\sem{\oc A}&\to&\Field\\
         &\gamma'&,&X&\mapsto& \sum\limits_{\#\mu=1}\sem\rho(\gamma',X_\mu)
       \end{array}$
     \end{minipage}
  }
   \\\hrule \subfigure{
    \begin{minipage}[c]{.4\linewidth}
      \begin{prooftree}
        \AxiomC{$\rho$}     \dottedLine    \UnaryInfC{$\jug    \wn\ort
          A_1,\dots,\wn\ort A_n,\, B$}
        \RightLabel{Prom}   \UnaryInfC{$\jug
         \wn\ort A_1,\dots,\wn\ort A_n,\oc B$}
      \end{prooftree}
    \end{minipage}\vline\hskip.5em
     \begin{minipage}[l]{.59\linewidth}
       $\begin{array}{rccl}   \multicolumn{4}{l}{\sem{\rho}\in\Poln  n
           {\sem{A_1} \times \dots
             \times \sem{A_n};\sem{A}}}\\\\
         \sem{\pi}:&\Poln n {\sem{A_1} \times \dots
           \times \sem{A_n};\sem{\oc A}}\\
         &\multicolumn{3}{l}{x_1,\dots,x_n\mapsto
           \exp{\left(\sem{\rho}(x_1,\dots,x_n)\right)}}
       \end{array}$
     \end{minipage}
  }
\end{figure}

\clearpage

\section{Totality candidate characterisation.}\label{annex:totalite}

\setcounter{theorem}{\value{aff}}
\begin{corollary}\label{ann:aff}
  Let $\T$  be a subset  of $\EvF A$.  If  $\pol\T\neq\emptyset$, then
  $\dpol\T=\caff(\T)$.
\end{corollary}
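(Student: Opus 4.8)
The plan is to establish the chain $\T\subseteq\caff(\T)\subseteq\dpol\T$ and then to promote it to an equality by observing that $\caff(\T)$ is itself polar-closed. The only substantial inputs are the classification of polar-closed sets (Prop.~\ref{prop:charac}) and the purely formal fact that $\pol{(-)}$ is an antitone Galois duality induced by the relation $\perp^\bullet$: thus $S\subseteq\dpol S$, the triple polar collapses ($\pol{(\dpol S)}=\pol S$), and $\dpol{(-)}$ is a monotone idempotent closure operator. I would first dispose of the degenerate case $\T=\emptyset$, where $\pol\T=\EvF A'$ and hence $\dpol\T=\pol{\EvF A'}=\emptyset=\caff(\emptyset)$, so nothing is to prove; from here on $\T\neq\emptyset$.

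For the forward inclusion, $\T\subseteq\caff(\T)$ is immediate, and I would show $\caff(\T)\subseteq\dpol\T$ by arguing that $\dpol\T$ is in fact a closed affine subspace. Being a double polar, $\dpol\T$ is polar-closed, so by Prop.~\ref{prop:charac} it is $\emptyset$, the whole space $\EvF A$, or a closed affine subspace not containing $0$. It is nonempty since $\T\subseteq\dpol\T$ and $\T\neq\emptyset$; and it is not $\EvF A$, for $\dpol\T=\EvF A$ would force $\pol\T=\pol{(\dpol\T)}=\pol{\EvF A}=\emptyset$, contradicting the hypothesis $\pol\T\neq\emptyset$. Hence $\dpol\T$ is a closed affine subspace containing $\T$, and since $\caff(\T)$ is the smallest such set, $\caff(\T)\subseteq\dpol\T$.

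The decisive step is to certify that $\caff(\T)$ is polar-closed. Choosing any $x'\in\pol\T$ (possible by hypothesis), linearity gives $\<x',x\>=1$ on the affine hull $\aff(\T)$, and continuity of $x'$ extends this to its closure $\caff(\T)$; since $\<x',0\>=0$, this shows $0\notin\caff(\T)$. Thus $\caff(\T)$ is a nonempty closed affine subspace avoiding $0$, so Prop.~\ref{prop:charac} gives $\dpol{(\caff(\T))}=\caff(\T)$. Now applying $\pol{(-)}$ twice to $\T\subseteq\caff(\T)$ and using antitonicity yields $\dpol\T\subseteq\dpol{(\caff(\T))}=\caff(\T)$, which together with the forward inclusion gives the desired equality $\dpol\T=\caff(\T)$.

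I expect the main obstacle to be the bookkeeping around the two applications of Prop.~\ref{prop:charac}: pinning $\dpol\T$ into the "closed affine" alternative by simultaneously ruling out the whole space (via $\pol\T\neq\emptyset$) and the empty set (via $\T\neq\emptyset$), and dually verifying $0\notin\caff(\T)$ so that $\caff(\T)$ too lands in that alternative. Once these membership facts are in hand, everything else is formal manipulation of the antitone polar and its idempotent closure $\dpol{(-)}$.
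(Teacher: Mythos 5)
Your proof is correct and follows essentially the same route as the paper's: establish the sandwich $\T\subseteq\caff(\T)\subseteq\dpol\T$, use a witness $x'\in\pol\T$ to show $0\notin\caff(\T)$ so that Prop.~\ref{prop:charac} certifies $\dpol{(\caff(\T))}=\caff(\T)$, and close the loop by formal properties of the antitone polar. The only differences are cosmetic (explicitly treating $\T=\emptyset$, deriving that $\dpol\T$ is closed affine via the characterisation rather than directly, and finishing via monotonicity of the double polar instead of the triple-polar collapse).
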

\begin{proof}
  Since $\T\subseteq\dpol\T$  and $\dpol\T$  is affine close,  we have
  $\T\subseteq\caff(\T)\subseteq\dpol\T$.                 Consequently,
  $\pol\T\subseteq\pol{[\,\caff(\T)\,]}      \subseteq\pol\T$      and
  $\dpol{[\caff(\T)]}= \dpol{\T}$.  Moreover, if $\pol\T\neq\emptyset$
  then  there   is  $x'\in\EvF  A'$   such  that  for   any  $x\in\T$,
  $\<x',x\>=1$,  and  so for  any  $x\in\caff(\T)$, $\<x',x\>=1$.   We
  infer  that $0\notin  \caff(\T)$.  Thanks  to  the characterisation
  Lem.~\ref{prop:charac},          $\dpol{[\,\caff(\T)\,]}=\caff(\T)$.
  Finally, $\caff(\T)=\dpol\T$.
\end{proof}

\setcounter{theorem}{\value{mult}}
\begin{proposition}\label{ann:mult}
  \begin{eqnarray}
    \T(A\otimes       B)&=&\caff(\T(A)\otimes\T(B)),\nonumber\\
    \T(A\multimap B) &=& \left\{f\in\EvF A\st f(\T(A))\in\T(B)\right\}.\label{eq:fun}
  \end{eqnarray}
\end{proposition}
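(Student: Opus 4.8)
The plan is to unfold the two definitions inherited from the double glueing, namely $\T(A\otimes B)=\dpol{[\T(A)\otimes\T(B)]}$ and $\T(A\multimap B)=\pol{[\T(A)\otimes\pol{\T(B)}]}$, and to reduce each equality to the characterisation of polar-closed sets (Cor.~\ref{cor:aff} and Prop.~\ref{prop:charac}).

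I would settle the function equation~\eqref{eq:fun} first, since it is a direct computation needing no appeal to Cor.~\ref{cor:aff}. Through the canonical homeomorphism of Fig.~\ref{fig:LF}, an element $f$ of $\EvF{A\multimap B}=\Lc{\EvF A}{\EvF B}$ is a continuous linear map $\EvF A\to\EvF B$, and it pairs with a simple tensor $x\otimes y'\in\EvF{A\otimes\ort B}$ by $\<f,x\otimes y'\>=\<y',f(x)\>$. Hence the membership $f\in\pol{[\T(A)\otimes\pol{\T(B)}]}$ amounts to $\<y',f(x)\>=1$ for all $x\in\T(A)$ and all $y'\in\pol{\T(B)}$. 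Fixing $x\in\T(A)$, this is precisely the statement $f(x)\in\dpol{\T(B)}$, by the definition of the dual polar recalled before Prop.~\ref{prop:charac}; and since $\T(B)$ is a totality candidate it is polar-closed, so $\dpol{\T(B)}=\T(B)$. Quantifying over $x\in\T(A)$ gives $\T(A\multimap B)=\{f\st f(\T(A))\subseteq\T(B)\}$, which is~\eqref{eq:fun}.

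For the tensor equality I would follow the route announced in the body: exhibit a point of $\pol{[\T(A)\otimes\T(B)]}$ and then invoke Cor.~\ref{cor:aff}. In the generic case where $\T(A)$ and $\T(B)$ are non-empty closed affine subspaces missing $0$, Prop.~\ref{prop:charac} guarantees that $\pol{\T(A)}$ and $\pol{\T(B)}$ are non-empty; choosing $x'_0\in\pol{\T(A)}$ and $y'_0\in\pol{\T(B)}$, the product form $x'_0\otimes y'_0$ is continuous and hence hypocontinuous, so it genuinely lies in $\EvF{A\otimes B}'=\EvF{\ort A\parr\ort B}$, and one computes $\<x'_0\otimes y'_0,x\otimes y\>=\<x'_0,x\>\<y'_0,y\>=1$ for every $x\in\T(A)$, $y\in\T(B)$. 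Thus $\pol{[\T(A)\otimes\T(B)]}\neq\emptyset$, and Cor.~\ref{cor:aff} yields $\T(A\otimes B)=\dpol{[\T(A)\otimes\T(B)]}=\caff(\T(A)\otimes\T(B))$.

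The remaining work, and the part I expect to be the only real subtlety, is twofold. First, the degenerate cases must be checked by hand against Prop.~\ref{prop:charac}: when $\T(A)$ or $\T(B)$ is empty both sides collapse to $\emptyset$, and when one candidate equals the whole space (so its polar is empty and $0$ belongs to the tensor set) one verifies directly that both sides reduce to $\EvF{A\otimes B}$. Second, one must be scrupulous about the pairing conventions, making sure that the identification $\<f,x\otimes y'\>=\<y',f(x)\>$ is exactly the one induced by the interpretations of $\parr$ and $\otimes$ in Fig.~\ref{fig:LF}, and that $x'_0\otimes y'_0$ really lands in the space of hypocontinuous forms rather than merely in the algebraic dual. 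Once these identifications are fixed, both equalities are immediate consequences of the polar-closedness of the candidates together with Cor.~\ref{cor:aff}.
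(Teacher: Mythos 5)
Your proof is correct and follows essentially the same route as the paper's: the same witness $x'_0\otimes y'_0\in\pol{[\T(A)\otimes\T(B)]}$ followed by Cor.~\ref{cor:aff} for the tensor equality, and the same unfolding of $\pol{[\T(A)\otimes\pol{\T(B)}]}$ via $\<y',f(x)\>=1$ and $\dpol{\T(B)}=\T(B)$ for Eq.~\eqref{eq:fun}. Your added care with the degenerate cases and the pairing conventions goes slightly beyond what the paper spells out, but does not change the argument.
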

\begin{proof}
  If   $x'\in\pol{\T(A)}$  and  $y'\in\pol{\T(B)}$,   then  $x'\otimes
  y':(x,y)\to  \<x',x\>\<y',y\>$  is  in  $\pol{[\T(A)\otimes\T(B)]}$.
  Thanks to Cor.~\ref{cor:aff},
  $\caff(\T(A)\otimes\T(B))=\dpol{[\T(A)\otimes\T(B)]}$.

  The     second    equation     comes    from     the    equivalences:\\
  $\begin{array}{rcl}
    f\in\pol{[\T(A)\otimes\pol{\T(B)}]}&\iff&                     \forall
    x\in\T(A),\,\forall y'\in\pol{\T(B)},\,\<f(x),y'\>=1,\\
    &\iff&   \forall
    x\in\T(B),\,f(x)\in\T(B).
  \end{array}$
\end{proof}

\begin{proposition}
  \begin{equation*}
    \T(A\oplus B)=\caff{(\T(A)\times\ker(\pol{\T(B)})\cup\ker(\pol{\T(A)})\times \T(B))}
  \end{equation*}
\end{proposition}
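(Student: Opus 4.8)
The plan is to realise $\T(A\oplus B)$ as the double polar of the generating set
\begin{equation*}
  S=\pars{\T(A)\times\ker(\pol{\T(B)})}\cup\pars{\ker(\pol{\T(A)})\times\T(B)}
\end{equation*}
and then to invoke Cor.~\ref{cor:aff}. Since $A\oplus B=\ort{(\ort A\with\ort B)}$, the inherited constructions $\T(\ort C)=\pol{\T(C)}$ and $\T(C\with D)=\T(C)\times\T(D)$ give $\T(A\oplus B)=\pol{[\pol{\T(A)}\times\pol{\T(B)}]}$, where a point $(x,y)\in\EvF A\oplus\EvF B$ pairs with $(x',y')\in\EvF A'\times\EvF B'=\EvF{A\oplus B}'$ through $\<(x',y'),(x,y)\>=\<x',x\>+\<y',y\>$. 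I work in the principal case where $\T(A)$ and $\T(B)$ are proper closed affine subspaces avoiding $0$; by Prop.~\ref{prop:charac} and separation (Prop.~\ref{lem:separation}) this ensures $\pol{\T(A)}\neq\emptyset$ and $\pol{\T(B)}\neq\emptyset$. This is the only situation occurring when interpreting formulas (for instance $\bool=1\oplus1$, where $\T(1)=\{1\}$ is a proper candidate).

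The heart of the argument is the identity $\pol S=\pol{\T(A)}\times\pol{\T(B)}$. Write $N_A=\ker(\pol{\T(A)})$ and $N_B=\ker(\pol{\T(B)})$; these are \emph{closed linear subspaces}. A pair $(x',y')$ lies in $\pol S$ iff $\<x',x\>+\<y',y\>=1$ holds for all $(x,y)\in\T(A)\times N_B$ and for all $(x,y)\in N_A\times\T(B)$. In the first family, setting $y=0\in N_B$ forces $\<x',x\>=1$ for every $x\in\T(A)$, i.e. $x'\in\pol{\T(A)}$; then for general $y\in N_B$ one gets $1+\<y',y\>=1$, i.e. $y'\in\ann(N_B)$. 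Symmetrically the second family yields $y'\in\pol{\T(B)}$ and $x'\in\ann(N_A)$. Thus $\pol S=\pars{\pol{\T(A)}\cap\ann(N_A)}\times\pars{\pol{\T(B)}\cap\ann(N_B)}$. But $\pol{\T(A)}\subseteq\ann(N_A)$ holds tautologically, since every form in $\pol{\T(A)}$ annihilates its own common kernel $N_A=\ker(\pol{\T(A)})$, and likewise $\pol{\T(B)}\subseteq\ann(N_B)$. This gives exactly $\pol S=\pol{\T(A)}\times\pol{\T(B)}$.

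It then suffices to chain the equalities. In the principal case $\pol S=\pol{\T(A)}\times\pol{\T(B)}\neq\emptyset$, so Cor.~\ref{cor:aff} applies to $S$ and yields $\dpol S=\caff(S)$. On the other hand, taking polars in the identity just established,
\begin{equation*}
  \dpol S=\pol{[\pol S]}=\pol{[\pol{\T(A)}\times\pol{\T(B)}]}=\T(A\oplus B).
\end{equation*}
Combining the two gives $\T(A\oplus B)=\caff(S)$, which is the claim.

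The step I expect to be delicate is the computation of $\pol S$: one must verify that the two defining families of $S$ interact correctly, crucially using that $N_A$ and $N_B$ are \emph{linear} subspaces — so that a pairing cannot remain equal to a constant with nonzero offset along them — rather than mere affine sets. Phrasing the generating set through $\ker(\pol{\cdot})$ rather than through directions $\ldir(\cdot)$ is exactly what makes the inclusion $\pol{\T(A)}\subseteq\ann(\ker(\pol{\T(A)}))$ immediate, and keeping within the principal case is what secures $\pol S\neq\emptyset$ so that Cor.~\ref{cor:aff} can be applied.
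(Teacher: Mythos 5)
Your proof is correct (under the same tacit assumption as the paper, namely that $\pol{\T(A)}$ and $\pol{\T(B)}$ are nonempty, which excludes only the degenerate candidates $\EvF A$ and $\EvF B$), but it takes a genuinely different route from the paper's. The paper argues by direct double inclusion: the inclusion $\caff(S)\subseteq\T(A\oplus B)$ is immediate, and for the converse it fixes $u'_0\in\pol{\T(A)}$, $v'_0\in\pol{\T(B)}$ and, given $(x,y)\in\T(A\oplus B)$, exhibits it explicitly as the barycentric combination $\<u'_0,x\>\,(x/\<u'_0,x\>,0)+\<v'_0,y\>\,(0,y/\<v'_0,y\>)$ of two generators, with a separate easy case when one of the two coefficients vanishes. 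You instead compute the polar of the generating set $S$ and identify it with $\pol{\T(A)}\times\pol{\T(B)}$ --- the computation is sound, and your remark that $\pol{\T(A)}\subseteq\ann(\ker(\pol{\T(A)}))$ holds tautologically is exactly what reconciles the constraints coming from the two families of generators --- and then let Cor.~\ref{cor:aff} convert $\dpol S$ into $\caff(S)$. Your version is closer to the strategy announced in the main text for the whole proposition (show that the relevant polar is nonempty and apply Cor.~\ref{cor:aff}), is uniform across the connectives, and avoids any case analysis; the paper's version is more concrete, since its explicit decomposition shows that every element of $\T(A\oplus B)$ is a plain affine combination of just two generators, so that the topological closure in $\caff$ is in fact superfluous in the hard inclusion --- information your more abstract argument does not recover.
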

\begin{proof}
  By  construction,\\$\T(A  \oplus  B)=\{(x,  y) \mid\  \forall  u'\in
  \pol{\T(A)},\    v'\in\pol{\T(B)},\   \<x,u'\>+\<y,v'\>=1\}$.\\   So
  $\caff(\T(A)\times
  \ker(\pol{\T(B)})\cup\ker(\pol{\T(A)})\times\T(B))\subseteq
  \T(A\oplus  B)$.   Reciprocally, let  $z=(x,  y)\in \T(A\oplus  B)$,
  $u'_0\in      \pol{\T(A)}$      and      $v'_0\in      \pol{\T(B)}$,
  $D(A)=\ldir(\pol{\T(A)})$   and   $D(B)=\ldir(\pol{\T(B)})$.    Then
  $\pol{\T(A)}=u'_0+D(A)$ and  $\pol{\T(B)}=v'_0+D(B)$.  Therefore for
  all  $d_x'\in  D(A),\,\<d_x',x\>  =  0$;  for  all  $d_y'\in  D(B)$,
  $\<d_y',y\>   =  0  $   and  $\<u'_0,x\>   +\<v'_0,y\>  =   1$.   If
  $\<v'_0,y\>=0$    then   $y\in\ker(D(B))$   and    $x\in\T(A)$,   so
  $z=(x,y)\in\T(A)\times    \ker(\pol{\T(A)})$    (respectively,    if
  $\<u'_0,x\>=0$,    then   $z\in\ker(\pol{\T(B)})\times\T(B)$).    If
  $\<u'_0,x\>\neq     0$      and     $\<v'_0,y\>\neq     0$,     then
  $z=\<u'_0,x\>(\frac{x}{\<u'_0,x\>},0)+\<v'_0,y\>(0,
  \frac{y}{\<v'_0,y\>})$.                                            So
  $z=(x,y)\in\caff(\T(A)\times\ker(\pol{\T(B)})\cup\ker(\pol{\T(A)})\times
  \T(B))$.
\end{proof}

\begin{proposition}\label{ann:exp}
  \begin{eqnarray}
    \T(\oc A)&=&\caff(\exp x\st x\in\T(A))\label{eq:exp},\\
    \T(\wn    A)&=&\left\{F\in\Polc{\EvF    A}\st   \forall
      x\in\T(A),\,F(x)=1\right\}\label{eq:wn},\\
    \T(\oc A\multimap B)&=&\left\{F\in\Polc{\EvF A,B}\st \forall
      x\in\T(A),\,F(x)\in\T(B)\right\}\label{eq:kleisli}.
  \end{eqnarray}
\end{proposition}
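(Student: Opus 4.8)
The plan is to derive all three equations from Corollary~\ref{cor:aff} together with the identity $\pol{\caff(S)}=\pol S$ (read off from the proof of that corollary), exactly as in Proposition~\ref{ann:mult}; the only genuine work is to exhibit, in each case, a witness showing that the relevant polar is nonempty. For \eqref{eq:exp}, recall from the construction of the exponential that $\T(\oc A)=\dpol{\{\exp x\st x\in\T(A)\}}$, so it suffices to prove $\pol{\{\exp x\st x\in\T(A)\}}\neq\emptyset$ and then apply Corollary~\ref{cor:aff}. If $\T(A)=\emptyset$ both sides reduce to $\emptyset$ by Proposition~\ref{prop:charac}; otherwise, recall that the pairing between $\EvF{\oc A}$ and its dual $\EvF{\oc A}'=\Polc{\EvF A}$ is evaluation, $\langle\exp x,F\rangle=F(x)$. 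The constant analytic function $F=1$ lies in $\Polc{\EvF A}$ and gives $\langle\exp x,1\rangle=1$ for every $x$, so $1\in\pol{\{\exp x\st x\in\T(A)\}}$ and the polar is nonempty; Corollary~\ref{cor:aff} then yields $\T(\oc A)=\caff(\{\exp x\st x\in\T(A)\})$.

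Equation \eqref{eq:wn} (for $\wn\ort A=\ort{(\oc A)}$) follows by polarising \eqref{eq:exp}. The duality rule $\T(\ort{(\cdot)})=\pol{\T(\cdot)}$ gives $\T(\wn\ort A)=\pol{\T(\oc A)}$. Substituting \eqref{eq:exp} and using $\pol{\caff(S)}=\pol S$ reduces this to $\pol{\{\exp x\st x\in\T(A)\}}$; unfolding the polar and rewriting $\langle\exp x,F\rangle=F(x)$ turns it into $\{F\in\Polc{\EvF A}\st\forall x\in\T(A),\,F(x)=1\}$, which is the right-hand side of \eqref{eq:wn}.

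For \eqref{eq:kleisli}, I instantiate \eqref{eq:fun} of Proposition~\ref{ann:mult} with source $\oc A$, obtaining $\T(\oc A\multimap B)=\{f\in\Lc{\EvF{\oc A}}{\EvF B}\st f(\T(\oc A))\subseteq\T(B)\}$. The Kleisli correspondence of the exponential comonad identifies such an $f$ with an analytic $F\in\Polc{\EvF A,B}$ via $F=f\circ\kappa$, that is $F(x)=f(\exp x)$. By \eqref{eq:exp} one has $\T(\oc A)=\caff(\{\exp x\st x\in\T(A)\})$; since $f$ is continuous and affine and $\T(B)$ is closed and affine, the inclusion $f(\T(\oc A))\subseteq\T(B)$ is equivalent to $F(x)\in\T(B)$ for all $x\in\T(A)$. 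Indeed the forward direction is immediate, and conversely $f(\caff(S))\subseteq\caff(f(S))\subseteq\caff(\T(B))=\T(B)$. This yields \eqref{eq:kleisli}.

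The main obstacle is guaranteeing that the polar is nonempty before each use of Corollary~\ref{cor:aff}; here it is settled uniformly by the constant function $1$, once $\EvF{\oc A}'$ is identified with $\Polc{\EvF A}$ and its pairing with evaluation at $x$. The remaining delicate point is the affine-closure bookkeeping in \eqref{eq:kleisli}: one must verify that a continuous linear map sends $\caff(S)$ into $\caff(f(S))$, so that totality of $F$, tested only on the generators $\exp x$, propagates to the whole closed affine hull $\T(\oc A)$.
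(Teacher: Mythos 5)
Your proof is correct and follows the paper's argument: the constant function $1$ witnesses that $\pol{\{\exp x\st x\in\T(A)\}}$ is nonempty so that Corollary~\ref{cor:aff} yields \eqref{eq:exp}, and \eqref{eq:kleisli} is obtained by combining \eqref{eq:fun} with \eqref{eq:exp} (you usefully spell out the step $f(\caff(S))\subseteq\caff(f(S))\subseteq\T(B)$ that the paper leaves implicit). The only divergence is \eqref{eq:wn}: the paper derives it from the isomorphism $\wn A\simeq\oc(\ort A)\multimap 1$ together with \eqref{eq:kleisli}, whereas you polarise \eqref{eq:exp} directly using $\pol{\caff(S)}=\pol S$ --- an equivalent and slightly more direct computation.
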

\begin{proof}
  Let $1\in\Polk{\EvF A}$ be  the constant function: $\forall x\in\EvF
  A,\,\<\exp    x,1\>=1$.    We   have    $1\in\pol{\left\{\exp   x\st
      x\in\T(A)\right\}}$. Therefore,  the first equality
  comes from Cor.~\ref{cor:aff}.
  Using~\eqref{eq:fun}  and~\eqref{eq:exp}, we get~\eqref{eq:kleisli}.
  The   equality~\eqref{eq:wn}    comes   from   the    linear   logic
  equivalence:~$\wn A\simeq \oc(\ort A)\multimap 1$.
\end{proof}

\end{document}